\documentclass[a4paper,UKenglish,cleveref, autoref, thm-restate]{lipics-v2021}
\usepackage[pdftex]{xcolor}
\usepackage{amsmath}
\usepackage{url}
\usepackage{amsthm}
\usepackage{multibib}
\usepackage{cleveref}
\usepackage{newtxtext}
\usepackage[varg]{newtxmath}
\usepackage{algorithm, algorithmic}
\usepackage{tikz}
\usetikzlibrary{decorations.pathreplacing}
\usepackage{bm}
\usepackage{makecell}
\usepackage{todonotes}

\usepackage{complexity}

\usetikzlibrary{calc}

\title{Finding Shortest Reconfiguration Sequences on Independent Set Polytopes}%

\author{Jean Cardinal}
{Université libre de Bruxelles (ULB), Brussels, Belgium}
{jean.cardinal@ulb.be}
{https://orcid.org/0000-0002-2312-0967} 
{}

\author{Kevin Mann}
{Universit\"at Trier, Fachbereich IV, Informatikwissenschaften, Germany}
{mann@uni-trier.de}
{https://orcid.org/0000-0002-0880-2513} 
{}

\author{Akira Suzuki}
{Center for Data-driven Science and Artificial Intelligence, Tohoku University, Sendai, Japan}
{akira@tohoku.ac.jp}
{https://orcid.org/0000-0002-5212-0202} 
{}

\author{Takahiro Suzuki}
{Graduate School of Information Sciences, Tohoku University, Sendai, Japan}
{takahiro.suzuki.q4@dc.tohoku.ac.jp}
{https://orcid.org/0009-0005-8433-3789} 
{}

\author{Yuma Tamura}
{Graduate School of Information Sciences, Tohoku University, Sendai, Japan}
{tamura@tohoku.ac.jp}
{0009-0001-5479-7006} 
{}

\author{Xiao Zhou}
{Graduate School of Information Sciences, Tohoku University, Sendai, Japan}
{zhou@tohoku.ac.jp}
{https://orcid.org/0000-0001-5473-7663} 
{}

\authorrunning{J. Cardinal et al.}

\Copyright{Jean Cardinal, Kevin Mann, Akira Suzuki, Takahiro Suzuki, Yuma Tamura and Xiao Zhou} 

\ccsdesc[500]{Theory of computation~Problems, reductions and completeness}
\ccsdesc[500]{Theory of computation~Parameterized complexity and exact algorithms}
\keywords{combinatorial reconfiguration, independent set, combinatorial shortest path, NP-completeness} 

\category{} 

\relatedversion{} 

\nolinenumbers
\EventEditors{Michal Kouck\'{y} and Daniela Petri\c{s}an}
\EventNoEds{2}
\EventLongTitle{51st International Symposium on Mathematical Foundations of Computer Science (MFCS 2026)}
\EventShortTitle{MFCS 2026}
\EventAcronym{MFCS}
\EventYear{2026}
\EventDate{August 24--28, 2026}
\EventLocation{Paris, France}
\EventLogo{}
\SeriesVolume{386}
\ArticleNo{32}
\funding{This work was partially supported by JSPS KAKENHI Grant Numbers {JP25K14980} and JP25K21148.}
\hideLIPIcs

\newcommand{\indG}[1]{G[#1]}

\newcommand{\intset}[1]{[#1]}
\newcommand{\intsets}[2]{[#1,#2]}
\newcommand{\numcomp}{\mathrm{cc}}

\newcommand{\flipC}{L_{C}}
\newcommand{\flipr}{L_{r}}
\crefname{paragraph}{Case}{Cases}
\newcommand{\prb}[1]{\textnormal{\scshape #1}}
\newcommand{\TRUE}{T}
\newcommand{\FALSE}{F}
\newcommand{\rev}[1]{#1}
\newcommand{\revrev}[1]{#1}

\begin{document}

\maketitle

\begin{abstract}
We initiate the study of the shortest reconfiguration problem for independent sets under the adjacency relation derived from the independent set polytope.
Given a graph and two independent sets, the problem asks for a shortest sequence transforming one into the other such that the subgraph induced by the symmetric difference of any two consecutive sets is connected.
This is equivalent to finding a shortest path on the $1$-skeleton of the independent set polytope.
We prove that the problem is \NP-hard even on planar graphs of bounded degree, as well as on split graphs.
Notably, the hardness for planar graphs of bounded degree still holds even when deciding whether the target can be reached in at most two steps.
For split graphs, we further show the \W[2]-hardness when parameterized by the number of steps, as well as the inapproximability of the optimal length.
As a consequence, we prove that the length of a shortest path between two vertices of a 0/1 polytope in $\mathbb{R}^n$ described by $O(n)$ linear inequalities is hard to approximate within a factor of $(1-\epsilon)\ln n$ for any constant $\epsilon >0$, unless $\P=\NP$.
On the positive side, we provide polynomial-time algorithms for block graphs, cographs, and bipartite chain graphs.
Moreover, for paths and cycles, we show that the optimal length of the shortest reconfiguration sequence exactly matches a trivial upper bound.
\end{abstract}


\section{Introduction}
\emph{Combinatorial reconfiguration}~\cite{reconf/ItoDHPSUU11} studies the structure of the solution space of a combinatorial problem by introducing an adjacency relation (called \emph{reconfiguration rule}) on the feasible solutions.
The resulting graph, called the \emph{reconfiguration graph}, has feasible solutions as vertices, with edges joining pairs of solutions related by the rule.
This framework provides a useful perspective on discrete solution spaces
and has been connected to several algorithmic tasks,
including optimization, counting, enumeration, and sampling.
In general, the number of feasible solutions may be exponential in the input size,
and hence the reconfiguration graph can be exponentially large.
This gives rise to a variety of algorithmic questions concerning the structure of
the reconfiguration graph.
We focus on two of them: \emph{reachability} and
\emph{shortest reconfiguration}.
Given two feasible solutions, the former asks whether there is a path (called a \emph{reconfiguration sequence}) between them in the reconfiguration graph, while the latter asks for the shortest path between them.

Among the various reconfiguration problems studied so far, \emph{independent set reconfiguration} (ISR) is one of the central topics.
In this problem, each feasible solution is an independent set of a graph.
The most widely studied rules include token jumping (TJ), token sliding (TS), and token addition and removal (TAR), and these models have led to a substantial body of work on algorithms and computational complexity (see \cite{reconfsurvey/BousquetMNS24,reconfsurvey/Heuvel13} for a comprehensive review).
We describe the related reconfiguration rules in \Cref{relatedwork}.

In all these models, an independent set is viewed as a placement of tokens on vertices, and a reconfiguration step modifies the solution by moving, adding, or removing a single token.
More recently, several extensions of these standard rules have also been considered, such as $k$-TJ~\cite{kTJ/SugaSTZ25} and $(k,d)$-TJ~\cite{k1TJ/KristanS25}, which allow one to modify multiple tokens in a single reconfiguration step.
These developments suggest that it is natural to study reconfiguration rules beyond the classical single-token settings.
Moreover, the latter is \rev{partly} motivated by a setting from multi-agent systems (MAS)~\cite{MAS/SternSFK0WLA0KB19}, showing that motivations from other areas can also lead to natural reconfiguration rules.

In this paper, we study a reconfiguration rule for independent sets that is induced by the graph of the independent set polytope.
For a graph $G=(V,E)$, the characteristic vector of an independent set $I$ is the vector $\bm{x}\in\{0,1\}^V$ such that $x_v=1$ if and only if $v\in I$.
The \emph{independent set polytope} of $G$ is the convex hull of the characteristic vectors of the independent sets of $G$~\cite{ISpolytope/CHVATAL1975138}.
A theorem of Chv\'atal characterizes adjacency in this polytope as follows.

\begin{theorem}[Chv\'atal~\cite{ISpolytope/CHVATAL1975138}]
Let $I$ and $J$ be two distinct independent sets of a graph $G$.
Then the characteristic vectors of $I$ and $J$ are adjacent in the independent set polytope of $G$ if and only if the induced subgraph $G[I\bigtriangleup J]$ is connected.
\end{theorem}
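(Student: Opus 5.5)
We prove both directions of Chv\'atal's theorem using the standard criterion for 0/1 polytopes. For vertices $\bm{x},\bm{y}$ of a polytope $P$, the segment $[\bm{x},\bm{y}]$ is an edge if and only if the midpoint $\frac12(\bm{x}+\bm{y})$ cannot be written as a convex combination of vertices of $P$ other than convex combinations of $\bm{x}$ and $\bm{y}$ alone. Write $\bm{x}=\chi^I$, $\bm{y}=\chi^J$, $D=I\bigtriangleup J$ and $C=I\cap J$. Each independent set $K$ has characteristic vector $\chi^K$.

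\emph{Only if.} Suppose $G[D]$ is disconnected. Choose a connected component $A$ of $G[D]$ and set $B=D\setminus A\neq\emptyset$. Define
\[
K_1=C\cup (I\cap A)\cup (J\cap B), \qquad K_2=C\cup (J\cap A)\cup (I\cap B).
\]
Both are independent. Inside $A$ and inside $B$, each set uses vertices from only one of $I$ or $J$. There are no edges between $A$ and $B$. Vertices of $C$ lie in both $I$ and $J$, so they have no neighbours in $I\cup J$, and $D\subseteq I\cup J$. Then $\chi^{K_1}+\chi^{K_2}=\chi^I+\chi^J$, and $\{K_1,K_2\}\neq\{I,J\}$ because both $A$ and $B$ are nonempty. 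Hence the midpoint of $\chi^I,\chi^J$ is also the midpoint of two other vertices, so $\chi^I$ and $\chi^J$ are not adjacent.

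\emph{If.} Suppose $G[D]$ is connected. We use a supporting-hyperplane argument and aim to exhibit a weight vector $\bm{w}$ whose maximizers over the independent sets are exactly $I$ and $J$. Define:
\begin{itemize}
\item $w_v=1$ for $v\in D$;
\item $w_v=N$ for $v\in C$, with $N$ large, say $N=|V|+1$;
\item $w_v=-N$ for $v\notin I\cup J$.
\end{itemize}
Let $K$ be a maximizer. Then $K$ contains $C$ and avoids $V\setminus(I\cup J)$: dropping a vertex of $C$ or adding an outside vertex loses $N$, while $D$ contributes at most $|D|<N$. So $K=C\cup S$ with $S\subseteq D$ independent, and it remains to show that the maximum-size independent sets of $G[D]$ are exactly $I\setminus J$ and $J\setminus I$.

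This is the main obstacle. The fact we need is that $G[D]$ is a connected bipartite graph with sides $I\setminus J$ and $J\setminus I$, but these sides may have different sizes, so the uniform weights above are wrong. We fix this by reweighting. Give the vertices of $D$ weights $a$ on $I\setminus J$ and $b$ on $J\setminus I$, chosen so that $a|I\setminus J|=b|J\setminus I|$, and equal to $1$ if either side is empty. If one side is empty, connectivity forces $|D|=1$, and the claim is trivial.

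The key step is the following claim: in a connected bipartite graph with sides $X,Y$ and weights $a$ on $X$ and $b$ on $Y$ satisfying $a|X|=b|Y|$, every independent set $S\ne X,Y$ has weight strictly less than $a|X|$. To prove it, write $S=S_X\cup S_Y$. Then $N(S_Y)\subseteq X\setminus S_X$, so the weight of $S$ is at most
\[
a\bigl(|X|-|N(S_Y)|\bigr)+b|S_Y|.
\]
We need $a|N(S_Y)|\ge b|S_Y|$, with strictness when $S\neq X,Y$, and this inequality need not hold for uniform sides. So a plain uniform reweighting may fail, and we switch to a non-uniform certificate.

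Instead, we take weights $w_v=\lambda_v>0$ on $D$ from a strictly positive fractional perfect-matching-like flow. Choose positive edge weights $f_e$ on a spanning tree of $G[D]$, and set $w_v=\sum_{e\ni v}f_e$. We need these sums to be equal in total on $X$ and on $Y$. They are automatically equal, since each edge contributes $f_e$ once to each side. Then for any independent $S$,
\[
w(S)=\sum_{e}f_e\,\bigl|e\cap S\bigr|\le\sum_e f_e=w(X)=w(Y),
\]
because an independent set meets each edge in at most one endpoint. Equality requires $S$ to hit every tree edge exactly once. In a connected tree this forces $S$ to be one entire colour class, so $S=X$ or $S=Y$. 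This gives the supporting hyperplane $\{\bm{z}:\bm{w}^\top\bm{z}=w(I)\}$, which meets the polytope exactly in the segment $[\chi^I,\chi^J]$, completing the proof.
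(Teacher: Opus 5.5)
The paper does not prove this theorem; it quotes it from Chv\'atal, so there is no in-paper argument to compare with. Your proof is correct.

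\textbf{The ``only if'' direction.} Exchanging $I$ and $J$ on one component $A$ of $G[D]$ gives $\chi^{K_1}+\chi^{K_2}=\chi^I+\chi^J$ with $\{K_1,K_2\}\neq\{I,J\}$. This is the standard argument.

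\textbf{The ``if'' direction.} The usual argument stays with the midpoint criterion. Suppose $\frac12(\chi^I+\chi^J)=\sum_K\lambda_K\chi^K$. Every $K$ in the support must agree with $I$ and $J$ wherever they agree, so $C\subseteq K\subseteq I\cup J$. For each edge $uv$ of $G[D]$, the midpoint satisfies $x_u+x_v=1$, while every $\chi^K$ satisfies $x_u+x_v\le 1$. So each such $K$ contains exactly one endpoint of every edge of $G[D]$, and connectivity forces $K\cap D\in\{I\setminus J,\,J\setminus I\}$.

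Your spanning-tree functional is exactly the dual certificate of this argument. Indeed, $\bm{w}^\top\bm{x}$ is a positive combination of three families of valid inequalities: $x_u+x_v\le 1$ for $uv\in T$, $x_v\le 1$ for $v\in C$, and $-x_v\le 0$ for $v\notin I\cup J$. The face of the polytope it exposes is where all of them are tight. The midpoint route is shorter. Yours has the advantage of producing an explicit objective vector whose optimal face is $[\chi^I,\chi^J]$.

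\textbf{Clean-up for the write-up.}
\begin{itemize}
\item Drop the abandoned attempts. The ``key claim'' about side-size reweighting is false. On the path $x_1y_1x_2y_2$ with $X=\{x_1,x_2\}$, $Y=\{y_1,y_2\}$ and $a=b$, the independent set $\{x_1,y_2\}$ already has weight $a|X|$.
\item Your choice of weight $1$ when one side is empty would make the maximizer unique when $|D|=1$. The tree construction handles that case correctly by giving the single vertex of $D$ weight $0$. So the weights on $D$ are nonnegative, not positive as you stated.
\item The justification ``$D$ contributes at most $|D|<N$'' was calibrated for unit weights. With the tree weights, every independent $K$ satisfies $w(K\cap D)=\sum_{e\in T}f_e|e\cap K|\le\sum_{e\in T}f_e=w(I\setminus J)$. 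So in fact any $N>0$ already forces $C\subseteq K\subseteq I\cup J$ at a maximizer.
\end{itemize}
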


Motivated by this characterization, we consider the reconfiguration rule in which two independent sets $I$ and $J$ are adjacent whenever $G[I\bigtriangleup J]$ is connected.
\rev{
Note that this rule imitates TAR when $G[I\bigtriangleup J]$ is a single vertex, and TS when $G[I\bigtriangleup J]$ consists of two vertices with an edge.}
Under this rule, the resulting reconfiguration graph coincides with the graph of the independent set polytope.
Hence, our problem can be viewed as the shortest path problem on that polytope.
Shortest path problems on combinatorial polytopes have been studied for various structures, including associahedra~\cite{PMR/ItoKKKO22}, perfect matching polytopes~\cite{PMR/CardinalS25}, and polymatroids~\cite{polymatroids/CardinalS25}.
They are also closely related to the analysis of the simplex method for linear optimization~\cite{Pivotrule/LoeraKS22}.

\subsection{Related Work}\label{relatedwork}

\subsection*{Other Reconfiguration Rules for \prb{ISR}.}
As mentioned above, the \prb{ISR} problem is one of the most fundamental problems in the context of combinatorial reconfiguration.
Most previous work on \prb{ISR} has considered rules in which adjacency is defined by moving, adding, or removing tokens placed on vertices, e.g., TJ, TS, TAR.

For all three rules, it is known that deciding reachability is \PSPACE-complete even for planar subcubic graphs~\cite{reconf/HearnD05}, planar graphs with bounded bandwidth~\cite{ISR/Wrochna18}, and \W[1]-hard when parameterized by the number of tokens plus the length of the reconfiguration sequence~\cite{shortISR/BodlaenderGS21}.
Moreover, the reachability problem under TS is \PSPACE-complete on bipartite graphs~\cite{ISRbipartite/LokshtanovM19} and split graphs~\cite{TSsplit/BelmonteKLMOS21}, whereas the one under TJ is \NP-complete on bipartite graphs~\cite{ISRbipartite/LokshtanovM19} and polynomial-time solvable on split graphs.
On the positive side, there are many algorithmic results for the reachability problem, for example, linear-time solvability for trees under TJ and TS~\cite{ISRtree/DemaineDFHIOOUY15}.

More recently, several extensions beyond these standard rules have also been studied.
For instance, Suga et al.~\cite{kTJ/SugaSTZ25} introduced $k$-TJ, which allows up to $k$ tokens to jump simultaneously in a single reconfiguration step; Hatano et al.~\cite{hopISR/HatanoKIIM26} introduced $d$-jump, which allows a single token to move along a path of length $d$; and
Krist\'an and Svoboda~\cite{k1TJ/KristanS25} introduced $(k,d)$-TJ, which allows all tokens to move to vertices at distances of $d$.

\subsubsection*{Alternating Cycle Model for Perfect Matching Reconfiguration.}
A related reconfiguration problem, motivated by a combinatorial polytope, has been studied for perfect matchings, that is, edge sets such that every vertex is incident to exactly one in the set.
Ito et al.~\cite{PMR/ItoKKKO22} introduced a corresponding reconfiguration rule in which two perfect matchings are adjacent whenever their symmetric difference forms a single cycle.
Under this rule, deciding whether one perfect matching can be transformed into another in two steps is \NP-hard, whereas a polynomial-time algorithm is known for outerplanar graphs~\cite{PMR/ItoKKKO22}.
Cardinal and Steiner~\cite{PMR/CardinalS25} later showed that constant-factor approximation is \NP-hard even for subcubic graphs.
Moreover, assuming the exponential time hypothesis (ETH), they proved that approximating the length of a shortest path on the perfect matching polytope of an $n$-vertex graph within a factor of $\left(\frac{1}{4}-o(1)\right)\frac{\log n}{\log\log n}$ is hard.

\subsubsection*{From the Viewpoint of Linear Programming.}\label{relatedworkLP}
This line of research is also closely related to the complexity of the simplex method and pivot rules in linear optimization.
Since the seminal exponential lower bound of Klee--Minty~\cite{Pivotrule/KM72}, many pivot rules have been shown to have exponential or superpolynomial worst-case behavior.
In this context, it is natural to consider paths along which the objective value improves at every step.
Formally, a path on a polytope $P$ is called (strictly) \emph{monotone} with respect to a vector $c$ if the corresponding linear objective values increase monotonically along the path.
Such paths are directly relevant to pivot rules, since each step along a monotone path corresponds to an improving pivot.
De Loera, Kafer, and Sanit\`a showed that finding a shortest monotone path to an optimum is \NP-hard~\cite{Pivotrule/LoeraKS22}.
We give an interpretation of our result in this context in \Cref{polytopeinterpretation}.

A recent line of research has focused on restricting the structure of the underlying polytope.
In particular, Cardinal and Steiner established strong inapproximability results for perfect matching polytopes of bipartite graphs, which admit descriptions with a linear number of inequalities.
Black and Steiner later showed that computing a shortest monotone path to an optimum remains \NP-hard even on simple polytopes~\cite{polytopes/BlackS2026}.
On the other hand, Cunha et al.~\cite{polytopes/cunha2026} proved fixed-parameter tractability for graph associahedra parameterized by the distance, while establishing \W[2]-hardness and logarithmic-factor inapproximability for hypergraphic polytopes.

\subsection{Our Contribution}

We initiate the study of \textsc{SymISR}, the problem of finding a \emph{shortest} reconfiguration sequence under the rule that two independent sets $I$ and $J$ are adjacent whenever $G[I\bigtriangleup J]$ is connected.
The shortest variant is the natural algorithmic question in our setting, since reachability is trivial: any two independent sets of an $n$-vertex graph can be transformed into each other in $O(n)$ steps; see \Cref{sec:prbdef}.
We first show that it is \NP-complete to decide whether the shortest reconfiguration sequence between two given independent sets has length at most~$2$, even for planar graphs of bounded degree and degeneracy~$2$.
Since the case for length at most~$1$ can be verified in linear time, our result establishes a sharp complexity boundary with respect to the optimal length~$k$ of a reconfiguration sequence.
Moreover, unless $\P=\NP$, it implies that there is no polynomial-time $(3/2-\epsilon)$-approximation algorithm for any $\epsilon > 0$ (hence it admits no PTAS), and no {\XP} algorithm with respect to $k$, for such a restricted condition.

We therefore turn to graph classes with simpler structure.
Because \prb{SymISR} is defined in polyhedral terms, it is natural to focus on classes whose independent set polytopes admit well-structured descriptions.
In particular, Chv\'atal~\cite{ISpolytope/CHVATAL1975138} showed that, for perfect graphs, the independent set polytope is described by: \emph{nonnegativity}: $x_v\ge 0$ for every $v\in V$, and \emph{clique inequalities}: $\sum_{v\in C}x_v\le 1$ for every maximal clique $C$ of $G$.
We therefore focus on subclasses of perfect graphs.

Despite the existence of such well-structured polytopes, we prove that \prb{SymISR} is \NP-hard on split graphs, a proper subclass of perfect graphs.
Our reduction further implies \W[2]-hardness parameterized by the length of the reconfiguration sequence, and inapproximability for computing a shortest reconfiguration sequence.
On the algorithmic side, we give polynomial-time algorithms for block graphs, bipartite chain graphs, and cographs, all of which are proper subclasses of perfect graphs.
Note that forests, which are known to be graphs with degeneracy $1$, are a proper subclass of block graphs.
Combined with the hardness for graphs of degeneracy~$2$ and tractability for forests, this gives another tight boundary for \prb{SymISR} in terms of the degeneracy of the graph.
\Cref{fig:contribution} illustrates our contributions regarding perfect graphs.

\begin{figure}
    \centering
    \begin{tikzpicture}[
  every node/.style={
    rectangle,
    draw=black,
    fill=none,
    inner sep=2pt,
    outer sep=0pt,
    minimum height=1.8em
  }
]
    \node (P) at (-0.5,3.5) {Perfect};
    \node (C) at (-4,2) {Chordal};
    \node (DH) at (-0.5,2) {Distance-hereditary};
    \node (B) at (3,2) {Bipartite};
    \node (Sp) at (-5,0.5) {Split};
    \node (Bl) at (-2.2,0.6) {Block};
    \node (Co) at (-0.5,0.8) {Cographs};
    \node (Ch) at (3,0.5) {Bipartite chain};
    \node (T) at (-0.7,-0.5) {Tree};

  \draw (P.south) -- (C.north);
  \draw (P.south) -- (DH.north);
  \draw (P.south) -- (B.north);
  \draw (C.south) -- (Sp.north);
  \draw (C.south) -- (Bl.north);
  \draw (DH.south) -- (Bl.north);
  \draw (DH.south) -- (Co.north);
  \draw (B.south) -- (Ch.north);
  \draw (Bl.south) -- (T.north);
  \draw (B.south) -- (T.north);
  \draw (DH.south) -- (Ch.north);

  \draw[blue, very thick] (5,3) .. controls (-5.5, 3.5) and (-3, 0) .. (-5.5, -0.5);
  \draw[red, very thick] (5,1.3) .. controls (-6, 2) and (-3.5, -0.5) .. (-5.5, -0.6); 
  \node[blue, draw = none] (NPh) at (4.2, 3.3) {\NP-hard};
  \node[red, draw = none] (poly) at (4.5,1.0) {in \P};
  \node[gray, draw = none] (open) at (4.5, 2) {Open};
  
\end{tikzpicture}
    \caption{A selection of our contributions focusing on subclasses of perfect graphs.
    In addition, the paper establishes the \NP-hardness of planar graphs with bounded degree and degeneracy, and provides an exact formula for the length of sequences (and hence linear-time algorithms) in paths and cycles.
    The complexity of \prb{SymISR} for bipartite graphs and distance-hereditary graphs remains open; we leave this as a future direction.}
    \label{fig:contribution}
\end{figure}
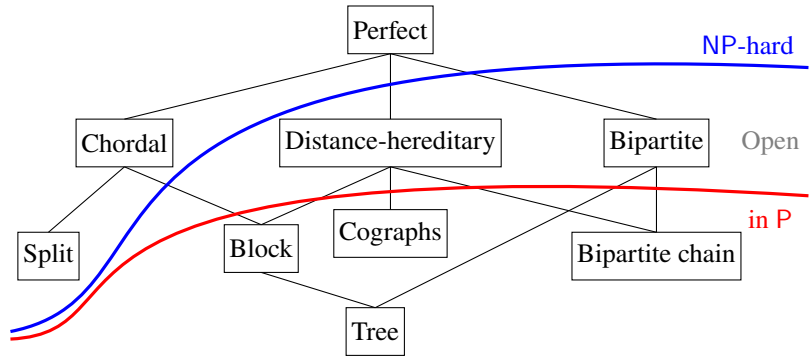

It is also natural to ask when the trivial upper bound (described in \Cref{sec:preliminaries}) derived by the definition of adjacency is tight.
We show that this is the case for paths and cycles, and hence obtain linear-time algorithms for them.


\subsubsection*{Consequences for Shortest Paths on Polytopes}
\label{polytopeinterpretation}

The inapproximability result for split graphs can be translated into a statement about shortest paths on independent set polytopes.
Indeed, since split graphs are perfect and have only \(O(n)\) maximal cliques, the independent set polytope of an \(n\)-vertex split graph is described by \(O(n)\) clique inequalities, together with the $O(n)$ nonnegativity constraints~\cite{ISpolytope/CHVATAL1975138}.
Hence, we directly obtain that the shortest path problem is hard to approximate on 0/1 polytopes given by linear inequalities.

\begin{corollary}\label{inapprox}
Given a polytope $P \subseteq \mathbb{R}^n$ described by $O(n)$ inequalities
with $0/1$ coefficients and two vertices $u,v$ of $P$, the length of a shortest  path from $u$ to $v$ in the skeleton of $P$ cannot be approximated within a factor of $(1-\varepsilon)\ln n$ in polynomial time for any fixed constant $\varepsilon >0$, unless $\P=\NP$.
\end{corollary}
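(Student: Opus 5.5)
The plan is to derive the corollary from the inapproximability result for \prb{SymISR} on split graphs, which the paper proves later, combined with Chv\'atal's polyhedral description. Concretely, I assume the split-graph reduction comes from \prb{Set Cover} (or \prb{Dominating Set}), which is consistent with the \W[2]-hardness. It should be approximation preserving: an instance with universe size $N$ yields a split graph $G$ on $n$ vertices, with $\ln n = (1+o(1))\ln N$ after suitable padding. The optimal reconfiguration length should equal the set cover optimum up to an additive constant, or at least up to a factor $1+o(1)$. Via the Dinur--Steurer bound, this gives that the \prb{SymISR} distance on split graphs cannot be approximated within $(1-\varepsilon)\ln n$ unless $\P=\NP$.

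With that result, the polytope statement follows in three steps.

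\textbf{Step 1.} Given a split graph $G=(K\cup S,E)$ with clique $K$ and independent set $S$, list its maximal cliques. These are $K$ itself (if maximal) and the sets $\{s\}\cup N(s)$ for $s\in S$, so there are at most $|S|+1=O(n)$ of them, and they are computable in polynomial time.

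\textbf{Step 2.} Since split graphs are chordal and hence perfect, Chv\'atal's theorem says the independent set polytope is exactly
$\{x\in\mathbb{R}^V : x_v\ge 0,\ \sum_{v\in C}x_v\le 1 \text{ for every maximal clique } C\}$.
This is $O(n)$ inequalities with $0/1$ coefficients (writing $-x_v\le 0$ as $x_v\ge 0$, or taking the coefficient set to include signs as the statement intends).

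\textbf{Step 3.} By the adjacency theorem of Chv\'atal stated in the introduction, the skeleton of this polytope is exactly the \prb{SymISR} reconfiguration graph. Shortest path lengths therefore coincide, and the two given independent sets are vertices $u,v$. An approximation algorithm for polytope distances would thus approximate \prb{SymISR} on split graphs within the same factor, with the same dimension $n=|V|$.

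The main obstacle is the bookkeeping of the approximation factor in terms of the ambient dimension $n$, rather than the universe size of the source \prb{Set Cover} instance. The reduction's graph has roughly $N+m$ vertices, where $m$ is the number of sets. Set Cover hardness is in terms of $N$, and $m$ may be polynomially larger, which would shrink the factor to $(1-\varepsilon)\ln N = (1-\varepsilon)c^{-1}\ln n$. I would handle this by using the hard instances with $m\le \mathrm{poly}(N)$, and padding the universe side with dummy elements. For example, I would duplicate universe elements (in copies that each need covering by the same sets) so that $N$ dominates $n$ and $\ln n=(1+o(1))\ln N$; the loss is then absorbed into $\varepsilon$. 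Any additive constant in the length correspondence is likewise absorbed, since the optimum can be assumed to grow with $N$.
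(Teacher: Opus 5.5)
Your Steps 1--3 are exactly how the paper obtains this corollary. Split graphs are perfect with $O(n)$ maximal cliques, so Chv\'atal's clique description gives $O(n)$ inequalities with $0/1$ coefficients, and the adjacency criterion identifies the skeleton with the \prb{SymISR} reconfiguration graph. The split-graph inapproximability lemma is already stated in terms of the number of vertices of the split graph, which is the dimension of the polytope, so it transfers verbatim and no further bookkeeping is needed.

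Your ``main obstacle'' paragraph, which sketches how to obtain that lemma, is where the proposal goes wrong: padding the universe with duplicated elements cannot help. Your graph still has a vertex for each of the $m=N_0^{c}$ sets, so $\ln n\ge c\ln N_0$ however you pad. Meanwhile duplication leaves the optimum unchanged, and hence also the gap $(1-\delta)\ln N_0$, which is measured in the \emph{original} universe size $N_0$. You therefore only get a $\frac{1-\delta}{c}\ln n$ bound. This constant-factor loss is not absorbed into $\varepsilon$, and $m\le \mathrm{poly}(N)$ is not enough; you would need \prb{Set Cover} hardness measured in the total instance size, i.e.\ hard instances with $m\le N^{1+o(1)}$. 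The paper sidesteps this by reducing from \prb{Dominating Set}, whose $(1-\delta)\ln n$ hardness is already stated in terms of the vertex count $n$. Its split graph has $2n+1$ vertices and optimum exactly $\gamma+1$, so $\ln(2n+1)=\ln n+O(1)$. The additive $+1$ is absorbed by solving instances with $\gamma\le c$ exhaustively for a suitably large constant $c$.
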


Moreover, the same reduction can be equipped with a suitable linear objective function to show that the relevant paths are monotone.
Hence, our result implies the following (see the bottom of \Cref{sec:split} for proof).

\begin{corollary}\label{monotoneinapprox}
Given a polytope $P \subseteq \mathbb{R}^n$ described by $O(n)$ inequalities
with $0/1$ coefficients, a vertex $u$ of $P$, and a linear objective function
$c:\mathbb{R}^n \to \mathbb{R}$, the length of a shortest monotone path from
$u$ to an optimal solution of the linear program defined by $P$ and $c$
cannot be approximated within a factor of $(1-\varepsilon)\ln n$ in polynomial
time for any fixed constant $\varepsilon > 0$, unless $\P=\NP$.
\end{corollary}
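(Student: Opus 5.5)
We derive \Cref{monotoneinapprox} from the split-graph reduction used for \Cref{inapprox}. That reduction, presumably from \prb{Set Cover}, gives the $(1-\varepsilon)\ln n$ gap. The idea is to attach an objective $c$ for which (i) the target independent set $J$ is the unique optimum, and (ii) some shortest reconfiguration sequence from $I$ to $J$ is strictly monotone. Then the shortest monotone path to an optimum has length equal to the unconstrained distance, and the inapproximability transfers unchanged.

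\textbf{Setup.} Let $G$ be a split graph with clique $K$ and independent set $S$. I assume the instance looks as follows: the initial set $I$ lies mostly in $S$ (element vertices), and the target $J$ picks a designated vertex, or in general a specific set. The polytope is $P$, given by nonnegativity and the $O(n)$ maximal-clique inequalities, all with $0/1$ coefficients (Chv\'atal). I set $c_v = 1$ for $v\in J$ and $c_v=-1$ for $v\notin J$. Then $J$ is the unique maximizer of $c^\top x$ over $P$: every vertex of $P$ is an independent set $X$, and $c(X)=|X\cap J|-|X\setminus J|\le |J|$, with equality iff $X=J$. Hence any monotone path to an optimum ends at $J$. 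Its length is therefore at least the distance from $I$ to $J$ in the skeleton, which is at least $\mathrm{OPT}$.

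\textbf{Monotonicity.} For the upper bound, I will revisit the completeness direction of the split reduction. A set cover of size $k$ yields a sequence of roughly $k$ (plus a constant) steps. I will check that each step strictly increases $c$, meaning it removes vertices outside $J$ or adds vertices of $J$ more than it does the reverse. Typically each step adds a clique vertex and deletes its neighbours in $S$ that lie outside $J$. A step may instead add a non-$J$ clique vertex that later gets swapped for a $J$-vertex. If so, I will reweight $c$ so that such steps still increase it: weight $-\delta$ on intermediate clique vertices and $-1$ on the removed $S$-vertices, with $\delta$ small. Every vertex of $P$ remains $c$-dominated by $J$ as long as $J$ carries large positive weights, so uniqueness of the optimum is preserved. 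The soundness direction needs nothing new, since monotone paths are in particular paths.

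\textbf{Main obstacle.} The hard step is this weight choice. It must make every step of the canonical sequence strictly improving, particularly the first and last steps that involve the clique, while keeping $J$ the unique optimum. I would first try a potential of the form ``number of $S$-vertices covered so far.'' The $\ln n$ factor survives because the number of variables remains polynomially related to $n$, as in \Cref{inapprox}.
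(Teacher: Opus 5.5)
Your overall strategy is the same as the paper's. You make the target the unique optimum and exhibit a shortest reconfiguration sequence that is strictly monotone, so the monotone distance equals the ordinary distance and the $(1-\varepsilon)\ln n$ gap transfers. The gap is at the step you flag as the main obstacle, the choice of $c$, and your proposed fix points the wrong way.

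In the reduction (from \prb{Dominating Set}), $I_s=V_2\cup\{v_0^1\}$ and $I_t=\{v_0^1\}$. So the only target vertex already lies in $I_s$, is adjacent to the whole clique $V_1$, and must be deleted in the first clique step and re-added at the end. A flip that deletes $v_0^1$ is either $\{v_0^1\}$ alone, or it adds exactly one clique vertex $v_i^1$ and deletes at most $|N_G[v_i]|$ vertices of $V_2$.

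\textbf{Your $\pm1$ weights fail.} With $c_v=\pm1$, such a step changes $c$ by at most $|N_G[v_i]|-2$. If, say, $G$ is a perfect matching, no monotone step can ever drop $v_0^1$. Every monotone path must then delete $V_2$ one vertex at a time and has length $n$, while the shortest path has length $n/2+1$.

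\textbf{A large weight on $J$ makes it worse.} Once $c(v_0^1)$ exceeds what a single step can free from $V_2$, the same collapse to length $n$ happens for every $G$, and the gap is destroyed.

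\textbf{What the weights actually need.} Uniqueness of the optimum only needs $c(v_0^1)>0$, since all other weights are nonpositive. Monotonicity of the first step needs $c(v_0^1)$ plus the cost of the added clique vertex to be smaller than the weight of one deleted $V_2$-vertex. The paper takes $c(\bm{x})=x_{v_0^1}-2\sum_{v\in V_2}x_v$, with weight $0$ on $V_1$. Then $\{v_0^1\}$ is the only independent set of positive value. The canonical sequence has value $1-2n$ at $I_0$, then $-2|I_i\cap V_2|$ for $1\le i\le \ell$ (reaching $0$ at $\{v_\ell^1\}$), then $1$ at $I_t$.

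\textbf{Intermediate steps need minimality.} These steps swap one clique vertex for another of equal weight. They are strict only if each one deletes at least one new $V_2$-vertex; otherwise $c$ stays constant. This holds if the sequence is built from an inclusion-wise minimal dominating set (or set cover), since then each $v_i$ has a private neighbour not dominated by $v_1,\dots,v_{i-1}$. This is what Proposition~\ref{prop:monotone} records. Your ``covered so far'' potential is strictly increasing only under this assumption, which you should state explicitly.
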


This improves on the lower bound of Cardinal and Steiner under the ETH: we obtain a logarithmic inapproximability result under the weaker assumption $\P\ne \NP$.
A related result was recently obtained for hypergraphic polytopes by Cunha et al.~\cite{polytopes/cunha2026}.
Hypergraphic polytopes, however, can have exponentially many facets, hence cannot be input explicitly by linear inequalities.
In contrast, our lower bound holds for polytopes with only $O(n)$ facets.


As a consequence of Corollary~\ref{monotoneinapprox}, it is unlikely that any pivot rule can simultaneously run in polynomial time and guarantee a monotone path whose length is within a logarithmic factor of the optimum, when a given polytope has such a structured formulation.
\rev{On the positive side, an $O(n)$-approximation algorithm is known when $P \subseteq \mathbb{R}^n$ is $0/1$-polytope, namely the convex hull of a set of $0/1$ vectors.
Indeed, Naddef showed that every $0/1$-polytope has diameter at most $n$~\cite{polytope:Naddef89}, and Black et al. further showed that there is a pivot rule for the simplex method to find a monotone path of length at most $n$~\cite{polytope:BlackLKS25}.
Thus, our results indicate that improving this $O(n)$-factor approximation to a logarithmic factor is unlikely even for polytopes described by only linearly many inequalities.
}

In what follows, we introduce preliminaries, show two hardness results and their consequences, and then present our algorithmic results, along with results on when the trivial upper bound is tight.
\rev{Due to the space limitation, proofs and discussions marked $(\ast)$ are omitted in this paper.}

\section{Preliminaries}\label{sec:preliminaries}

Let $G = (V,E)$ be a graph. We also denote by $V(G)$ and $E(G)$ the vertex set and the edge set of $G$, respectively. 
All the graphs considered in this paper are finite, simple, and undirected.
We treat $n$ as the number of vertices and $m$ as the number of edges, unless specifically stated otherwise.
For a vertex $v$ of $G$, the \emph{open neighborhood} $N(v)$ and the \emph{degree} $d(v)$ of $v$ are defined as $N(v) = \{ w \in V \mid vw \in E \}$ and $d(v) = |N(v)|$, respectively.
We also define the \emph{closed neighborhood} $N[v] = N(v) \cup \{v\}$.
For a vertex subset $S \subseteq V(G)$, we denote by $\indG{S}$ the subgraph induced by $S$.
A \emph{connected component} of a graph is a connected induced subgraph that is not contained in another connected subgraph.
For a vertex subset $V'\subseteq V$, let $\numcomp(V')$ be the number of connected components in a graph $G[V']$.
For two disjoint graphs $G_1 = (V_1, E_1)$ and $G_2 = (V_2, E_2)$, their \emph{disjoint union} is defined as $G_1 \oplus G_2 = (V_1 \cup V_2, E_1 \cup E_2)$, and their \emph{join} is defined as $G_1 \otimes G_2 = (V_1 \cup V_2, E_1 \cup E_2 \cup \{v_1v_2 \mid v_1 \in V_1, v_2 \in V_2\})$.
For two positive integers $i$ and $j$ such that $i \le j$, we denote $\intsets{i}{j} = \{i,i+1,\dots, j\}$.
We use the shorthand $\intset{j} = \intsets{1}{j}$.

An \emph{independent set} $I$ of $G$ is a vertex subset of $G$ such that there are no edges between any pair of vertices in $I$.
A \emph{clique} $C$ of $G$ is a vertex subset of $G$ such that there is an edge between $u$ and $v$ for every pair of distinct vertices $u,v\in C$.
A \emph{degeneracy} $d$ of $G$ is the minimum number such that every subgraph has at least one vertex of degree at most $d$.
For the basic concepts and definitions in parameterized complexity theory, see some textbooks, e.g.,~\cite{CyganFKLMPPS15}.

\subsubsection*{Problem.}\label{sec:prbdef}
Two independent sets $I$ and $J$ of a graph $G$ are said to be \emph{adjacent}
if $G[I \bigtriangleup J]$ is connected.
In this case, $J$ is obtained from $I$ by a \emph{flip} on a connected vertex set $X = I \triangle J$, where a flip is the operation of reversing the membership of each vertex in $X$.
\rev{
Note that, for an independent set $I$, the vertex set obtained by flipping a connected vertex subset is not always independent; in this paper, we consider only those for which the resulting vertex set is independent.
}

Given a graph $G$, two independent sets $I_s$ and $I_t$, and an integer $k$, the \prb{SymISR} problem asks whether there exists a reconfiguration sequence
$\sigma=\langle I_s=I_0,I_1,\dots,I_\ell=I_t\rangle$ of length $\ell\le k$ \rev{such that for every $i \in [\ell]$, $I_i$} is an independent set of $G$ and every two consecutive independent sets are adjacent.
Equivalently, $G[I_{i-1}\bigtriangleup I_i]$ is connected for every $i\in[\ell]$.

We begin with some basic observations.
First, the following observation is immediate, as the connectivity of $G[I_s \triangle I_t]$ can be verified in linear time.

\begin{observation}\label{obs:length1}
    When $k=1$, \prb{SymISR} can be solved in linear time for general graphs.
\end{observation}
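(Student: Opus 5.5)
The plan is to reduce the case $k=1$ to a single connectivity test on $G[I_s\bigtriangleup I_t]$. A sequence of length at most $1$ is either the empty sequence, which requires $I_s=I_t$, or a single step $\langle I_s, I_t\rangle$. The only intermediate sets in such a sequence are $I_s$ and $I_t$ themselves, and both are independent by assumption. So nothing has to be checked about independence. The instance is therefore a yes-instance exactly when $I_s=I_t$ or $G[I_s\bigtriangleup I_t]$ is connected (and, if we follow Chvátal's characterization for distinct sets, nonempty).

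The algorithm proceeds as follows.
\begin{enumerate}
\item Mark the vertices of $I_s$ and of $I_t$ in two Boolean arrays indexed by $V$, which takes $O(n)$ time.
\item Compute $X=I_s\bigtriangleup I_t$ by scanning $V$ once.
\item If $X=\emptyset$, answer yes, since the sequence of length $0$ suffices.
\item Otherwise, run a BFS or DFS from an arbitrary vertex of $X$, traversing only edges whose both endpoints lie in $X$. Membership in $X$ is tested in $O(1)$ using the marker array.
\item Answer yes if and only if the search reaches all $|X|$ vertices, which is exactly the condition $\numcomp(X)=1$.
\end{enumerate}

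The BFS examines each adjacency list at most once, so the total running time is $O(n+m)$, which is linear.

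No step is a real obstacle. The only point needing care is the correctness argument for the first step: a single flip is valid if and only if it is a flip on a connected set $X$ and the resulting set is independent. The resulting set here is $I_t$, which is independent by hypothesis, so connectivity of $G[X]$ is both necessary and sufficient.
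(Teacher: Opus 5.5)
Your proposal is correct and follows the paper's reasoning: the paper simply notes that the case $k=1$ reduces to checking whether $G[I_s\bigtriangleup I_t]$ is connected, which takes linear time. Your separate handling of $I_s=I_t$ and the BFS restricted to $X$ only fill in details the paper leaves implicit.
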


Second, one can transform $I_s$ into $I_t$ by flipping the connected components of $G[I_s \bigtriangleup I_t]$ one by one.
Thus, we refer to $\numcomp(I_s \triangle I_t)$ as a \emph{trivial upper bound}.
We remark that the following is also the certificate for membership in \NP. 

\begin{observation}
    For a given graph $G$ and two independent sets $I_s$ and $I_t$, a shortest reconfiguration sequence has length at most $\numcomp(I_s\bigtriangleup I_t)$.
\end{observation}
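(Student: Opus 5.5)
The plan is to flip the connected components of $G[I_s \bigtriangleup I_t]$ one at a time and verify that every intermediate set is independent. Let $C_1,\dots,C_r$ be the vertex sets of these components, so that $r=\numcomp(I_s\bigtriangleup I_t)$. Define
\[
I_i = I_s \bigtriangleup (C_1\cup\dots\cup C_i)
\]
for $i\in\{0,\dots,r\}$. Then $I_0=I_s$ and $I_r=I_s\bigtriangleup(I_s\bigtriangleup I_t)=I_t$. Moreover $I_{i-1}\bigtriangleup I_i=C_i$, which induces a connected subgraph by definition. Hence consecutive sets are adjacent, provided each $I_i$ is independent.

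To prove independence, observe that every vertex of $I_i$ lies either in $I_s\cap I_t$ or in exactly one of $I_s\setminus I_t$ and $I_t\setminus I_s$. Concretely, $I_i$ consists of $I_s\cap I_t$, of $(I_s\setminus I_t)\setminus(C_1\cup\dots\cup C_i)$, and of $(I_t\setminus I_s)\cap(C_1\cup\dots\cup C_i)$.

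Suppose two vertices $u,v\in I_i$ are adjacent, and consider the possible cases.
\begin{itemize}
\item If both lie in $I_s$, or both lie in $I_t$, this contradicts the independence of $I_s$ or of $I_t$. This covers every case in which one of them lies in $I_s\cap I_t$.
\item Otherwise, one vertex lies in $I_s\setminus I_t$ and is not in $C_1\cup\dots\cup C_i$, while the other lies in $I_t\setminus I_s$ and is in $C_1\cup\dots\cup C_i$. Both vertices are then in $I_s\bigtriangleup I_t$ and are adjacent, so they belong to the same component $C_j$. This is impossible, since exactly one of them lies in $C_1\cup\dots\cup C_i$.
\end{itemize}
Thus each $I_i$ is independent, and the sequence has length $r$. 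The only point requiring care is this last case, which uses the fact that components are closed under adjacency within $I_s\bigtriangleup I_t$.
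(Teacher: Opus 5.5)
Your proof is correct and follows essentially the same approach as the paper: flip the components of $G[I_s\bigtriangleup I_t]$ one at a time, and show that every intermediate set is independent using the fact that two adjacent vertices of $I_s\bigtriangleup I_t$ lie in the same component. The paper argues via a minimal index $j$ at which independence fails, with a case split on membership in $C_j$. Your explicit three-part description of $I_i$ gives the same conclusion directly, without the minimality argument.
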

\begin{proof}
\rev{
Let $C_1, C_2,...,C_i$ be connected components of $G[I_s\bigtriangleup I_t]$, where $i=\numcomp(I_s\bigtriangleup I_t)$.
Consider the sequence $\sigma = \langle I_s = I_0, I_1, ..., I_i = I_t \rangle$ of vertex sets, where $I_j$ is obtained from $I_{j-1}$ by a flip on $C_j$ for $j\in [i]$.
We claim that $\sigma$ is a reconfiguration sequence from $I_s$ to $I_t$ by showing that $I_j$ is independent for every $j\in [0,i]$.
}

\rev{For the sake of contradiction, let $j$ be the minimum integer such that $I_j$ is not independent. Clearly, $j > 0$. Then, there exist vertices $v, v' \in I_j$ such that $v v' \in E(G)$. Moreover, since $I_{j-1}$ is independent due to the minimality of $j$, we have $v \in V(C_j)$ or $v' \in V(C_j)$.
If both $v$ and $v'$ belong to $V(C_j)$, we obtain $v, v' \in I_t$ because a flip is performed at most once on each vertex, which contradicts the fact that $I_t$ is independent.
Suppose next that $v \in V(C_j)$ and $v' \notin V(C_j)$. Note that the case where $v \notin V(C_j)$ and $v' \in V(C_j)$ is symmetric. Since a flip is performed at most once on each vertex, we have $v \in I_t$. Moreover, no flip is performed on $v'$; otherwise, $v$ and $v'$ would be contained in the same connected component $C_j$, since $v v' \in E(G)$. Consequently, we have $v' \in I_t$, contradicting the independence of $I_t$.}
\end{proof}



\section{Hardness for Planar Graphs}\label{sec:hard}

A graph $G$ is said to be \emph{planar} if $G$ can be drawn in the plane without any edge crossing.
We begin with proving the \NP-completeness of \prb{SymISR} on planar graphs, even under very restrictive conditions.
Specifically,  we show the following theorem.
\begin{theorem}\label{the:planarhard}
    \prb{SymISR} is \NP-complete for planar graphs with maximum degree at most 6 and degeneracy 2, even when $k=2$.
\end{theorem}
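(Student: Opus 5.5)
Membership in \NP{} follows from the trivial upper bound observation: a sequence of length at most $k$ is a polynomial-size certificate, and each step is checked by verifying independence and connectivity. For hardness I would reduce from \textsc{Planar 3-SAT}, choosing a variant in which every variable occurs in at most three clauses. That variant keeps degrees small, and its variable--clause incidence graph can be augmented with a cycle through the variables while staying planar.

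The plan starts from a structural characterisation of length-$2$ sequences. Let $D=I_s\bigtriangleup I_t$, and suppose $\langle I_s,I_1,I_t\rangle$ is a reconfiguration sequence with flips $X_1=I_s\bigtriangleup I_1$ and $X_2=I_1\bigtriangleup I_t$. Then $X_1\bigtriangleup X_2=D$. Hence there is a set $S=X_1\cap X_2$ disjoint from $D$, together with a partition $D=D_1\sqcup D_2$, such that $X_i=D_i\cup S$.
\begin{itemize}
\item Vertices of $S$ are flipped twice, so they lie either in $I_s\cap I_t$ (removed and then re-added) or outside $I_s\cup I_t$ (added temporarily).
\item The constraints are therefore: both $G[D_1\cup S]$ and $G[D_2\cup S]$ are connected, and $I_1=I_s\bigtriangleup X_1$ is independent.
\end{itemize}
I would prove this as a lemma first. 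It turns the $k=2$ question into a purely combinatorial problem: choose a ``bridge'' set $S$ and a $2$-colouring of the components of $G[D]$, subject to an independence constraint on $I_1$.

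The construction then encodes a formula $\varphi$ as follows.
\begin{itemize}
\item \emph{Clause gadgets.} Each clause $c$ gets a component of $G[D]$, for instance a single vertex in $I_t\setminus I_s$.
\item \emph{Variable gadgets.} Each variable $x$ gets a small gadget containing two candidate bridge vertices $a_x^{\TRUE},a_x^{\FALSE}\notin I_s\cup I_t$. A forcing component of $D$ adjacent to both makes any connected $X_i$ pass through at least one of them. A vertex of $I_s\cap I_t$ (or an $I_s$-vertex that stays in $I_1$) adjacent to both makes them mutually exclusive in $S$, since otherwise $I_1$ would not be independent. Equivalently, a short path makes including both inconsistent with independence.
\item \emph{Literal connections.} Each clause vertex is attached through short paths to the bridge vertex of each of its literals.
\item \emph{Backbone.} A planar ``spine'' of $D$-components links the variable gadgets along the variable cycle, so that both $D_1\cup S$ and $D_2\cup S$ must use $S$ to traverse every variable gadget.
\end{itemize}
Under this design, a satisfying assignment gives $S=\{a_x^{\alpha(x)}\}$ plus the path vertices towards the satisfied literals. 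Each clause component can then be put in either $D_1$ or $D_2$ and still reach $S$ through a true literal. Conversely, in any valid $(S,D_1,D_2)$, each clause component reaches $S$ only through a chosen literal bridge, and exclusivity yields a consistent assignment.

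I expect the main obstacle to be the soundness direction. The solution may use unintended bridge vertices, for example vertices of $I_s\cap I_t$ or internal path vertices, to connect a clause to the spine without choosing a literal. It may also exploit the freedom of splitting $D$ between $D_1$ and $D_2$, for instance by putting all clause components on one side. To rule this out, my first attempt is to make the spine and the forcing components appear on both sides. Concretely, I would pair gadgets so that, by connectivity of $G[D_1\cup S]$ and $G[D_2\cup S]$, every gadget's bridge is forced into $S$. I would then argue by a local case analysis per gadget that the only independent-compatible choices of $S$ are the intended ones.

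Finally, I would check the bookkeeping. Replacing high-degree vertices by short paths of alternating $I_s$/$I_t$ membership keeps the maximum degree at most $6$. Every gadget is a subdivided planar structure in which each subgraph has a vertex of degree at most $2$, which gives degeneracy $2$. Planarity is inherited from the embedding of the incidence graph plus the variable cycle.
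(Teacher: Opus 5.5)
\paragraph{Assessment.} Your decomposition $X_i=D_i\cup S$, with $S=X_1\cap X_2$ disjoint from $D=I_s\bigtriangleup I_t$, is correct. What follows it, however, is a plan rather than a reduction. The gadgets are never fixed, and the soundness direction, which carries the hardness, is left as an ``expected obstacle'' with only a vague remedy.

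\paragraph{The exclusivity mechanism is wrong.} Your bridges $a_x^{\TRUE},a_x^{\FALSE}$ lie outside $I_s\cup I_t$, so being in $S$ means being in $I_1$. A common neighbour $w\in I_s\cap I_t$ does not make them exclusive. You can simply put $w$ into $S$ as well (removed in the first flip, re-added in the second), and then both bridges sit in $I_1$. If instead $w$ stays in $I_1$, neither bridge is usable. Exclusivity needs the two bridges to be adjacent, as with the paper's edge $x_{iT}x_{iF}$.

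\paragraph{Planarity is not automatic.} Splitting each variable into a true side and a false side, and routing every literal path to the correct side, stays planar only if positive and negative occurrences are separated by the variable cycle at every variable. That is property~(3) of Pilz's \textsc{Linked Planar SAT}, the variant the paper reduces from. A generic planar 3-SAT instance with a variable cycle does not provide it.

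\paragraph{The missing idea.} What removes your main obstacle is to make the split of $D$ rigid, instead of fighting it on both sides. The paper gives each clause a $K_2$ on $c_j\in I_t\setminus I_s$ and $c_j'\in I_s\setminus I_t$, with $c_j'$ a leaf. Suppose the first flip touched this component. Then the touched part would be isolated in $G[I_s\bigtriangleup I_1]$: $c_j'$ has no other neighbour, and if $c_j$ is flipped then $c_j\in I_1$, so its neighbours $b_{ij}$ lie in neither $I_s$ nor $I_1$. The first flip would then be just that piece, and the second flip would be disconnected. Hence $D_1=\emptyset$. So putting all clause components on one side, which you regarded as a loophole, is exactly the intended solution, and the first flip consists solely of doubly flipped vertices.

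Accordingly, the spine is built from such vertices rather than from $D$-components. It is $Y\subseteq I_s\cap I_t$ arranged in a chain of diamonds $y_{i-1},\{x_{iT},x_{iF}\},y_i$, plus occurrence paths $x_{i\ast}a_{ij}b_{ij}c_j$ with $a_{ij}\in I_s\cap I_t$. Soundness is then short:
\begin{itemize}
\item connectivity of the second flip needs some $b_{ij}\in I_1$ at each $c_j$, which forces $a_{ij}\notin I_1$;
\item connectivity of the first flip then forces $x_{i\ast}\in I_1$, i.e.\ a true literal;
\item the edge $x_{iT}x_{iF}$ makes the resulting assignment consistent.
\end{itemize}
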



We reduce \textsc{Linked Planar SAT} (\textsc{LP-SAT}) to \textsc{SymISR}. 
First, we define \textsc{LP-SAT}.
Let $\phi$ be a CNF formula with clause set $C=\{c_1,c_2,\dots,c_m\}$ and variable set $X=\{x_1,x_2,\dots,x_n\}$.
The \emph{incidence graph} of $\phi$, denoted by $G(\phi)$, is the graph with vertex set $V(G(\phi))=X\cup C$ and edge set
$ E(G(\phi))=\{x_ic_j \mid x_i \text{ occurs in } c_j\}$.
A \emph{link} is an edge set $ \{x_ix_{i+1}\mid i\in [1,n-1]\}\cup \{c_ic_{i+1}\mid i\in [1,m-1]\}\cup \{x_nc_1,c_mx_1\}$.
Moreover, we refer to the graph obtained by adding a link to $G(\phi)$ as the \emph{linked incidence graph} of $\phi$.
\textsc{LP-SAT} is the problem of deciding whether $\phi$ is satisfiable, given a CNF formula $\phi$ such that the linked incidence graph of $\phi$ is planar.
We assume that such a planar embedding is given.
It is known that \textsc{LP-SAT} remains NP-complete even under the following restrictions~\cite{planarSAT/Pilz19}:
\begin{enumerate}[(1)]
    \item each clause contains at most three variables;
    \item each variable occurs in at most three clauses; and
    \item in a planar embedding, the edges corresponding to positive occurrences lie in the interior of the link of $\phi$, whereas those corresponding to negative occurrences lie in the exterior.
\end{enumerate}
The example of such an input and corresponding linked incidence graph is in \Cref{fig:planar_construction}.
We assume that the input formula $\phi$ satisfies the conditions above.
Given a CNF formula $\phi$ together with a planar embedding of its linked incidence graph, we now construct an instance $I=(G',I_s,I_t,k)$ of \textsc{SymISR}.

\subsection{Construction}
We describe the construction in three parts: the variable gadget, the clause gadget, and the occurrence paths connecting them.
First, we construct the \textit{variable gadget}, whose vertex set consists of $X'=X_V\cup Y$, where $X_V=\{x_{iT},x_{iF}\mid i\in [n]\}$ and $Y=\{y_i\mid i\in [0,n]\}$.
The edge set of the variable gadget is $\{y_{i-1}x_{iT},y_{i-1}x_{iF},x_{iT}y_i,x_{iF}y_i,x_{iT}x_{iF}\mid i\in [n]\}$.
Next, the \textit{clause gadget} consists of $m$ disjoint copies of $K_2$, that is, the vertex set $\{c_{j},c_{j}'\mid j\in [m]\}$ and the edge set $\{c_{j}c_{j}'\mid j\in [m]\}$.
Finally, we connect the variable gadget and the clause gadget by occurrence paths.
For every positive occurrence (resp.\ negative occurrence) of $x_i\in X$ in $c_j\in C$, we connect $x_{iT}$ (resp. $x_{iF}$) and $c_{j}$ by a path $(x_{iT}, a_{ij}, b_{ij}, c_{j})$ (resp. $(x_{iF}, a_{ij}, b_{ij}, c_{j})$), completing the construction of $G'$.
We write $A=\{a_{ij}\mid \text{one of }x_i \text{ and } \neg x_i \text{ occurs in } c_j\}$ and $B=\{b_{ij}\mid \text{one of }x_i \text{ and } \neg x_i \text{ occurs in } c_j\}$.
We set $I_s =Y \cup A\cup\{c_{j}'\mid j\in [m]\}$, $I_t=Y \cup A\cup\{c_{j}\mid j\in [m]\}$, and $k=2$.
Immediately, both $I_s$ and $I_t$ are independent sets.

\begin{claim}\label{clm:planar}
    $G'$ is a planar graph with maximum degree $6$, and degeneracy $2$.
\end{claim}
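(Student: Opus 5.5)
Planarity, the degree bound, and the degeneracy are checked separately; each follows from the construction and restrictions (1)--(3) on $\phi$.

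\textbf{Planarity.} Start from the given planar embedding of the linked incidence graph of $\phi$ and modify it locally. Each clause vertex $c_j$ becomes $c_j$ with a pendant vertex $c_j'$, drawn next to $c_j$. Each edge $x_ic_j$ of $G(\phi)$ is subdivided twice into the occurrence path $(x_{i\ast},a_{ij},b_{ij},c_j)$; subdivision preserves planarity. Each variable vertex $x_i$ becomes the diamond on $y_{i-1},x_{iT},x_{iF},y_i$. Here restriction (3) is essential: around $x_i$, the positive-occurrence edges lie inside the link and the negative ones outside. The link edges $x_{i-1}x_i$ and $x_ix_{i+1}$ are replaced by the chain through the shared vertices $y_{i-1}$ and $y_i$. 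Draw $x_{iT}$ on the interior side and $x_{iF}$ on the exterior side of this chain. Then all positive occurrence paths attach to $x_{iT}$ inside and all negative ones to $x_{iF}$ outside, with their cyclic order preserved. The chord $x_{iT}x_{iF}$ crosses the chain inside the diamond face, between $y_{i-1}$ and $y_i$. The link edges between clauses, and those between $x_n$ and $c_1$ and between $c_m$ and $x_1$, are simply deleted, which keeps the drawing planar. I expect this local redrawing of the variable gadget to be the main point that needs care, so I would include a figure. If the chord is awkward to draw, an alternative is to note that $G'$ minus the chords is planar by the argument above, and then add each chord inside the quadrilateral face $y_{i-1}x_{iT}y_ix_{iF}$. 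That quadrilateral is a face because no occurrence path enters it: the positive paths go to the interior region and the negative paths to the exterior region.

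\textbf{Maximum degree.} Count degrees directly. Each $y_i$ has degree at most $4$. Each $x_{iT}$ has $y_{i-1}$, $y_i$, $x_{iF}$ and at most $3$ occurrence neighbours by restriction (2), so degree at most $6$; the same holds for $x_{iF}$. Each $a_{ij}$ and $b_{ij}$ has degree $2$. Each $c_j$ has $c_j'$ plus at most $3$ occurrence neighbours by restriction (1), so degree at most $4$. Each $c_j'$ has degree $1$. Hence the maximum degree is at most $6$.

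\textbf{Degeneracy.} Exhibit an elimination order in which every vertex has at most $2$ remaining neighbours when removed.
1. Remove every $c_j'$ (degree $1$).
2. Remove all $a_{ij},b_{ij}$ (degree $2$).
3. Remove each $c_j$, which is now isolated.
4. What remains is the chain of diamonds. Remove $y_0$ (degree $2$), then for each $i$ in turn remove $x_{iT}$, $x_{iF}$ and $y_i$, each having at most $2$ remaining neighbours at its turn.

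So every subgraph has a vertex of degree at most $2$. The diamonds contain triangles, for example $y_{i-1}x_{iT}x_{iF}$, whose $2$-core is nonempty, so the degeneracy is exactly $2$.
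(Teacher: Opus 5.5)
Your proof is correct and follows the paper's argument: the same local replacement of each $x_i$ by the diamond with $x_{iT}$ on the inner and $x_{iF}$ on the outer side of the link (restriction (3) routes the positive occurrence paths to $x_{iT}$ and the negative ones to $x_{iF}$), and the same degree count. Your explicit elimination order fills in the degeneracy step, which the paper only asserts can be verified; the one detail you leave implicit is placing $y_0$ and $y_n$ on the link edges $c_mx_1$ and $x_nc_1$ next to $x_1$ and $x_n$, which the paper states explicitly.
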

\iftrue
\begin{proof}
\revrev{To see this, start with a planar embedding of the linked incidence graph.
For each $i \in [n-1]$, place $y_i$ on the link edge between $x_i$ and
$x_{i+1}$.
Place $y_0$ (resp.\ $y_n$) on the link sufficiently close to $x_1$
(resp.\ $x_n$) on the side opposite to $y_1$ (resp.\ $y_{n-1}$).
For each $i \in [n]$, embed the variable gadget by placing the edge
$x_{iT}x_{iF}$ in a small neighborhood of $x_i$ with $x_{iT}$ on the
inner side and $x_{iF}$ on the outer side of the link, and adding the
edges from $\{x_{iT}, x_{iF}\}$ to $\{y_{i-1}, y_i\}$.
Finally, replace the edges of $G(\phi)$ with the occurrence paths.
Recall that all edges incident to $x_i$ that correspond to positive literal occurrences lie on one side of the link, whereas all edges incident to $x_i$ that correspond to negative literal occurrences lie on the other side.
Hence, the occurrence paths corresponding to positive literals can be routed through $x_{iT}$, and those corresponding to negative literals can be routed through $x_{iF}$, without creating any crossings.
See~\Cref{fig:planar_construction} for an illustration.
This transformation preserves planarity, and thus $G'$ is planar.}

\revrev{Moreover, observe that the degree of each vertex of $G'$ is at most $6$ for vertices in $X_V$, at most $4$ for vertices in $Y\cup\{c_{j},c_{j}'\mid j\in [m]\}$, and at most $2$ for vertices in $A\cup B$.
Additionally, we can verify that the degeneracy of $G'$ is $2$.}
\end{proof}
\fi

\begin{figure}[t]
    \centering
    \begin{tikzpicture}
        \node[anchor=south west, inner sep=0] (img) at (0,0)
          {\includegraphics[width=0.9\textwidth]{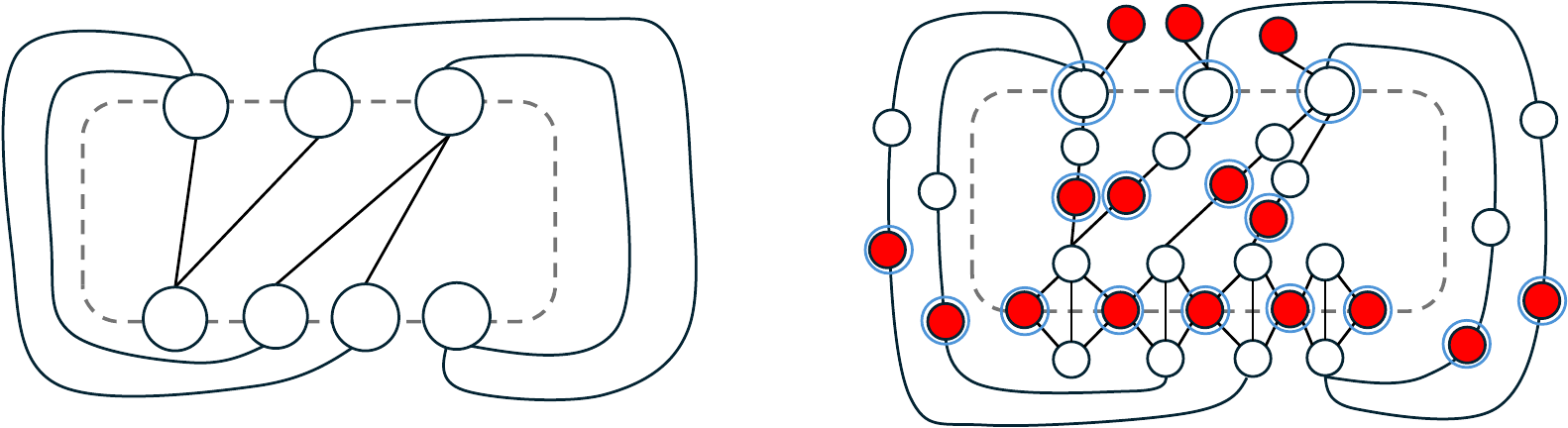}};

        {\small
        \begin{scope}[x={(img.south east)}, y={(img.north west)}]
            \node at (0.11,0.24) {$x_1$};
            \node at (0.175,0.24) {$x_2$};
            \node at (0.232,0.24) {$x_3$};
            \node at (0.293,0.24) {$x_4$};
            \node at (0.125,0.75) {$c_1$};
            \node at (0.205,0.75) {$c_2$};
            \node at (0.285,0.75) {$c_3$};
        \end{scope}
        }
    \end{tikzpicture}
    \caption{An illustration of the construction, where $\phi= (x_1\vee \neg x_2 \vee \neg x_3)\wedge (x_1\vee \neg x_4)\wedge (x_2\vee x_3\vee \neg x_4)$. (\emph{Left}.) A linked incidence graph of $\phi$. A link is drawn by a dashed line. (\emph{Right.}) A corresponding instance $I=(G',I_s,I_t,2)$ of \prb{SymISR}. Red vertices indicate $I_s$, and and blue circles indicate $I_t$. A variable gadget is arranged along the link, with $x_{iT}$ placed on the inner side of the link and $x_{iF}$ on the outer side for each $i \in [n]$.}
    \label{fig:planar_construction}
\end{figure}

\subsection{Correctness}

Since $k=2$, it suffices to show that $\phi$ is satisfiable if and only if there exists an intermediate independent set $I_1$ that is adjacent to both $I_s$ and $I_t$.

($\Rightarrow$) Assume that there is a satisfying assignment $\boldsymbol{x}\in \{\TRUE, \FALSE\}^n$ for $\phi$, which is a vector such that $i$-th component corresponds to the value of $x_i$.
We define $
X_1 =\{x_{iT}\mid x_i = \TRUE \text{ in } \boldsymbol{x}\}\cup \{x_{iF}\mid x_i = \FALSE \text{ in } \boldsymbol{x}\}
$.
For each occurrence of $x_i$ in $c_j$, let $\ell_{ij}$ denote the literal of $c_j$ corresponding to that occurrence.
We include $b_{ij}$ in $I_1$ if $\ell_{ij}$ is evaluated as true under $\boldsymbol{x}$, and include $a_{ij}$ otherwise.
Thus,
$
I_1 = X_1 \cup \{c_{j}'\mid j\in [m]\}\cup \{z_{ij}\mid x_i \text{ occurs in } c_j\},
$
where $z_{ij}=b_{ij}$ if $\ell_{ij}$ is evaluated as true under $\boldsymbol{x}$, and $z_{ij}=a_{ij}$ otherwise.
The following proposition verifies that $I_1$ is a proper intermediate set.
\begin{proposition}\label{lem:planarright}
    $I_1$ is an independent set of $G$, and both $G[I_s\bigtriangleup I_1]$ and $G[I_1\bigtriangleup I_t]$ are connected.
\end{proposition}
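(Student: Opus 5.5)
We verify the three claims directly from the definitions of $G'$, $I_s$, $I_t$ and $I_1$. Throughout, fix an occurrence of $x_i$ in $c_j$ and write $x_{i\ast}$ for the endpoint of its occurrence path in the variable gadget: $x_{i\ast}=x_{iT}$ for a positive occurrence and $x_{i\ast}=x_{iF}$ for a negative one. The key observation is that $\ell_{ij}$ is true under $\boldsymbol{x}$ if and only if $x_{i\ast}\in X_1$. Hence $z_{ij}=b_{ij}$ exactly when $x_{i\ast}\in X_1$, and $z_{ij}=a_{ij}$ otherwise.

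\emph{Independence.} We check every edge type of $G'$.
\begin{itemize}
\item The edge $x_{iT}x_{iF}$: exactly one of its endpoints lies in $X_1$.
\item Edges between $X_V$ and $Y$: $I_1$ contains no vertex of $Y$.
\item The edge $c_jc_j'$: $I_1$ contains $c_j'$ but not $c_j$.
\item The edge $b_{ij}c_j$: $c_j\notin I_1$.
\item The edge $a_{ij}b_{ij}$: exactly one of $a_{ij},b_{ij}$ lies in $I_1$.
\item The edge $x_{i\ast}a_{ij}$: if $x_{i\ast}\in X_1$, then $z_{ij}=b_{ij}$, so $a_{ij}\notin I_1$.
\end{itemize}

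\emph{Connectivity of $G[I_s\bigtriangleup I_1]$.} We have
\[
I_s\bigtriangleup I_1 = Y\cup X_1\cup\{a_{ij},b_{ij} \mid z_{ij}=b_{ij}\},
\]
since the $c_j'$ cancel and each $a_{ij}$ with $z_{ij}=a_{ij}$ also cancels. The vertices $Y\cup X_1$ form a path $y_0,x_{1\cdot},y_1,\dots,x_{n\cdot},y_n$, because every chosen $x_{i\cdot}$ is adjacent to both $y_{i-1}$ and $y_i$. Each pair $a_{ij},b_{ij}$ with $z_{ij}=b_{ij}$ satisfies $x_{i\ast}\in X_1$, so it hangs off that path via $x_{i\ast}a_{ij}b_{ij}$. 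This part needs no satisfiability.

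\emph{Connectivity of $G[I_1\bigtriangleup I_t]$.} Here
\[
I_1\bigtriangleup I_t = Y\cup X_1\cup\{c_j,c_j'\mid j\in[m]\}\cup\{a_{ij},b_{ij} \mid z_{ij}=b_{ij}\}.
\]
The $a_{ij}$ with $z_{ij}=a_{ij}$ cancel again. As before, $Y\cup X_1$ is a path, and each surviving pair $a_{ij},b_{ij}$ attaches to it through $x_{i\ast}$. Each $c_j'$ attaches to $c_j$. Finally, each $c_j$ must connect to the rest. Since $\boldsymbol{x}$ satisfies $c_j$, some literal $\ell_{ij}$ of $c_j$ is true, so $z_{ij}=b_{ij}$ and $b_{ij}c_j$ is an edge within the set. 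This links $c_j$ to the path through $b_{ij},a_{ij},x_{i\ast}$.

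The main point is this last step, where satisfiability is used. The rest is bookkeeping of which vertices cancel in the symmetric differences.
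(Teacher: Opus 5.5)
Your proposal is correct and follows essentially the same route as the paper. You compute the two symmetric differences, note that $Y\cup X_1$ induces a $y_0$--$y_n$ path with each surviving pair $a_{ij},b_{ij}$ attached at $x_{i\ast}$, and use satisfiability only to attach each $c_j$ through some $b_{ij}$; your edge-by-edge independence check is just a reorganization of the paper's argument that no $z_{ij}$ has a neighbor in $I_1$.
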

\iftrue
\begin{proof}
\revrev{We first claim that $I_1$ is an independent set.
By construction, clearly, $X_1 \cup \{c_{j}'\mid j\in [m]\}$ is an independent set.
It suffices to show that $z_{ij}$ has no neighbors in $I_1$.
If $z_{i,j} = b_{i,j}$, then the claim follows from the fact that $c_{j}\notin I_1$ and exactly one of $a_{ij}$ and $b_{ij}$ belongs to $I_1$.
If $z_{i,j} = a_{i,j}$, then the literal $\ell_{ij}$ is evaluated as false under $\boldsymbol{x}$.
Therefore, the vertex corresponding to $\ell_{ij}$, which is adjacent to $a_{ij}$, does not belong to $I_1$.
We conclude that $I_1$ is an independent set.}

\revrev{The remaining task is to show that both $I_s\bigtriangleup I_1$ and $I_1\bigtriangleup I_t$ are connected.
First, consider $I_s\bigtriangleup I_1$.
Observe that $
I_s\bigtriangleup I_1
=
Y \cup X_1
\cup \{a_{ij},b_{ij}\mid x_i \text{ occurs in } c_j \text{ and } z_{ij}=b_{ij}\}.
$
The set $Y\cup X_1$ induces a path from $y_0$ to $y_n$, since exactly one of $x_{iT}$ and $x_{iF}$ belongs to $I_1$ for each $i\in [n]$.}
\revrev{Consider two integers $i\in [n]$ and $j\in [m]$ such that $a_{ij},b_{ij}\in I_s\bigtriangleup I_1$.
By the construction of $I_1$, since $z_{ij}=b_{ij}$, the vertex $x_{i\ast}$ with $\ast \in \{T,F\}$ corresponding to the literal $\ell_{ij}$ belongs to $I_1$.
The three vertices $a_{ij}$, $b_{ij}$, and $x_{i\ast}$ induce a path in $G[I_s\bigtriangleup I_1]$, which shares the vertex $x_{i\ast}$ with $G[Y\cup X_1]$.
Thus, $G[I_s\bigtriangleup I_1]$ is connected.}

\revrev{Next, consider $I_1\bigtriangleup I_t$.
Observe that $
I_1\bigtriangleup I_t
=
Y \cup X_1
\cup \{a_{ij},b_{ij}\mid x_i \text{ occurs in } c_j \text{ and } z_{ij}=b_{ij}\}
\cup \{c_{j},c_{j}'\mid j\in [m]\}
$.
As claimed above, $Y \cup X_1
\cup \{a_{ij},b_{ij}\mid x_i \text{ occurs in } c_j \text{ and } z_{ij}=b_{ij}\}$ induces a connected subgraph.
Now let $j\in [m]$.
Since $\boldsymbol{x}$ is a satisfying assignment of $\phi$, there exists $i\in [n]$ such that $x_i$ occurs in $c_j$ and $\ell_{ij}$ is evaluated as true under $\boldsymbol{x}$.
For this choice of $i$, we have $z_{ij}=b_{ij}$ by the definition of $I_1$.
Hence, the vertices $b_{ij}$, $c_{j}$, $c_{j}'$ induce a path in $G[I_1\bigtriangleup I_t]$.
Therefore, $G[I_1\bigtriangleup I_t]$ is connected.}
\end{proof}
\fi

($\Leftarrow$) Assume that there exists an independent set $I_1$ adjacent to both $I_s$ and $I_t$.
We can show that there exists a truth assignment $\bm{x}\in \{\TRUE, \FALSE\}^n$ corresponding to $I_1$, assuming the existence of $I_1$.
The core idea of our proof is stated as follows.

\begin{lemma}\label{lem:clause}
For every $j\in [m]$, we have $c_{j}'\in I_1$.
\end{lemma}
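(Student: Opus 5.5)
The plan is a proof by contradiction that looks only at the clause gadget of a single index $j$. The basic tool is the connectivity of $G'[I_s\bigtriangleup I_1]$ and $G'[I_1\bigtriangleup I_t]$. The key structural fact is that $c_j'$ has exactly one neighbour, $c_j$, and the only other neighbours of $c_j$ are the vertices $b_{ij}$ of the occurrence paths ending at $c_j$. I assume $m\ge 2$. For $m=1$ the sets $I_s$ and $I_t$ differ exactly on the edge $c_1c_1'$, so the instance is trivially a yes-instance, and the lemma can fail (take $I_1=I_s\setminus\{c_1'\}$). Such trivial formulas can be excluded beforehand, for instance by duplicating a clause. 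Suppose, for contradiction, that $c_j'\notin I_1$ for some $j$. Since $c_j'\in I_s$, we get $c_j'\in I_s\bigtriangleup I_1$. I then split into two cases according to whether $c_j\in I_1$.

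\emph{Case $c_j\in I_1$.} Since $c_j\notin I_s$, both $c_j$ and $c_j'$ lie in $I_s\bigtriangleup I_1$. Because $I_1$ is independent and contains $c_j$, no $b_{ij}$ lies in $I_1$; since $B\cap I_s=\emptyset$, no $b_{ij}$ lies in $I_s\bigtriangleup I_1$ either. Hence $\{c_j,c_j'\}$ is a connected component of $G'[I_s\bigtriangleup I_1]$. Connectivity then forces $I_s\bigtriangleup I_1=\{c_j,c_j'\}$, that is, $I_1=(I_s\setminus\{c_j'\})\cup\{c_j\}$. Consequently $I_1\bigtriangleup I_t=\{c_\ell,c_\ell'\mid \ell\neq j\}$. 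For $m\ge 2$ this is a nonempty disjoint union of $m-1$ copies of $K_2$ that have no edges between them, so it is disconnected when $m\ge 3$. When $m=2$ it is a single $K_2$, and I have to recheck this subcase, since it gives a valid intermediate set; to handle it I would pad the formula so that $m\ge 3$. This is a contradiction.

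\emph{Case $c_j\notin I_1$.} Here $c_j'$ is isolated in $G'[I_s\bigtriangleup I_1]$, because its only neighbour $c_j$ lies in neither set. Hence $I_s\bigtriangleup I_1=\{c_j'\}$, so $I_1=I_s\setminus\{c_j'\}$. Then $I_1\bigtriangleup I_t=\{c_j\}\cup\{c_\ell,c_\ell'\mid \ell\ne j\}$. For $m\ge 2$ this set contains $c_j$ together with the pair $c_\ell c_\ell'$ for some $\ell\ne j$, and there are no edges between distinct clause gadgets. So it is disconnected, again a contradiction.

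The main obstacle I expect is not the case analysis, which is short, but the degenerate small-$m$ situations. As the $m=2$ subcase above shows, the lemma genuinely needs enough clauses; otherwise an $I_1$ that touches only the clause gadget already witnesses length $2$. I would resolve this at the construction level: assume w.l.o.g. that $m\ge 3$, for instance by repeating clauses, which preserves the restrictions on \textsc{LP-SAT} and planarity. With that assumption in place, the two cases give the lemma directly.
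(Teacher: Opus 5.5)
Your proof is essentially the paper's: assume $c_j'\notin I_1$, split on whether $c_j\in I_1$, use that $\{c_j'\}$ (resp.\ $\{c_j,c_j'\}$) is a whole component of $G'[I_s\bigtriangleup I_1]$ to force $I_1$, and conclude that $G'[I_1\bigtriangleup I_t]$ is disconnected. You are rightly more careful than the paper on two points. First, when $c_j\in I_1$ the forced set is $(I_s\setminus\{c_j'\})\cup\{c_j\}$, whereas the paper writes $I_s\setminus\{c_j,c_j'\}$. Second, the two contradictions need $m\ge 2$ and $m\ge 3$, respectively. For $m\le 2$ the lemma is false: the trivial upper bound $\numcomp(I_s\bigtriangleup I_t)$ equals $m$, so the constructed instance is a yes-instance even for an unsatisfiable input such as $(x_1)\wedge(\neg x_1)$. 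The paper's ``by the same argument as above'' glosses over this.

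The one thing to fix is your justification for assuming $m\ge 3$. Repeating a clause does not in general preserve restriction (3). Suppose a clause has positive literals on two distinct variables $x_a,x_b$. Then the four positive edges between $\{x_a,x_b\}$ and the two copies of the clause must all be chords inside the link cycle. On that cycle the variables and the clauses occupy two complementary arcs, so two of these chords interleave and cross. The paper's planarity argument for $G'$ relies on restriction (3), so this is not harmless. A cleaner fix is to note that a formula with at most two clauses can be decided in polynomial time, so the reduction may output a fixed yes- or no-instance in that case. Alternatively, pad with unit clauses $(z)$ on fresh variables $z$, placing $z$ and its clause consecutively on the link; this keeps all three restrictions and does not change satisfiability.
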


\iftrue
\begin{proof}
\revrev{For the sake of contradiction, assume that there is an integer $j\in [m]$ such that $c_{j}'\notin I_1$.
Since $I_1$ is independent and $c_{j}c_{j}'\in E(G')$, at most one of $c_{j}$ and $c_{j}'$ belongs to $I_1$.}

\revrev{First, suppose that $c_{j}\notin I_1$.
Then $c_{j}\notin I_s\bigtriangleup I_1$, whereas $c_{j}'\in I_s\bigtriangleup I_1$ due to the assumption that $c_{j}'\notin I_1$ (and $c_j'\in I_s$ by construction).
Since $c_{j}'$ is adjacent only to $c_{j}$, the vertex $c_{j}'$ is an isolated vertex in $G[I_s\bigtriangleup I_1]$.
Due to the connectivity of $G[I_s\bigtriangleup I_1]$, we have $I_s\bigtriangleup I_1 = \{ c_{j}' \}$, implying $I_1 = I_s \setminus \{ c_{j}' \}$.
This yields $I_1\bigtriangleup I_t = \{ c_{j}, c_{j}' \mid j \in [m] \} \setminus \{ c_{j}' \}$.
However, $G[I_1\bigtriangleup I_t]$ is not connected by the construction of $G'$, a contradiction.}

\revrev{Next, suppose that $c_{j}\in I_1$.
Then both $c_{j}$ and $c_{j}'$ belong to $I_s\bigtriangleup I_1$.
Moreover, it follows that $b_{ij} \notin I_1 \cap N(c_{j})$ for $i \in [n]$ and $j \in [m]$.
Since $b_{ij} \notin I_s$, we have $b_{ij} \notin I_s\bigtriangleup I_1$, and hence $c_{j}$ and $c_{j}'$ induce a connected component of $G[I_s\bigtriangleup I_1]$.
Due to the connectivity of $G[I_s\bigtriangleup I_1]$, we have $I_s\bigtriangleup I_1 = \{ c_{j}, c_{j}' \}$, implying $I_1 = I_s \setminus \{ c_{j}, c_{j}' \}$.
By the same argument as above, we can derive a contradiction.}
\end{proof}

\revrev{By \Cref{lem:clause}, we have $c_{j}'\in I_1$ for every $j\in [m]$.
Since $I_1$ is independent and $c_{j}c_{j}'\in E(G')$, it follows that $c_{j}\notin I_1$ for every $j\in [m]$.
Thus, $\{c_{j},c_{j}' \mid j \in [m]\} \subseteq I_1\bigtriangleup I_t$.}

\revrev{Fix $j\in [m]$.
To make $G[I_1\bigtriangleup I_t]$ connected, $I_1\bigtriangleup I_t$ must contain $b_{ij}$ for some $i \in [n]$.
Since $b_{ij} \notin I_t$, we have $b_{ij} \in I_1$.
We also have $a_{ij}\notin I_1$ because $I_1$ is independent and $a_{ij}b_{ij}\in E(G')$.
Hence, $\{a_{ij}, b_{ij}\}\subseteq I_s\bigtriangleup I_1$, since $a_{ij}\in I_s$ and $b_{ij}\notin I_s$.
Let $x_{i\ast}$ be the unique vertex in $\{x_{iT},x_{iF}\}$ adjacent to $a_{ij}$.}

\revrev{Assume that $x_{i\ast}\notin I_1$.
Then $x_{i\ast}\notin I_s\bigtriangleup I_1$, since $x_{i\ast}\notin I_s$.
Furthermore, $c_{j}\notin I_s$ and $c_{j}\notin I_1$ imply that $c_{j}\notin I_s\bigtriangleup I_1$.
Therefore, in $G[I_s\bigtriangleup I_1]$, the set $\{a_{ij}, b_{ij}\}$ induces a connected component of $G[I_s\bigtriangleup I_1]$. 
This implies that $I_1 = I_s \cup \{ b_{ij} \} \setminus \{a_{ij}\}$.
However, one can verify that $G[I_1\bigtriangleup I_t]$ is not connected, a contradiction.
Thus, $x_{i\ast}\in I_1$.}

\revrev{We now define an assignment $\boldsymbol{x}$ by setting $x_i=\TRUE$ if $x_{iT}\in I_1$ and $x_i=\FALSE$ if $x_{iF}\in I_1$; if neither belongs to $I_1$, we assign $x_i$ arbitrarily.
This is well-defined because $I_1$ is independent and $x_{iT}x_{iF}\in E(G')$ for every $i\in [n]$.
Let $j\in [m]$.
By the argument above, there exists $i\in [n]$ such that $x_i$ occurs in $c_j$, $b_{ij}\in I_1$, and the vertex in $\{x_{iT},x_{iF}\}$ adjacent to $a_{ij}$ belongs to $I_1$.
Hence, the corresponding literal in $c_j$ is evaluated as true by $\boldsymbol{x}$.
Therefore, every clause of $\phi$ is satisfied by $\boldsymbol{x}$; consequently, $\phi$ is satisfiable.
This concludes the proof of \Cref{the:planarhard}.}
\fi

Our reduction further yields the following corollaries as consequences.
\begin{corollary}
    \prb{SymISR} cannot be approximated to within a factor smaller than $3/2$ in polynomial time unless $\P= \NP$, even when given graph $G$ is planar with maximum degree $6$ and degeneracy $2$.
\end{corollary}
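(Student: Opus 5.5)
The plan is a gap argument built on the reduction behind \Cref{the:planarhard}. Given an instance $\phi$ of \textsc{LP-SAT} with restrictions (1)--(3), we construct in polynomial time the instance $(G', I_s, I_t, 2)$ of the previous subsection; $G'$ is planar with maximum degree $6$ and degeneracy $2$ by \Cref{clm:planar}. Let $\mathrm{OPT}$ denote the length of a shortest reconfiguration sequence from $I_s$ to $I_t$.

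The key steps are as follows.
\begin{enumerate}
\item If $\phi$ is satisfiable, then the $(\Rightarrow)$ direction (\Cref{lem:planarright}) gives $\mathrm{OPT}\le 2$.
\item If $\phi$ is unsatisfiable, then the $(\Leftarrow)$ direction shows that no intermediate set $I_1$ adjacent to both $I_s$ and $I_t$ exists, so $\mathrm{OPT}\ge 3$.
\item In that case we must also exclude $\mathrm{OPT}=1$. Here $I_s\bigtriangleup I_t=\{c_j,c_j'\mid j\in[m]\}$, which induces $m$ disjoint edges. So $G'[I_s\bigtriangleup I_t]$ is disconnected as soon as $m\ge 2$. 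We may assume $m\ge 2$, since formulas with a single clause are decided trivially in polynomial time.
\end{enumerate}

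Now suppose a polynomial-time algorithm approximates $\mathrm{OPT}$ within a factor $\alpha<3/2$. We read it as returning a valid reconfiguration sequence, or at least a value $\mathrm{ALG}$ with $\mathrm{OPT}\le \mathrm{ALG}\le \alpha\,\mathrm{OPT}$.
\begin{itemize}
\item On satisfiable instances, $\mathrm{ALG}\le 2\alpha<3$. Since lengths are integers, this means $\mathrm{ALG}\le 2$.
\item On unsatisfiable instances, $\mathrm{ALG}\ge \mathrm{OPT}\ge 3$.
\end{itemize}
Comparing $\mathrm{ALG}$ with $2$ therefore decides \textsc{LP-SAT} in polynomial time, which implies $\P=\NP$.

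There is no real obstacle here. The only care needed is to state the integrality argument precisely and to dispose of the degenerate case $\mathrm{OPT}=1$ as in step 3, so that the gap between $2$ and $3$ is genuine.
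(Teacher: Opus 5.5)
Your route is the one the paper intends: the corollary follows from the $k=2$ reduction of \Cref{the:planarhard}, using the gap between $\mathrm{OPT}\le 2$ for satisfiable formulas and $\mathrm{OPT}\ge 3$ for unsatisfiable ones. However, your handling of the degenerate cases is off by one, and step~2 is false for $m=2$. Here $I_s\bigtriangleup I_t=\{c_j,c_j'\mid j\in[m]\}$ is a set of $m$ disjoint edges, so the trivial bound gives $\mathrm{OPT}\le\numcomp(I_s\bigtriangleup I_t)=m$ for every formula. For $m=2$, take $I_1=(I_s\setminus\{c_1'\})\cup\{c_1\}$. It is independent, because the neighbours of $c_1$ other than $c_1'$ lie in $B$. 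Moreover $I_s\bigtriangleup I_1=\{c_1,c_1'\}$ and $I_1\bigtriangleup I_t=\{c_2,c_2'\}$ are both connected. Hence $\mathrm{OPT}=2$ for every two-clause formula, satisfiable or not, and the gap disappears.

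This is exactly where the $(\Leftarrow)$ direction you cite as a black box breaks down. In the second case of \Cref{lem:clause} ($c_j\in I_1$), one gets $I_1=(I_s\setminus\{c_j'\})\cup\{c_j\}$ and $I_1\bigtriangleup I_t=\{c_{j'},c_{j'}'\mid j'\neq j\}$. This set is connected when $m=2$, so that argument silently requires $m\ge 3$. The fix is easy: assume $m\ge 3$ rather than $m\ge 2$. Formulas with at most two clauses, each with at most three variables, can be decided by brute force in constant time. With that change your argument is complete. One small further point: if the approximation algorithm only outputs a real estimate rather than a sequence, threshold it at $3$ instead of invoking integrality.
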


\begin{corollary}
    \prb{SymISR} does not admit an {\XP} algorithm parameterized by $k$, even for planar graphs with maximum degree~$6$ and degeneracy~$2$, unless $\P=\NP$.
\end{corollary}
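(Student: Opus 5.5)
The plan is to derive this corollary directly from \Cref{the:planarhard}, which states that \prb{SymISR} is \NP-complete on planar graphs of maximum degree $6$ and degeneracy $2$ already for the fixed value $k=2$. The argument is the standard observation that \NP-hardness of a parameterized problem for a single constant value of the parameter rules out \XP\ membership unless $\P=\NP$.

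\textbf{Step 1: recall the definition of \XP.} An \XP\ algorithm for \prb{SymISR} parameterized by $k$ decides every instance $(G,I_s,I_t,k)$ in time $f(k)\cdot |G|^{g(k)}$, for some computable functions $f$ and $g$.

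\textbf{Step 2: restrict to $k=2$.} Suppose such an algorithm $\mathcal{A}$ exists for the class of planar graphs with maximum degree~$6$ and degeneracy~$2$. Run $\mathcal{A}$ only on instances with $k=2$. Its running time is then $f(2)\cdot |G|^{g(2)}$. Since $f(2)$ and $g(2)$ are constants, this is a polynomial bound, so \prb{SymISR} with $k=2$ on this graph class is in \P.

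\textbf{Step 3: conclude.} By \Cref{the:planarhard}, that restricted problem is \NP-complete. Combined with Step 2, this gives $\P=\NP$, which is the claimed contrapositive.

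The only point that needs care is that the reduction behind \Cref{the:planarhard} produces instances lying entirely in the stated class. \Cref{clm:planar} guarantees this, since every constructed graph $G'$ is planar with maximum degree $6$ and degeneracy $2$, and the reduction always sets $k=2$. No further obstacle is expected.
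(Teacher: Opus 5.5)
Your proposal is correct and matches the paper's intended argument: the paper states this corollary as an immediate consequence of \Cref{the:planarhard}, which gives \NP-completeness on planar graphs of maximum degree~$6$ and degeneracy~$2$ for the fixed value $k=2$. An $f(k)\cdot n^{g(k)}$-time algorithm would therefore solve that slice in polynomial time, which is exactly the argument you give.
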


\section{On the Subclasses of Perfect Graphs}\label{sec:perfect}
\subsection{NP-completeness and inapproximability for split graphs}\label{sec:split}

A graph $G$ is \emph{split} if the vertex set of $G$ can be partitioned into exactly one clique and exactly one independent set.
This section shows the following theorem.

\begin{theorem} \label{thm:NPhard_split}
    \prb{SymISR} is \NP-hard for split graphs.
\end{theorem}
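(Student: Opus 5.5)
I would prove \Cref{thm:NPhard_split} by a reduction from \textsc{Set Cover}. I choose this source because it also gives the later \W[2]-hardness and the $(1-\varepsilon)\ln n$ inapproximability, provided the reduction preserves the solution size up to an additive constant. Let the instance be a universe $U=\{u_1,\dots,u_N\}$, a family $\mathcal{S}=\{S_1,\dots,S_M\}$ and a budget $k$.

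\textbf{Construction.} I build a split graph $G$ as follows.
\begin{itemize}
\item The clique side is $K=\{s_1,\dots,s_M\}$, one vertex per set.
\item The independent side contains, for each element $u_e$, a block of $r$ twin copies $u_e^1,\dots,u_e^r$, where $r$ is large (say $r=M+N+2$).
\item A copy $u_e^p$ is adjacent to $s_i$ exactly when $u_e\in S_i$.
\item I set $I_s=\emptyset$, $I_t=\{u_e^p\}$ (all copies), and the target length $k+1$.
\end{itemize}

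\textbf{Completeness.} Suppose $S_{i_1},\dots,S_{i_k}$ is a cover.
\begin{itemize}
\item Step $1$ flips $\{s_{i_1}\}$.
\item Step $t$, for $2\le t\le k$, flips $\{s_{i_{t-1}},s_{i_t}\}$ together with all copies of elements of $S_{i_{t-1}}$ that are neither yet added nor adjacent to $s_{i_t}$. This set is connected, since it is a star around $s_{i_{t-1}}$ plus the clique edge $s_{i_{t-1}}s_{i_t}$.
\item The last step flips $s_{i_k}$ together with the remaining copies adjacent to it.
\end{itemize}
Independence holds at every step because at most one clique vertex is present, and no added element is adjacent to it. This gives $k+1$ steps.

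\textbf{Soundness, structural lemma.} This is the main step. Consider one flip $I\to J$ with $D=I\bigtriangleup J$ connected. In a split graph an independent set contains at most one clique vertex, so $D$ contains at most two clique vertices. If $D$ contains at least two vertices, connectivity forces every independent-side vertex of $D$ to be adjacent to some clique vertex of $D$. I would do a short case analysis on which clique vertices enter and which leave.
\begin{itemize}
\item If $s_j$ enters, the element copies newly added must avoid $N(s_j)$ for independence.
\item Hence every newly added copy lies in $N(s_i)$ for the unique clique vertex $s_i$ that leaves.
\item If no clique vertex leaves, $D$ is a single vertex.
\end{itemize}
So each step either adds copies only from $N(s_i)$ for a single $i$, or adds exactly one vertex. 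Moreover, a step that adds copies from $N(s_i)$ requires $s_i\in I$, so it must be preceded by a step in which $s_i$ entered.

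\textbf{Soundness, counting.} Take a sequence of length at most $k+1\le M+1$. Because $r>M+1$, it cannot add all copies of any element by singleton steps alone. Hence every element $u_e$ has some copy added inside $N(s_i)$ for some $s_i$ that left at some step, and therefore $u_e\in S_i$. The clique vertices that leave form a cover. The first step adds no element copy in $N(\cdot)$: it either flips a single vertex, or makes a clique vertex enter with only non-neighbours, which again requires a single-vertex flip. Hence at most $k$ steps remove a clique vertex, and the cover has size at most $k$.

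\textbf{Transferring the hardness.} The additive $+1$ and the polynomial blow-up $n=O(M+N\cdot r)$ are what transfer the \W[2]-hardness and the logarithmic inapproximability. For the latter, $\ln n=O(\ln(NM))$, and I may need to choose $r$ more carefully (e.g.\ $r$ slightly above $k+1$) so that $\ln n$ stays $(1+o(1))\ln N$.

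\textbf{Main obstacle.} The hardest part is the soundness bookkeeping. I must make sure that removals of element copies, and repeated re-entries of the same clique vertex, cannot help. I would handle this by charging each element to the first step in which one of its copies is added non-singly, and that charge always goes to a leaving clique vertex.
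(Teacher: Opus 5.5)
\textbf{Completeness fails as written.} Your soundness direction is correct. Since $I_s=\emptyset$, the first flip is a single vertex. A copy added in a flip with $|D|\ge 2$ must be adjacent to the clique vertex \emph{leaving} in that flip. The choice $r>k+1$ forces every element to get such a copy, so the at most $k$ leaving events yield a cover. One intermediate claim is misstated: if no clique vertex leaves, $D$ need not be a single vertex, since $D=\{s_j\}\cup X$ with $X\subseteq I\cap N(s_j)$ removed is possible. What you actually use, that such a flip adds no copy unless $|D|=1$, is true.

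The genuine gap is in completeness: your sequence is not a sequence of independent sets. You check the copies added at step $t$ only against the entering vertex $s_{i_t}$. But those copies stay in every later set, and a clique vertex entering later may be adjacent to them. Take the minimal cover $S_{i_1}=\{a,d\}$, $S_{i_2}=\{b\}$, $S_{i_3}=\{a,c\}$. Step $2$ adds the copies of $a$ and $d$. When $s_{i_3}$ enters at step $3$, the copies of $a$ are still present, so $I_3$ is not independent. Reordering the cover does not rescue your rule in general: for the cover $\{a,b,x\},\{b,c,y\},\{a,c,z\}$ every order fails.

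\textbf{The fix.} Add each element when the \emph{last} set of the ordered cover containing it leaves. For $2\le t\le k$, step $t$ flips $\{s_{i_{t-1}},s_{i_t}\}$ together with the copies of elements of $S_{i_{t-1}}\setminus(S_{i_t}\cup\dots\cup S_{i_k})$. The final step flips $s_{i_k}$ with the copies of the remaining elements of $S_{i_k}$. Connectivity is as before and every element is added exactly once. Every copy present after step $t$ avoids $s_{i_t}$ and all clique vertices entering later, so independence holds throughout.

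\textbf{Comparison with the paper.} With this fix, your argument is essentially the paper's reduction run backwards, with \textsc{Set Cover} in place of the equivalent \textsc{Dominating Set}. The paper goes from $V_2\cup\{v_0^1\}$ to $\{v_0^1\}$, and in that removal direction independence is automatic: each entering $v_i^1$ removes all of its remaining neighbours at once.

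The real difference is how singleton flips on the independent side are neutralised. You use $r>k+1$ twins so that singleton flips can never complete an element. The paper keeps one copy per vertex and uses closed neighbourhoods, charging a singleton removal of $v_i^2$ to $v_i$ itself. For NP-hardness both work. The paper's instance has only $2n+1$ vertices, which makes its later $(1-\varepsilon)\ln n$ transfer routine. With twins your instance has $M+Nr$ vertices. To decode an approximate sequence you would also need $r$ to exceed that sequence's length, not just $k+1$.
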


\Cref{thm:NPhard_split} is proved by reducing \textsc{Dominating Set}, which is well-known to be NP-hard, to \textsc{SymISR} on split graphs.
A vertex set $D$ of a graph $G$ is a \emph{dominating set} if $\bigcup_{v\in D} N[v] = V(G)$.
Given a graph $G$ and a non-negative integer $k$, the \textsc{Dominating Set} problem asks whether $G$ has a dominating set of size at most $k$. 
Since every dominating set contains an inclusion-wise minimal dominating set as a subset, any dominating set of size at most $k$ contains a minimal dominating set of size at most $k$. Therefore, asking whether $G$ has a dominating set of size at most $k$
is equivalent to asking whether it has an inclusion-wise minimal dominating set of size at most $k$.
\revrev{Moreover, there is a simple algorithm that runs in $O(|V(G)|^k)$ time for a fixed solution size $k$.}

\revrev{
We transform an instance $(G, k)$ of \textsc{Dominating Set} to an instance $(G', I_s, I_t, k')$ of \textsc{SymISR}.
The number of vertices of $G$ is denoted by $n$, and let $V(G) = \{ v_1, v_2, \dots, v_n \}$.
We define two copies $V_1$ and $V_2$ of $V(G)$.
Specifically, for a vertex $v_i \in V(G)$ for $i \in [n]$, the vertices $v_i^1 \in V_1$ and $v_i^2 \in V_2$ are defined.
We also introduce a vertex $v_0^1$.
The vertex set of $G'$ is represented as $V_1 \cup V_2 \cup \{v_0^1\}$.
Next, we define the edge set of $G'$.
If $v_j \in N[v_i]$ for integers $i,j \in [n]$, then $v_i^1$ and $v_j^2$ are joined by an edge.
Furthermore, an edge is introduced for every pair of distinct vertices in $V_1 \cup \{v_0^1\}$.
Finally, set $I_s = V_2 \cup \{v_0^1\}$, $I_t = \{v_0^1\}$, and $k' = k+1$.
This completes the construction of $(G', I_s, I_t, k')$.
An example of our construction is shown in \Cref{fig:split}.
}
\revrev{
Observe that the reduction runs in polynomial time, \revrev{and $G'$ is a split graph with $2|V(G)|+1$ vertices}, since $V^1\cup \{v^1_0\}$ is a clique of $n+1$ vertices and $V^2$ is an independent set of $n$ vertices. }

\revrev{In the following, we show the lemma stated below, which completes the reduction.
\begin{lemma}\label{lem:equivalence_dom_SymISR}
    $G$ has a dominating set of size at most $k$, if and only if there is a reconfiguration sequence of length at most $k'$ from $S$ to $T$ of $G'$.
\end{lemma}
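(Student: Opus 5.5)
The plan is to prove the two directions separately, using the structure of $G'$. Since $V_1\cup\{v_0^1\}$ is a clique, every independent set of $G'$ contains at most one vertex of it. A vertex of $V_2$ has neighbors only in $V_1$, and $v_0^1$ has no neighbor in $V_2$. Throughout, I read $S=I_s$ and $T=I_t$.

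($\Rightarrow$) Let $D=\{d_1,\dots,d_k\}$ be a dominating set of $G$; if it is smaller, shorten the sequence accordingly. I build the sequence by letting the single clique token walk $v_0^1\to d_1^1\to\dots\to d_k^1\to v_0^1$ and sweeping away $V_2$-vertices as it goes.
\begin{itemize}
\item Step $1$ flips $\{v_0^1,d_1^1\}\cup\{w^2 : w\in N[d_1]\}$.
\item Step $i\in[2,k]$ flips $\{d_{i-1}^1,d_i^1\}\cup R_i$, where $R_i$ is the set of vertices $w^2$ with $w\in N[d_i]$ that are still present.
\item Step $k+1$ flips $\{d_k^1,v_0^1\}$.
\end{itemize}
Each flipped set is connected, because the two clique vertices are adjacent and every removed $w^2$ is adjacent to $d_i^1$. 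Each resulting set is independent, because all $V_2$-neighbors of the newly added $d_i^1$ are removed in the same step. Since $D$ dominates $G$, every vertex of $V_2$ has been removed by the end, so the final set is $\{v_0^1\}=I_t$ after $k+1=k'$ steps.

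($\Leftarrow$) Take a sequence $I_0,\dots,I_\ell$ with $\ell\le k+1$. Call step $i$ a \emph{singleton step} if $I_{i-1}\bigtriangleup I_i$ is a single vertex of $V_2$. Let $s$ be the number of singleton steps.
\begin{itemize}
\item Let $U\subseteq V(G)$ be the set of $u$ such that $u^1$ lies in some $I_i$. Put $r=|U|$.
\item Let $W$ be the set of vertices $w$ such that $w^2$ is removed in some singleton step.
\end{itemize}
I set $D=U\cup W$ and prove two things: that $D$ dominates $G$, and that $|D|\le r+s\le k$.

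For domination, every $w^2\in V_2$ is removed in some step $i$. If that step is not a singleton step, the connected set $I_{i-1}\bigtriangleup I_i$ contains $w^2$ and at least one other vertex. So it contains a neighbor of $w^2$, which must be some $u^1\in V_1$. That $u^1$ lies in $I_{i-1}$ or $I_i$, so $u\in U$ and $w\in N[u]$. Otherwise $w\in W$.

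For the counting, track the occupant of the clique $V_1\cup\{v_0^1\}$, which is $v_0^1$ at both ends of the sequence. A step can make at most one clique vertex newly occupied, and singleton steps involve no clique vertex. Each $u\in U$ must become occupied at some step, and if $r\ge1$ then $v_0^1$ must become occupied again at a later step. Hence there are at least $r+1$ non-singleton steps, giving $r+1+s\le\ell\le k+1$.

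The main obstacle I expect is the degenerate case $r=0$. There the clique token never moves, so every $V_2$-vertex is removed by a singleton step, forcing $s=n\le k+1$. The bound then only gives $n\le k+1$, which is one too weak: for edgeless $G$ with $n=k+1$ the equivalence in \Cref{lem:equivalence_dom_SymISR} actually fails. I would handle this by assuming without loss of generality that $G$ has an edge $uv$, since \textsc{Dominating Set} is trivial on edgeless graphs. Then $V(G)\setminus\{u\}$ is a dominating set of size $n-1\le k$. Beyond this case, the remaining care is in checking that independence really forbids two clique vertices and that the "newly occupied" steps are pairwise distinct.
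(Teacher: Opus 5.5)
Your proof is correct and follows the paper's plan. The forward direction is the paper's sequence $I_i=(I_{i-1}\setminus N(v_i^1))\cup\{v_i^1\}$ followed by the flip back to $v_0^1$. In the backward direction, your $U\cup W$ is the paper's $D_1\cup D_2$, and your domination argument is the one in \Cref{lem:split_ifdirection}.

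Your counting differs from the paper's and is more careful. The paper bounds $|D_1\cup D_2|\le\ell$ for a sequence with $\ell+1$ steps. It does so via the claims $|I_j\cap D_1|\le 1$ and $|I_j\cap D_2|\le 1$, restricted to $j\in[\ell]$. The second claim fails as written: several later singleton removals leave several $D_2$-vertices in a single $I_j$. Nothing in the paper explains why one of the $\ell+1$ steps contributes no vertex to $D$. Your observation supplies this missing $-1$: once some $u^1$ has been occupied, the return of $v_0^1$ is a separate clique-entering step that adds nothing to $D$.

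Your degenerate case is also genuine. For edgeless $G$ with $n=k+1$, deleting the vertices of $V_2$ one by one gives a sequence of length $n=k'$. Yet every dominating set has size $n>k$. So the lemma as stated needs the assumption that $G$ has an edge, which is harmless for the hardness results.

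One minor imprecision: when $r=0$, the vertex $v_0^1$ can still leave and return through flips of $\{v_0^1\}$ alone. So you only get $s\ge n$ rather than $s=n$. This does not matter, since $n\le s\le\ell\le k+1$ is all you use.
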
}

\begin{figure}
    \centering
    \begin{tikzpicture}
\begin{scope}[xshift = -1cm, yshift = -1.5cm]
\node[draw=black, circle, minimum size=6mm, inner sep=0pt] (v1) at (-0.5,1.2){$v_1$};
\node[draw=black, circle, minimum size=6mm, inner sep=0pt] (v2) at (-1.5,0.3){$v_2$};
\node[draw=black, circle, minimum size=6mm, inner sep=0pt] (v3) at (0.5,0.3){$v_3$};
\node[draw=black, circle, minimum size=6mm, inner sep=0pt] (v4) at (-1.5,-1){$v_4$};
\node[draw=black, circle, minimum size=6mm, inner sep=0pt] (v5) at (0.5,-1){$v_5$};
\draw (v1.south west) -- (v2.north east);
\draw (v1.south east) -- (v3.north west);
\draw (v2.south) -- (v4.north);
\draw (v2.east) -- (v3.west);
\draw (v4.east) -- (v5.west);
\node (G) at (-1, 2) {\Large $G$};
\node (k) at (0,2) {$k=2$};
    
\end{scope}
\begin{scope}[xshift = 1.2cm, yshift = -1.3cm]
\node (G) at (5, 2) {\Large $G'$};
\node (G) at (6, 2) {$k'=3$};

\node[draw, very thick, dotted, rectangle, rounded corners=7pt, minimum width=7.5cm, minimum height=2cm, label=left:{\Large $V^1\cup \{u\}$}] (S1) at (5,0.6) {};

\node[label=left:{\Large $V^2$}] (V^2) at (1.2,-1.3) {};

\node[draw=black, fill = red, fill opacity = 0.2, text opacity=1, circle, minimum size=6mm, inner sep=0pt] (z1) at (5,1.1){$v^1_0$};
\node[draw=blue, circle, minimum size=7.5mm, inner sep=0pt] (z2) at (5,1.1){};
\node[draw=black, circle, minimum size=6mm, inner sep=0pt] (x1) at (1.5+0.5,0.3){$v^1_1$};
\node[draw=black, circle, minimum size=6mm, inner sep=0pt] (x2) at (3+0.5,0.3){$v^1_2$};
\node[draw=black, circle, minimum size=6mm, inner sep=0pt] (x3) at (4.5+0.5,0.3){$v^1_3$};
\node[draw=black, circle, minimum size=6mm, inner sep=0pt] (x4) at (6+0.5,0.3){$v^1_4$};
\node[draw=black, circle, minimum size=6mm, inner sep=0pt] (x5) at (7.5+0.5,0.3){$v^1_5$};

\begin{scope}[xshift = 1.5cm]

\node[draw=black, fill = red, fill opacity = 0.2, text opacity=1, circle, minimum size=6mm, inner sep=0pt] (y11) at (0,-1.4){$v^2_1$};

\begin{scope}[xshift = 1.7cm]
    \node[draw=black, fill = red, fill opacity = 0.2, text opacity=1, circle, minimum size=6mm, inner sep=0pt] (y21) at (0,-1.4){$v^2_2$};
\end{scope}

\begin{scope}[xshift = 3.4cm]
    \node[draw=black, fill = red, fill opacity = 0.2, text opacity=1, circle, minimum size=6mm, inner sep=0pt] (y31) at (0,-1.4){$v^2_3$};
\end{scope}

\begin{scope}[xshift = 5.1cm]
    \node[draw=black, fill = red, fill opacity = 0.2, text opacity=1, circle, minimum size=6mm, inner sep=0pt] (y41) at (0,-1.4){$v^2_4$};
\end{scope}
\begin{scope}[xshift = 6.8cm]
    \node[draw=black, fill = red, fill opacity = 0.2, text opacity=1, circle, minimum size=6mm, inner sep=0pt] (y51) at (0,-1.4){$v^2_5$};
\end{scope}
\end{scope}

\draw[very thick] (x1.south) -- (y11.north);
\draw[very thick] (x1.south) -- (y21.north);
\draw[very thick] (x1.south) -- (y31.north);

\draw[very thick] (x2.south) -- (y11.north);
\draw[very thick] (x2.south) -- (y21.north);
\draw[very thick] (x2.south) -- (y31.north);
\draw[very thick] (x2.south) -- (y41.north);

\draw[very thick] (x3.south) -- (y11.north);
\draw[very thick] (x3.south) -- (y21.north);
\draw[very thick] (x3.south) -- (y31.north);

\draw[very thick] (x4.south) -- (y21.north);
\draw[very thick] (x4.south) -- (y41.north);
\draw[very thick] (x4.south) -- (y51.north);

\draw[very thick] (x5.south) -- (y41.north);
\draw[very thick] (x5.south) -- (y51.north);
    
\end{scope}

\end{tikzpicture}
    \caption{A construction of instance $(G, I_s,I_t, k')$.
    \emph{(Left.)} An instance of \prb{Dominating Set}, where $G$ consists of $5$ vertices and $5$ edges, and $k=2$. There is a dominating set $\{v_1,v_4\}$ of size $2$.
    \emph{(Right.)} \revrev{A simple description of resulting instance. A dotted rectangle indicates a clique. We highlight $I_s$ and $I_t$ by circles filled with red and colored by blue, respectively.}
    }
    \label{fig:split}
\end{figure}

We first show the only-if direction.
Suppose that $G$ has a dominating set $D$ of size at most $k$.
Without loss of generality, assume that $D = \{v_1, v_2, \dots, v_\ell \}$, where $\ell \in [k]$. 
We define $I_0 = V_2 \cup \{v_0^1\}$ and $I_i = (I_{i-1}\setminus N(v_i^1)) \cup \{ v_i^1 \}$ for each $i \in [\ell]$.
Consider a sequence $\sigma = \langle I_s = I_0, I_1, \dots, I_\ell, I_{\ell+1} = I_t \rangle$ of length $\ell+1 \le k+1 \le k'$.
The following two lemmas establish that $\sigma$ is a solution to $(G', I_s, I_t, k')$.

\begin{lemma}\label{lem:splitright_ind}
    For $i \in \{0,\dots, \ell+1\}$, $I_i$ is an independent set of $G'$.
\end{lemma}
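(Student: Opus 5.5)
We argue by induction on $i$, tracking $I_i$ explicitly. The claim to establish is
\[
I_i = \{v_i^1\} \cup \Bigl(V_2 \setminus \bigcup_{h=1}^{i} N(v_h^1)\Bigr) \quad \text{for } i\in[\ell],
\]
with $I_0 = V_2\cup\{v_0^1\}$ and $I_{\ell+1}=I_t=\{v_0^1\}$.

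\textbf{Base cases.} The sets $I_0$ and $I_{\ell+1}$ are independent. Indeed, $V_2$ is an independent set in $G'$, and $v_0^1$ has no neighbour in $V_2$, since edges between $V_1\cup\{v_0^1\}$ and $V_2$ only come from pairs $v_i^1v_j^2$ with $i\ge 1$. A single vertex is trivially independent.

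\textbf{Inductive step.} For $1\le i\le \ell$, the definition gives $I_i=(I_{i-1}\setminus N(v_i^1))\cup\{v_i^1\}$. We check what the removal does.

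First, $N(v_i^1)$ contains every vertex of $V_1\cup\{v_0^1\}$ other than $v_i^1$, because this set is a clique. Hence $I_i\cap(V_1\cup\{v_0^1\})=\{v_i^1\}$; in particular, $v_0^1$ is removed at step $1$ and the previous clique vertex $v_{i-1}^1$ is removed at step $i$.

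Second, every vertex of $V_2$ adjacent to $v_i^1$ is removed. So the only vertex of $I_i$ outside $V_2$ is $v_i^1$, and it has no neighbour left in $I_i\cap V_2$.

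Since $I_i\cap V_2\subseteq V_2$ is independent, $I_i$ is independent. This also yields the displayed formula for $I_i$.

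\textbf{Remaining points.} The statement only asks for independence, so the proof ends here. The lemma does not need the domination property of $D$; that property is presumably used in the companion lemma. There, it ensures $I_\ell\cap V_2=\emptyset$, so that the final symmetric difference $I_\ell\bigtriangleup I_t=\{v_\ell^1,v_0^1\}$ is connected.

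I expect no real obstacle here. The only point needing care is to note that $N(v_i^1)$ includes the whole rest of the clique, so stale clique vertices never coexist with $v_i^1$.
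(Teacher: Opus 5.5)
Your proof is correct and follows the paper's induction on $i$, with the same base cases $I_0$ and $I_{\ell+1}=\{v_0^1\}$. The paper's inductive step is slightly leaner: $(I_{i-1}\setminus N(v_i^1))\cup\{v_i^1\}$ is independent for any independent $I_{i-1}$, with no need for the clique structure. Your explicit formula for $I_i$ is extra but harmless, and it is exactly what the paper later uses for $I_\ell$ in the connectivity lemma.
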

\iftrue
\begin{proof}
\revrev{
    The proof proceeds by induction on $i$. 
    By the construction of $G'$, $I_0 = V_2 \cup \{v_0^1\}$ is an independent set of $G'$.
    Suppose that $I_{i-1}$ is an independent set of $G'$ for $i \in [\ell]$.
    Clearly, $I_{i-1}\setminus N(v_i^1)$ is an independent set of $G'$.
    The vertex $v_i^1$ has no neighbor in $I_{i-1}\setminus N(v_i^1)$.
    Thus, $I_i = (I_{i-1}\setminus N(v_i^1)) \cup \{ v_i^1 \}$ is also an independent set of $G'$.
    Moreover, $I_{\ell+1} = I_t = \{v_0^1\}$ is obviously an independent set.
    }
\end{proof}
\fi

\begin{lemma}[$\ast$]\label{lem:splitright_con}
    For $i \in [\ell+1]$, the induced subgraph $G[I_{i-1} \bigtriangleup I_i]$ is connected.
\end{lemma}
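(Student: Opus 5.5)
We compute each symmetric difference explicitly and check connectivity directly; the graph is of course $G'$. We split into the steps $i\in[\ell]$ and the final step $i=\ell+1$. Throughout we use that $V_1\cup\{v_0^1\}$ is a clique and that, by construction, $N(v_i^1)\cap V_2=\{v_j^2 \mid v_j\in N[v_i]\}$.

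For $i\in[\ell]$, first observe that $v_i^1\notin I_{i-1}$. Indeed, $I_{i-1}\cap V_1=\{v_1^1,\dots,v_{i-1}^1\}$, while $v_0^1$ is removed at step $1$ since it is adjacent to $v_1^1$. By definition, $I_{i-1}\bigtriangleup I_i=\{v_i^1\}\cup (I_{i-1}\cap N(v_i^1))$. Every vertex of $I_{i-1}\cap N(v_i^1)$ is a neighbor of $v_i^1$. Hence $G'[I_{i-1}\bigtriangleup I_i]$ contains a star centered at $v_i^1$ spanning all its vertices, and it is therefore connected. This step needs nothing about domination.

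For $i=\ell+1$, we determine $I_\ell$. Its intersection with the clique side is $I_\ell\cap(V_1\cup\{v_0^1\})=\{v_1^1,\dots,v_\ell^1\}$, because $v_0^1$ was removed at the first step (assuming $\ell\ge 1$). Its intersection with $V_2$ is $I_\ell\cap V_2=V_2\setminus\bigcup_{i\in[\ell]}N(v_i^1)$, which is $V_2\setminus\{v_j^2\mid v_j\in\bigcup_{i\in[\ell]}N[v_i]\}$. This is empty because $D$ is dominating. So $I_\ell=\{v_1^1,\dots,v_\ell^1\}$, and $I_\ell\bigtriangleup I_t=\{v_0^1,v_1^1,\dots,v_\ell^1\}$. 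This set lies inside the clique $V_1\cup\{v_0^1\}$, so it induces a complete graph and is connected.

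The only delicate point is the degenerate case. If $G$ is nonempty, a dominating set has $\ell\ge1$, so the argument above applies. If $n=0$, then $I_s=I_t$ and the claim is vacuous. The main obstacle is thus just correctly identifying $I_\ell$, and that is exactly where the domination property enters.
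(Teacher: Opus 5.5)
Your strategy matches the paper's. For $i\in[\ell]$ the symmetric difference is $\{v_i^1\}\cup(I_{i-1}\cap N(v_i^1))$, which a star at $v_i^1$ spans. For the last step, domination forces $I_\ell\cap V_2=\emptyset$.

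However, your description of the clique side is wrong. Since $V_1\cup\{v_0^1\}$ is a clique, $N(v_i^1)$ contains $v_{i-1}^1$. So the update $I_i=(I_{i-1}\setminus N(v_i^1))\cup\{v_i^1\}$ deletes $v_{i-1}^1$, just as step $1$ deletes $v_0^1$. You used the clique adjacency for $v_0^1$ but not for the $v_j^1$. Hence $I_{i-1}\cap(V_1\cup\{v_0^1\})$ is $\{v_{i-1}^1\}$ (or $\{v_0^1\}$ when $i=1$), not $\{v_1^1,\dots,v_{i-1}^1\}$. The set you wrote is not even independent once it has two elements, which contradicts \Cref{lem:splitright_ind}. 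Consequently $I_\ell=\{v_\ell^1\}$ rather than $\{v_1^1,\dots,v_\ell^1\}$, and $I_\ell\bigtriangleup I_t=\{v_0^1,v_\ell^1\}$, which is what the paper obtains.

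Neither conclusion actually depends on the wrong formula. First, $v_i^1\notin I_{i-1}$ still holds, because the vertices of $D$ are distinct. Second, the last step only needs $I_\ell\cap V_2=\emptyset$, which you derived correctly from domination. Then $I_\ell\bigtriangleup I_t$ is a nonempty subset of the clique $V_1\cup\{v_0^1\}$: it contains $v_0^1$. So the repair is easy. Either replace your explicit descriptions of $I_{i-1}\cap V_1$ and $I_\ell$ with the correct ones, or drop them in favour of this containment argument. As written, though, the step you yourself call the crux --- identifying $I_\ell$ --- is stated incorrectly.
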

\iftrue
\begin{proof}
\revrev{First, consider the case $i \in [\ell]$.
    Recall that $I_{i-1} \bigtriangleup I_i = (I_{i-1} \setminus I_i) \cup (I_{i} \setminus I_{i-1})$.
    Since $I_{i-1} \setminus I_i \subseteq N(v_i^1)$ and $I_{i} \setminus I_{i-1} = \{ v_i^1 \}$, we have $I_{i-1} \bigtriangleup I_i \subseteq N[v_i^1]$.
    Therefore, $G[I_{i-1} \bigtriangleup I_i]$ is connected for $i \in [\ell]$.
    }

    \revrev{Consider next the case $i = \ell+1$.
    By definition, it follows that $I_\ell = (V^2 \setminus \bigcup_{j \in [\ell]}N(v_j^1) ) \cup \{ v_\ell^1 \}$.
    Here, recall that $D$ is a dominating set of $G$, that is, $\bigcup_{i\in [\ell]} N[v_i] = V(G)$.
    Moreover, if $v_j \in N[v_i]$ for integers $i,j \in [n]$, then $v_i^1$ and $v_j^2$ are joined by an edge.
    This implies that $\bigcup_{j \in [\ell]}N(v_j^1) \subseteq V_2$ and hence $I_\ell = \{ v_\ell^1 \}$.
    We thus have $I_{\ell} \bigtriangleup I_{\ell+1} = \{v_0^1, v_\ell^1 \}$, and we conclude that $G[I_{\ell} \bigtriangleup I_{\ell+1}]$ is connected.}
\end{proof}
\fi

Next, we show the if-direction.
Suppose that there exists a sequence $\sigma = \langle I_s = I_0, I_1, \dots, I_\ell, I_{\ell+1} = I_t \rangle$ that is a solution to $(G', I_s, I_t, k')$, where $\ell \in [k]$ and hence $\ell+1 \le k+1 = k'$.
We here provide the following lemma.

\revrev{
\begin{lemma} \label{lem:split_ifdirection}
    For any $i \in [n]$, there exists an integer $j \in [\ell]$ such that at least one of the following two conditions holds: \emph{(1)} $N(v_i^2) \cap I_j \neq \emptyset$, or \emph{(2)} $I_{j-1} \bigtriangleup I_j = \{ v_i^2 \}$. 
\end{lemma}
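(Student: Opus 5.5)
The plan is to follow the single vertex $v_i^2$ along $\sigma$ and look at the first step at which it leaves the independent set. We have $v_i^2 \in V_2 \subseteq I_0$ and $v_i^2 \notin I_{\ell+1} = \{v_0^1\}$. So let $j \in [\ell+1]$ be the smallest index with $v_i^2 \notin I_j$. Then $v_i^2 \in I_{j-1} \bigtriangleup I_j$, and by adjacency of $I_{j-1}$ and $I_j$ the graph $G'[I_{j-1}\bigtriangleup I_j]$ is connected. Two cases arise.

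\emph{Case A: $I_{j-1}\bigtriangleup I_j \neq \{v_i^2\}$.} By connectivity, $v_i^2$ has a neighbour $u \in I_{j-1}\bigtriangleup I_j$. All neighbours of $v_i^2$ lie in $V_1$. Since $I_{j-1}$ is independent and contains $v_i^2$, we get $u \notin I_{j-1}$, hence $u \in I_j$. This gives $N(v_i^2)\cap I_j \neq \emptyset$. Moreover $j \neq \ell+1$, because $I_{\ell+1} = \{v_0^1\}$ and $v_0^1$ has no neighbour in $V_2$. Thus $j \in [\ell]$ and condition (1) holds.

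\emph{Case B: $I_{j-1}\bigtriangleup I_j = \{v_i^2\}$.} If $j \le \ell$, condition (2) holds and we are done. The only remaining situation is $j = \ell+1$, which forces $I_\ell = \{v_0^1, v_i^2\}$. In this situation $v_i^2$ belongs to every $I_0,\dots,I_\ell$ and is removed by a singleton flip at the very last step.

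I expect this last situation to be the main obstacle, since the lemma must produce $j \le \ell$. I would handle it by normalising $\sigma$ before applying the argument, which is harmless because the if-direction only needs \emph{some} solution of length at most $k'$. The normalisation moves the singleton removal of $v_i^2$ to the front. Set $I'_0 = I_0$, $I'_1 = I_0\setminus\{v_i^2\}$, and $I'_{m+1} = I_m \setminus \{v_i^2\}$ for $m \in [\ell]$. Each $I'_m$ is a subset of an independent set, hence independent. Since $v_i^2$ lies in all of $I_0,\dots,I_\ell$, we have $I'_{m}\bigtriangleup I'_{m+1} = I_{m-1}\bigtriangleup I_m$ for $2 \le m+1 \le \ell+1$. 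Also $I'_{\ell+1} = I_\ell \setminus\{v_i^2\} = \{v_0^1\} = I_t$. So the length is unchanged, and $v_i^2$ now satisfies (2) with $j=1$.

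I would then check that this rewriting cannot create a new bad vertex $v_{i'}^2$. A vertex $v_{i'}^2$ is bad only if the new last set is $I'_\ell = \{v_0^1, v_{i'}^2\}$. Because the rewriting is done with respect to the original $I_\ell$, which contained no vertex of $V_1$, I would argue this by a potential argument. I would choose, among all shortest solutions, one in which singleton flips of $V_2$-vertices occur as early as possible, for instance lexicographically. Then show that in such a $\sigma$ the configuration $I_\ell = \{v_0^1, v_i^2\}$ with a singleton last flip cannot occur unless the whole sequence consists of singleton deletions.

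That degenerate all-singleton sequence is the sub-case I am least sure of. There I would have to examine the length bound ($n = \ell+1$) directly, or argue that it can be rearranged into a sequence in which some step flips a $V_1$-vertex together with a neighbourhood in $V_2$. Once $I_\ell = \{v_0^1, v_i^2\}$ is excluded, Case A and Case B together give the statement with $j \in [\ell]$ for every $i \in [n]$.
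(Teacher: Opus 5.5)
Your first-exit argument (Case A, and Case B with $j\le\ell$) is the same as the paper's proof, which runs it by contradiction. The paper never reaches your leftover case because its proof writes $I_\ell=I_t$. That is, it silently lets $j$ range over \emph{all} steps of $\sigma$, including the final step into $I_t$.

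With the statement's own indexing ($I_{\ell+1}=I_t$, $j\in[\ell]$) your worry is justified: the lemma is false for a fixed solution. Take $G=K_2$ on $\{v_1,v_2\}$, $k=1$, and $\sigma=\langle \{v_0^1,v_1^2,v_2^2\},\{v_0^1,v_2^2\},\{v_0^1\}\rangle$, so $\ell=1$. For $i=2$, neither (1) nor (2) holds at $j=1$. So no argument about the given $\sigma$ can work, and your normalisation necessarily proves a different claim: that \emph{some} solution of length at most $k'$ satisfies the conclusion.

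Even that weaker claim is not established. Moving the removal of $v_i^2$ to the front just makes the previous step the last one; in the example it turns $v_1^2$ into the bad vertex. The lexicographic extremal choice is only asserted, not proved. The all-singleton case cannot be repaired by any rearrangement: if $G$ is edgeless and $k=n-1\ge 1$, the $n$ singleton removals form a solution of length $n=k'$, yet $G$ has no dominating set of size $n-1$. So in that case it is the reduction itself that fails, not just the lemma.

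A sound repair keeps your Case A/Case B argument verbatim but with $j\in[\ell+1]$, which is exactly what it proves. The difficulty then moves into the counting of $D$, with $D_1,D_2$ now taken over all $\ell+1$ steps.
\begin{itemize}
\item Each step contributes at most one vertex to $D_1\cup D_2$.
\item A step that inserts $v_0^1$ contributes none: since $V_1\cup\{v_0^1\}$ is a clique, it inserts no vertex of $V_1$, and it is not a singleton removal.
\item If some step contributes nothing, then $|D|\le\ell\le k$.
\item Otherwise no step inserts $v_0^1$. Hence no step inserts a $V_1$-vertex, since that would evict $v_0^1$, which would then have to be re-inserted because $v_0^1\in I_t$.
\item So every step is a singleton removal from $V_2$, and $\ell+1=n$. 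You then need an edge $uv$ of $G$ to obtain the dominating set $V(G)\setminus\{u\}$ of size $\ell\le k$.
\end{itemize}
So the lemma should be read with $j\in[\ell+1]$, and the reduction needs the (harmless) assumption that $G$ has at least one edge.
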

\begin{proof}
    The proof proceeds by contradiction.
    Assume that $N(v_i^2) \cap I_j = \emptyset$ and $I_{j-1} \bigtriangleup I_j \neq \{ v_i^2 \}$ for every $j \in [\ell]$.
    The condition $I_{\ell} = I_{t} = \{v_0^1\}$ ensures the existence of an integer $r \in [\ell]$ such that $v_{i}^2 \in I_{r-1}$ and $v_{i}^2 \notin I_{r}$, which leads to $v_{i}^2 \subseteq I_{r-1} \bigtriangleup I_r$.
    Since $I_{r-1} \bigtriangleup I_r \neq \{ v_i^2 \}$ and $G[I_{r-1} \bigtriangleup I_{r}]$ is connected, it follows that there exists a vertex $v \in I_{r-1} \bigtriangleup I_{r}$ with $v \in N(v_i^2)$.
    However, this implies that $v \in I_{r-1}$ or $v \in I_{r}$, which contradicts $N(v_i^2) \cap I_j = \emptyset$ for every $j \in [\ell]$.
\end{proof}
}
\revrev{We define $D_1 = \{ v_i^1 \in V_1 \mid v_i^1 \in I_j \setminus I_{j-1}, j \in [\ell]\}$ and $D_2 = \{ v_i^2 \in V_2  \mid I_{j-1} \bigtriangleup I_j = \{ v_i^2 \}, j \in [\ell] \}$.
Note that $|I_j \cap D_1| \le 1$ for all $j \in [\ell]$ since $I_j$ is an independent set and $V_1$ is a clique.
We also have $|I_j \cap D_2| \le 1$ for all $j \in [\ell]$.
We here claim that $|D_1 \cup D_2| \le \ell$.
For the sake of contradiction, assume that $|D_1 \cup D_2| > \ell$.
Since $|I_j \cap D_1| \le 1$ and $|I_j \cap D_2| \le 1$ for all $j \in [\ell]$, there exist an integer $j \in [\ell]$, a vertex $v^1 \in I_j \cap D_1$, and a vertex $v^2 \in I_j \cap D_2$.
Consequently, we have $v^1 \in I_j \setminus I_{j-1}$; however, this yields $v^1 \in I_{j-1} \bigtriangleup I_j$, which contradicts the fact that $I_{j-1} \bigtriangleup I_j = \{ v^2 \}$.
}

\begin{lemma}\label{lem:split_D}
    $D$ is a dominating set of $G$.
\end{lemma}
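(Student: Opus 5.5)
Here $D$ denotes the set of original vertices $\{v_i \in V(G) \mid v_i^1 \in D_1 \text{ or } v_i^2 \in D_2\}$, whose size is at most $|D_1\cup D_2|\le \ell\le k$ by the preceding claim. It remains to show that every $v_i\in V(G)$ lies in $N[v_p]$ for some $v_p\in D$. The plan is to fix an arbitrary $i\in[n]$ and apply \Cref{lem:split_ifdirection}, which yields an index $j\in[\ell]$ with either (1) $N(v_i^2)\cap I_j\neq\emptyset$, or (2) $I_{j-1}\bigtriangleup I_j=\{v_i^2\}$. We then treat the two cases separately.

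\emph{Case (2).} By definition $v_i^2\in D_2$, so $v_i\in D$, and trivially $v_i\in N[v_i]$.

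\emph{Case (1).} Since $N(v_i^2)\subseteq V_1$ and $v_0^1$ is not adjacent to $V_2$, there is some $v_p^1\in I_j\cap V_1$ adjacent to $v_i^2$. By construction this means $v_i\in N[v_p]$. It remains to argue that $v_p^1\in D_1$, that is, that $v_p^1$ entered the independent set at some step $j'\in[\ell]$. Since $v_p^1\notin I_0=V_2\cup\{v_0^1\}$ and $v_p^1\in I_j$, there is a smallest $j'\le j$ with $v_p^1\in I_{j'}\setminus I_{j'-1}$. Thus $v_p^1\in D_1$, so $v_p\in D$ dominates $v_i$.

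The only delicate points are bookkeeping: $j'$ must lie in $[\ell]$ (it does, since $j'\le j\le\ell$), and $v_0^1$ must be excluded as a dominator (it has no neighbours in $V_2$). I expect no real obstacle beyond stating precisely how $D$ is extracted from $D_1\cup D_2$. Combined with $|D|\le\ell\le k$, this shows that $D$ is a dominating set of size at most $k$, completing the if-direction of \Cref{lem:equivalence_dom_SymISR}.
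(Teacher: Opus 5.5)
Your proof is correct and follows the paper's argument. You invoke \Cref{lem:split_ifdirection}, settle the singleton-flip case through $D_2$, and otherwise take a neighbour $v_p^1\in V_1\cap I_j$ of $v_i^2$; since $v_p^1\notin I_0$, it entered at some step $j'\le j$, so $v_p\in D$ and $v_i\in N[v_p]$.

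The remaining differences are cosmetic. You split directly on the lemma's two alternatives rather than on whether $v_i^2\in D_2$. You also make explicit two points the paper glosses over: the first-entry index $j'$, and the case $p=i$, which your use of the closed neighbourhood covers.
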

\iftrue
\begin{proof}
    \revrev{It suffices to show that, for each $i \in [n]$, there exists a vertex in $N[v_i] \cap D$. 
    In the case where $v_i^2 \in D_2$, it follows immediately from the definition of $D$ that $v_i \in D$. 
    Thus, $v_i \in N[v_i] \cap D$ holds trivially.
    Suppose that $v_i^2 \notin D_2$.
    By \Cref{lem:split_ifdirection}, there exists an integer $j \in [\ell]$ such that $N(v_i^2) \cap I_j \neq \emptyset$. 
    Consider a vertex $v_r^1 \in N(v_i^2) \cap I_j$ for some $r \in [n]$. 
    Since $G'$ is a split graph, we have $v_r^1 \in V_1$.
    The fact that $v_r^1 \notin I_s$ ensures that $v_r^1 \in D_1$, which yields $v_r \in D$.
    Furthermore, by the construction of $G'$, the adjacency of $v_r^1$ and $v_i^2$ in $G'$ implies that $v_r$ and $v_i$ are adjacent in $G$. 
    Consequently, we obtain $v_r \in N[v_i] \cap D$.}
\end{proof}
\fi

It is known that \textsc{Dominating Set} is \W[2]-hard when parameterized by $k$~\cite{LokshtanovMPRS13}.
Moreover, \prb{Dominating Set} on graphs with $n$ vertices cannot be approximated within a factor of $(1-\delta)\ln n$ in polynomial time for any fixed constant $0<\delta<1$, unless $\P=\NP$~\cite{inapprox/DinurS14}, which follow from those for \prb{Set Cover}, a problem equivalent to \prb{Dominating Set} under approximation-preserving reductions~\cite{kann1992approximability}.
The reduction described above is an FPT-reduction with respect to $k$, immediately yielding the parameterized hardness result below.
\begin{corollary}
    \prb{SymISR} on split graphs is \W[2]-hard when parameterized by $k$.
\end{corollary}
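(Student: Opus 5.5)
The corollary follows by checking that the reduction from \textsc{Dominating Set} to \prb{SymISR} on split graphs is a parameterized (FPT) reduction. We then invoke the \W[2]-hardness of \textsc{Dominating Set} parameterized by solution size~\cite{LokshtanovMPRS13}. An FPT-reduction must satisfy three conditions: (i) it maps yes-instances to yes-instances and no-instances to no-instances, (ii) it runs in time $f(k)\cdot \mathrm{poly}(|G|)$, and (iii) the new parameter is bounded by a function of the old one.

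Condition (i) is exactly \Cref{lem:equivalence_dom_SymISR}. Its only-if direction is given by \Cref{lem:splitright_ind} and \Cref{lem:splitright_con}. Its if direction combines \Cref{lem:split_ifdirection}, the counting bound $|D_1\cup D_2|\le \ell\le k$, and \Cref{lem:split_D}. Here $D$ is the set of vertices $v_i$ with $v_i^1\in D_1$ or $v_i^2\in D_2$, so $|D|\le k$.

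Condition (ii) holds because the construction of $(G',I_s,I_t,k')$ is explicit. It builds $2n+1$ vertices, a clique on $V_1\cup\{v_0^1\}$, and the closed-neighbourhood edges between $V_1$ and $V_2$, all in time polynomial in $n$ and independent of $k$. Condition (iii) holds because $k'=k+1$.

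Finally, $G'$ is a split graph: $V_1\cup\{v_0^1\}$ is a clique and $V_2$ is an independent set. Hence the hardness transfers to \prb{SymISR} restricted to split graphs, parameterized by the length bound. The only point needing care is that the if direction produces a dominating set of size at most $k$ rather than $k+1$. The intended reading is that a sequence of length $\ell+1\le k'$ uses $\ell\le k$ intermediate steps, and $|D_1\cup D_2|$ is bounded by $\ell$ through the counting argument. This is already established before the statement, so no new obstacle remains.
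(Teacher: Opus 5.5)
Your route is the paper's: the corollary is read off from the fact that the \textsc{Dominating Set} reduction with $k'=k+1$ is a parameterized reduction. However, the point you flag and then call ``already established'' is not established by the lemmas you cite. The sets $D_1,D_2$ and \Cref{lem:split_ifdirection} only range over $j\in[\ell]$, so they ignore the final flip $I_\ell\to I_{\ell+1}=I_t$. That flip may be the singleton $\{v_i^2\}$, namely when $I_\ell=\{v_0^1,v_i^2\}$.

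For instance, let $G$ be an edge $v_1v_2$ plus an isolated vertex $v_3$, and let $k=2$. The sequence $\{v_0^1,v_1^2,v_2^2,v_3^2\},\{v_1^1,v_3^2\},\{v_0^1,v_3^2\},\{v_0^1\}$ is valid and has length $3=k'$, yet $D=\{v_1\}$ does not dominate $v_3$. Worse, let $G$ be edgeless on $n$ vertices and $k=n-1$. Deleting $v_1^2,\dots,v_n^2$ one at a time gives a sequence of length $n=k'$, although $G$ has no dominating set of size $n-1$. So \Cref{lem:equivalence_dom_SymISR}, which you invoke verbatim as condition (i), is false as stated, and the map is not a reduction on these inputs.

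The repair is short. Charge every one of the $\ell+1$ flips. A flip contributes to $D$ either the unique vertex of $V_1$ it adds (unique because $V_1\cup\{v_0^1\}$ is a clique), or $v_i$ if the flip is the singleton $\{v_i^2\}$; so each flip contributes at most one vertex. With $j$ ranging over $[\ell+1]$, the arguments of \Cref{lem:split_ifdirection} and \Cref{lem:split_D} then show that $D$ dominates $G$. It remains to bound $|D|$.
\begin{itemize}
\item If $v_0^1$ ever leaves the current set, the flip that reinserts it adds no vertex of $V_1$ and is not a singleton of $V_2$. That flip contributes nothing, so $|D|\le\ell\le k$.
\item If $v_0^1$ never leaves, no set meets $V_1$, so every flip is a single vertex of $V_2$ and $\ell+1\ge n$. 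Then, as soon as $G$ has an edge, $G$ has a dominating set of size $n-1\le\ell\le k$.
\end{itemize}
So the reduction must first dispose of edgeless inputs, which are trivially decidable, by outputting a fixed yes- or no-instance. With that patch, your conditions (i)--(iii) hold and the \W[2]-hardness follows as you say.
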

\revrev{Moreover, although it does not directly preserve the approximation ratio, we show that it also yields the inapproximability result.}

\begin{lemma}\label{prbinapprox}
    \prb{SymISR} on split graphs with $n$ vertices cannot be approximated within a factor of $(1-\varepsilon)\ln n$ in polynomial
time for any fixed constant $\varepsilon >0$, unless $\P=\NP$.
\end{lemma}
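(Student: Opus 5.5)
We argue by contradiction: assume a polynomial-time $\alpha(n')$-approximation for \prb{SymISR} on split graphs with $n'$ vertices, where $\alpha(n')=(1-\varepsilon)\ln n'$. From it we build a $(1-\delta)\ln n$-approximation for \prb{Dominating Set} on $n$-vertex graphs, for some constant $\delta>0$, which contradicts~\cite{inapprox/DinurS14}.

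The reduction above is not approximation-preserving for two reasons. It adds $+1$ to the optimum ($k'=k+1$), and it maps $n$ vertices to $n'=2n+1$. We will absorb both losses by amplification. Since the instance size grows only polynomially, $\ln n' = \ln n + O(1)$, so the size change is harmless. The additive $+1$ matters only when the optimum $\gamma(G)$ is small. To handle it, we use disjoint unions. Take $t$ disjoint copies of $G$ (with $t$ polynomial in $n$, say $t=n$), giving a graph $G^{(t)}$ with $\gamma(G^{(t)})=t\gamma(G)$. Then apply the construction to $G^{(t)}$, so $N=tn$ and $n'=2tn+1$.

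The argument needs two ingredients, both extracted from the proofs of \Cref{lem:splitright_con} and \Cref{lem:split_D}.

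\begin{itemize}
\item The optimum satisfies $\mathrm{OPT}' \le \gamma(G^{(t)})+1$. This is the only-if direction.
\item From any reconfiguration sequence of length $L$ we can extract, in polynomial time, a dominating set $D=D_1\cup D_2$ of size at most $L$. The if-direction together with the counting $|D_1\cup D_2|\le \ell$ gives this; applied to a sequence of length $L$, the same argument yields $|D|\le L$ without subtracting one, which suffices.
\end{itemize}

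Given the approximate sequence of length $L\le \alpha(n')\,\mathrm{OPT}' \le \alpha(n')(t\gamma(G)+1)$, we obtain a dominating set of $G^{(t)}$ of size at most $L$. Its restriction to the best copy is a dominating set of $G$ of size at most $L/t \le \alpha(n')(\gamma(G)+1/t)$. With $t=n$ and $\gamma(G)\ge 1$, this is at most $(1-\varepsilon)(1+1/n)\ln(2n^2+1)\,\gamma(G)$.

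This exposes the key problem: $\ln n'\approx 2\ln n$ when $t=n$, which loses a factor $2$. So $t$ must be kept subpolynomial in its effect on the logarithm. We therefore choose $t=\lceil 1/\varepsilon'\rceil$, a constant. Then $n'=O(n)$ and $\ln n'=\ln n+O(1)$, while $\gamma(G)+1/t \le (1+\varepsilon')\gamma(G)$. The resulting ratio is $(1-\varepsilon)(1+\varepsilon')(\ln n+O(1)) \le (1-\delta)\ln n$ for a suitable small $\varepsilon'$, suitable $\delta>0$, and large $n$. Small $n$ is handled by brute force.

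Choosing these parameters so that both the additive $+1$ and the logarithm blow-up are simultaneously absorbed is the step that needs care; the rest is bookkeeping. We also verify that the extraction of $D$ from an arbitrary (not necessarily optimal) sequence is constructive. It is: $D_1$ collects the vertices entering from $V_1$, and $D_2$ collects the singleton flips of $V_2$ vertices, both read directly off the sequence.
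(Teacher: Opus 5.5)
Your proof is correct, but it removes the additive $+1$ (from $k'=k+1$) in a different way than the paper. The paper keeps the single instance built from $G$ and bounds the ratio by $(1+1/\gamma)(1-\varepsilon)\ln(2n+1)$. It then makes $1/\gamma$ small by first checking, via exhaustive search over all vertex subsets of size at most a constant $c=\Theta(1/\varepsilon)$, whether $\gamma\le c$; afterwards it may assume $\gamma>c$.

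You instead pad the \prb{Dominating Set} instance with $t=\Theta(1/\varepsilon)$ disjoint copies. This uses two facts: the domination number is additive over disjoint unions, and any dominating set of $G^{(t)}$ restricts to a dominating set of every copy. The $+1$ is thereby diluted to $1/t$ per copy, at the price of a constant-factor larger instance. Both you and the paper absorb that blow-up via $\ln(O(n))=\ln n+O(1)$, and your discarded attempt with $t=n$ correctly shows why $t$ must stay constant.

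The paper's trick is generic: it only needs that small optima can be found by brute force. Its enumeration, however, takes $n^{\Theta(1/\varepsilon)}$ time. Yours exploits a structural property of \prb{Dominating Set} and only enlarges the instance by a factor $O(1/\varepsilon)$.

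Your version also relies only on the two inequalities $\mathrm{OPT}'\le\gamma+1$ and $|D|\le L$. The latter follows by charging: each step brings in at most one vertex of the clique $V_1$, and a step counted in $D_2$ is a singleton flip in $V_2$ that brings in no vertex of $V_1$. The paper instead asserts the equality $\mathrm{OPT}'=\gamma+1$, which fails in degenerate cases. For example, for a one-vertex $G$, $I_s\bigtriangleup I_t=\{v_1^2\}$ is connected and the optimum is $1=\gamma$. Since the paper only uses the upper bound there, this is harmless.
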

\begin{proof}
    \revrev{
    The proof proceeds by contradiction that there is a polynomial-time $\rho(N)$-approximation algorithm, where $N$ is the number of vertices of a given graph for \prb{SymISR} and $\rho(N)\coloneq (1-\varepsilon)\log N$ for a fixed positive $0<\varepsilon<1$.
    Suppose that the algorithm outputs a reconfiguration sequence of length $L$ from $I_s$ to $I_t$ for the graph $G'$ obtained from the graph $G$ of \prb{Dominating Set}.}
    
    \revrev{We start with noting that $|V(G')|=2n+1$, where $n\coloneq |V(G)|$.
    Let $\gamma$ be the optimal size of the dominating set.
    By \Cref{lem:equivalence_dom_SymISR}, the optimal length of the reconfiguration sequence is $\gamma+1$.
    Hence, we have $L\le \rho(|V(G')|)(\gamma+1)$.
    Moreover, by \Cref{lem:equivalence_dom_SymISR}, we can construct the dominating set of size $|D|\le\rho(|V(G')|)(\gamma+1)-1$ for $G$.
    Thus, the ratio $|D|/\gamma$ is upper bounded by 
    \begin{align*}
    \frac{|D|}{\gamma} \le \frac{(\rho(|V(G')|)(\gamma+1)-1)}{\gamma}\le \left(1+\frac{1}{\gamma}\right)\rho(2n+1)\le \left(1+\frac{1}{\gamma}\right)(1-\varepsilon)\log (2n+1).
    \end{align*}
    Here, we can assume that $\gamma$ is lower bounded by a sufficiently large constant $c$, since the existence of a dominating set of size at most $c$ can be verified in polynomial time.
    Moreover, there is a small constant $c'$ such that $\log (2n+1)\le \log n + c'$.
    Hence, we obtain
    \begin{align*}
        \frac{|D|}{\gamma}\le \left(1+\frac{1}{c}\right)\left(1+\frac{c'}{\log n}\right)(1-\varepsilon)\log n
    \end{align*}
    using $c$ and $c'$.}

    \revrev{We now upper bound the value.
    First, since $\left(1+\frac{1}{c}\right)\left(1+\frac{c'}{\log n}\right)$ is strictly decreasing in both $c$ and $n$ and converges to $1$ as $c,n\to+\infty$, for every positive $\eta>0$, we can choose
    a sufficiently large constant $c$ and a constant $n_0$ such that $\left(1+\frac{1}{c}\right)\left(1+\frac{c'}{\log n}\right)\le 1+\eta$ for every $n>n_0$.}
    
    \revrev{Choose $\eta$ such that $0<\eta<\varepsilon/(1-\varepsilon)$, and we set
    $\delta \coloneq 1-(1+\eta)(1-\varepsilon)$.
    Then, we have $0<\delta<1$ and $(1+\eta)(1-\varepsilon)=1-\delta$.
    Hence, for $\gamma>c$ and $n>n_0$, we have
    \begin{align*}
        \frac{|D|}{\gamma}\le
        \left(1+\frac{1}{c}\right)
        \left(1+\frac{c'}{\log n}\right)
        (1-\varepsilon)\log n \le
        (1+\eta)(1-\varepsilon)\log n
        =
        (1-\delta)\log n.
    \end{align*}}

\revrev{It remains to consider the cases where $\gamma\le c$ or $n\le n_0$.
Since $c$ is a fixed constant, we can enumerate all vertex subsets of
size at most $c$ in polynomial time.
If such a dominating set exists, we can find an optimal one by exhaustive search; otherwise, we have $\gamma>c$ and apply the reduction above.
Moreover, if $n\le n_0$, an optimal dominating set can be found by
exhaustive search, since $n_0$ is a fixed constant.
Consequently, the $\rho(N)$-approximation algorithm for
\prb{SymISR}, together with these exhaustive searches, yields a
polynomial-time $(1-\delta)\log n$-approximation algorithm for
\prb{Dominating Set} on every instance.
This contradicts the $(1-\delta)\log n$-inapproximability of
\prb{Dominating Set} for every fixed positive $0<\delta<1$, unless $\P=\NP$.}
\end{proof}

We also remark that, due to the minimality of the dominating sets, the subsets $(I_i \cap Y)$ in the solution $\langle I_s=I_0, I_1, \dots, I_\ell=I_t\rangle$ for $(G', I_s, I_t, k')$ form a strictly decreasing sequence with respect to set inclusion.
Formally, the following proposition is obtained.
\begin{proposition}\label{prop:monotone}
    Let $(G', I_s, I_t, k')$ be the constructed instance of \prb{SymISR} as above, where $G'$ is a split graph.
    Then, there exists a sequence $\sigma = \langle I_s = I_0, I_1, \dots, I_{\ell-1}, I_{\ell} = I_t \rangle$ such that $\sigma$ is a solution to $(G', I_s, I_t, k')$ and $I_{i-1} \cap Y \supset I_{i} \cap Y$ for every $i \in [\ell]$.
\end{proposition}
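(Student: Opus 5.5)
Throughout, I read $Y$ as the independent side $V_2$ of the split graph $G'$; this is the only set for which $I_i\cap Y$ evolves monotonically along a solution. I read $\supset$ as the superset relation $\supseteq$, i.e.\ $I_{i-1}\cap Y\supseteq I_i\cap Y$ for every $i\in[\ell]$. Under this reading the proposition asserts non-strict monotonicity for every $i$. \emph{This reading is forced, and it differs from the paper's informal remark of a ``strictly decreasing'' sequence.} Below I argue that, for instances with $n>k+1$, no solution can strictly decrease $I_i\cap V_2$ at every step. So the strictly decreasing part of that remark can only be shown for the domination steps, and I do not claim it at the final swap. Strict decrease at every step would only be attainable if $Y$ meant some other set, which I cannot identify from the construction.

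The plan is constructive, and it only applies when the instance is a yes-instance, i.e.\ when $G$ has a dominating set of size at most $k$. I would take a \emph{minimal} dominating set $D=\{v_1,\dots,v_\ell\}$ with $\ell\le k$ and reuse the sequence from the only-if direction of \Cref{lem:equivalence_dom_SymISR}:
\begin{itemize}
\item $I_0=V_2\cup\{v_0^1\}$;
\item $I_i=(I_{i-1}\setminus N(v_i^1))\cup\{v_i^1\}$ for $i\in[\ell]$;
\item $I_{\ell+1}=I_t=\{v_0^1\}$.
\end{itemize}
By \Cref{lem:splitright_ind} and \Cref{lem:splitright_con} this is a solution of length $\ell+1\le k'$.

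For $1\le i\le\ell$ we have $I_i\cap V_2=(I_{i-1}\cap V_2)\setminus N(v_i^1)$, because the only added vertex $v_i^1$ lies in $V_1$. This gives $I_{i-1}\cap V_2\supseteq I_i\cap V_2$. For the last step, $I_\ell\cap V_2=\emptyset$ by the domination argument in the proof of \Cref{lem:splitright_con}, so the inclusion holds trivially. That proves the statement under the reading above.

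Minimality of $D$ gives strictness at every domination step. Each $v_i$ has a private neighbour $p_i\in N[v_i]\setminus\bigcup_{j<i}N[v_j]$. Then $p_i^2$ lies in $I_{i-1}\cap V_2$ but is removed at step $i$, so the inclusion is proper for $i\le\ell$.

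I expect the final step $\{v_\ell^1\}\to\{v_0^1\}$ to be the main obstacle, since both sides meet $V_2$ in the empty set. I would first try to reorder the steps so that the last flip also deletes some $V_2$ vertex, but I believe this cannot work. Any flip that enters $I_t=\{v_0^1\}$ with a nonempty $V_2$-part $S$ must be connected. Since $v_0^1$ has no neighbours in $V_2$, the set $S$ has to be attached through the removed clique vertex $x$. But $x$ already lay in the independent set before the flip, so $N(x)\cap S=\emptyset$. The only way out is that no clique vertex is present at all, in which case $S$ must be a single vertex.

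Iterating this argument backwards, a sequence that strictly decreases $I_i\cap V_2$ at every step could consist only of single-vertex removals in $V_2$, which takes $n$ steps. Hence strictness at the final swap is impossible whenever $n>k+1$.
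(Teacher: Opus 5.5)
Your proposal is correct and follows the route the paper intends: the paper's only justification is the remark that taking a \emph{minimal} dominating set and reusing the only-if sequence makes $I_i\cap Y$ decrease, with $Y$ being $V_2$ as you identified, and you spell this out using private neighbours. Your extra observation is also right: at the final swap $\{v_\ell^1\}\to\{v_0^1\}$ both sets meet $V_2$ in $\emptyset$, so the paper's ``strictly decreasing'' claim overstates what holds (a solution strictly decreasing at every step consists of $n$ singleton removals, so none exists within $k'$ steps once $n>k+1$), and the monotone-path corollary survives only because the $x_{v_0^1}$ term makes that last step strictly improving.
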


\begin{proof}[Proof of \Cref{monotoneinapprox}]
    Combining \Cref{prbinapprox} and the discussion in \Cref{polytopeinterpretation}, the inapproximability of finding the shortest path from $u$ to $v$ on $P$ is readily shown. 
    \revrev{The remaining task is finding a suitable linear function $c$ to ensure that the corresponding path on $P$ is monotone with respect to at least one objective.
    To this end, we consider the objective $c: \{0,1\}^{V(G')}\to \mathbb{R}$ such that $c(\bm{x}) = x_{v^1_0}-2\sum_{v\in V^2} x_v$.
    We remark that $I_t=\{v^1_0\}$ is the only independent set with positive objective value among all independent sets of $G'$.}
    
    We now show that every reconfiguration sequence of the shortest length strictly increases the objective value.
    \revrev{Here, we note that the second term $-2\sum_{v\in V^2} x_v$ of $c$ is strictly increasing on the shortest path, by \Cref{prop:monotone}.
    Moreover, for every integer $i\in [\ell]$, we have $v^1_0\notin I_{i}$; if $v^1_0\in I_i$, then we have $I_i\cap X=\{v^1_0\}$ and $I_{i-1}\bigtriangleup I_i$ must be $\{x_{i-1}, v^1_0\}$.
    This contradicts \Cref{prop:monotone}, since $I_{i-1}\cap V^2=I_i\cap V^2$.
    Thus, the objective value corresponds to $I_i$ can be computed as: $1-2|V^2|$ when $i=0$, $-2|I_i\cap V^2|$ when $i\in [1,\ell-1]$, $0$ when $i=\ell$, and $1$ when $i=\ell+1$.}
    It is clear that the value is strictly increasing, completing the proof of \Cref{monotoneinapprox}.
\end{proof}






\subsection{Block Graphs}\label{sec:block}

In this section, we show that \prb{SymISR} is solvable in polynomial time on block graphs.
We begin by introducing some notations to define block graphs.
A \emph{biconnected component} of $G$ is a maximal biconnected subgraph, that is, a subgraph $H$ such that the subgraph induced by $V(H)\setminus \{v\}$ is connected for every $v\in V(H)$.
A graph $G$ is \emph{block} if every biconnected component of $G$ forms a clique~\cite{graphclasses}.

\begin{theorem}\label{thm:block}
    \prb{SymISR} can be solved in $O(n^5)$ time for block graphs.
\end{theorem}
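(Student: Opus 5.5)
We would solve \prb{SymISR} on block graphs by dynamic programming over the block-cut tree of $G$. Let $D=I_s\bigtriangleup I_t$. First we reduce to the case where $G$ is connected; components of $G$ are independent, so the optimum is the sum over components. Root the block-cut tree at some cut vertex or block. For every vertex $v$, let $G_v$ be the subgraph formed by $v$ and everything below it. The guiding principle comes from the trivial upper bound $\numcomp(D)$: flips can only help by merging several components of $G[D]$ into a single connected flip. They do so through intermediate vertices outside $D$, or through vertices in $D$ that are flipped twice. In a block graph such merging is local: two components of $G[D]$ can be joined only through cut vertices or through a shared clique. We therefore expect an optimal sequence with a rigid structure, which is the first thing to establish.

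\textbf{Step 1 (structural normal form).} We will prove an exchange lemma. There is an optimal sequence in which every vertex is flipped at most three times, and every flip set $X_i$ meets each block in at most two vertices; the latter holds anyway, since $I_{i-1}$ and $I_i$ are independent and each block is a clique. Hence each flip induces a subtree in the block-cut tree, and within a block it uses one vertex entering and one vertex leaving. We also want an ordering property. Flips that use a vertex $w\notin D$ as a temporary ``bridge'' must enter $w$ into the set and later remove it. So the sequence can be split into a \emph{merging} flip that adds $w$ and a \emph{cleanup} flip that removes it. The cleanup flip can in turn be merged with other pending components. From this we aim to derive a local recurrence. Every vertex $v$ then carries a constant-size \emph{state}: its membership pattern over time, together with whether the flip passing through $v$ continues into the parent block.

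\textbf{Step 2 (DP).} For each vertex $v$ and state $s$, let $f(v,s)$ be the minimum number of flips needed in $G_v$. Flips that reach the parent through $v$ are counted only once, at the highest vertex where they appear. To combine children within a block $B$ with attachment vertices $u_1,\dots,u_p$, we choose which children's open flips are merged into a common flip through $B$. This is constrained: at most one vertex of $B$ can be in the set at each time, and a merged flip in $B$ is a single edge (or vertex) of $B$ together with the flips hanging from its endpoints. Choosing these pairs is a matching-type selection inside the clique. Because of the time ordering of at most three phases, it reduces to choosing $O(1)$ special vertices of $B$ per phase. We therefore enumerate pairs $(u_a,u_b)$, which gives $O(n^2)$ choices per block. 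Combined with $O(n)$ blocks, and with recomputing aggregated child values in $O(n^2)$, this yields the claimed $O(n^5)$ bound.

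\textbf{Main obstacle.} The hard part is Step 1: justifying that a bounded-state description at cut vertices loses no optimal solutions, in particular the bound on how often a vertex is flipped and the canonical ordering of flips. Our first attempt will be a potential argument. Given any optimal sequence, we replace each flip $X_i$ by its restriction to the subtrees of the block-cut tree it genuinely needs, and we show that components of $G[D]$ can be "charged" to flips so that the count $\numcomp(D)-(\text{number of merges})$ is preserved. We will sanity-check the resulting formula against trees, where blocks are edges, before generalizing to cliques.
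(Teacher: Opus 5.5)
Your proposal has a genuine gap in Step~1: the normal form it relies on is false, and Step~2 is built on it. There need not be an optimal sequence in which every vertex is flipped at most three times, so no constant-size per-vertex state can work.

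Take a vertex $z$ with children $y_1,\dots,y_5$, each $y_j$ having five leaf children. Let $I_s$ be the set of the $25$ leaves and $I_t=\emptyset$. The following sequence has length $4$:
\begin{itemize}
\item add $z$;
\item flip $z$, all $y_j$ and all leaves, reaching $\{y_1,\dots,y_5\}$;
\item flip $z$ and all $y_j$, reaching $\{z\}$;
\item remove $z$.
\end{itemize}
Conversely, consider any sequence of length at most $4$. Each $y_j$ must be flipped, since otherwise its five leaves need five singleton flips. Because $y_j\notin I_s\cup I_t$, each $y_j$ is then flipped at least twice. A flip avoiding $z$ meets at most one $y_j$, so some $y_j$ occurs only in flips containing $z$. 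In each such flip, $z$ changes in the opposite direction to $y_j$, because they are adjacent. Hence $z$ leaves when $y_j$ first enters, and $z$ enters when $y_j$ next leaves. This forces an earlier flip adding $z$ (as $z\notin I_s$) and a later flip removing it (as $z\notin I_t$), giving four distinct flips through $z$. So every optimal sequence flips $z$ four times.

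Stacking this construction to depth $k$, with branching larger than $2k$ at every level, $I_s$ the leaves and $I_t=\emptyset$, the same argument by induction on the level shows more. Every optimal sequence flips the root exactly $2k$ times, so the number of flips per vertex is unbounded. Consequently, the ``membership pattern over time'' is not a constant-size state, and the ``at most three phases'' reduction in Step~2 has no basis. The $O(n^5)$ bound you state is matched to the target rather than derived from a recurrence.

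The missing idea is to make the state quantitative, which is what the paper does. It decomposes the graph into rooted block graphs using three operations: a single vertex; identifying two roots; and making the root of a second graph adjacent to all of $C(G_1)$. It then computes $f(G',p,q)$, where:
\begin{itemize}
\item $p\in[0,n]$ is the number of flips through the root;
\item $q\in[0,n]$ is the number of flips at which the root clique switches between containing one vertex of the set and none, i.e.\ flips still available for synchronization.
\end{itemize}
The exchange facts you allude to (flips meet a block in one entering and one leaving vertex; flips through a cut vertex can be combined) are the right ones, but they must be counted rather than bounded. Identifying two roots that are flipped $p$ times on each side saves exactly $p$. A flip removing the unique set vertex of a clique can be fused with the next flip adding one, after reordering flips that lie in other components. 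Attaching therefore saves some $s\le\min\{q_1,p_2\}$, with a boundary correction depending on whether $C(G)$ initially meets $I_s$. With $O(n)$ decomposition nodes, $O(n^2)$ pairs $(p,q)$, and $O(n^2)$ work per value, this gives the $O(n^5)$ bound.
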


Here, for the simplicity of our algorithm, we inductively define \emph{rooted block graphs} by the following three operations.
A rooted block graph consists of a graph $G$ together with a \emph{root} $r(G)\in V(G)$, and a \emph{root clique}
$C(G)\subseteq V(G)$ containing $r(G)$.

\begin{enumerate}
    \item A graph $G=(\{v\}, \emptyset)$ with only one vertex is a rooted block
    graph with $r(G)=v$ and $C(G)=\{v\}$.

    \item Let $G_1$ and $G_2$ be two rooted block graphs with $V(G_1)\cap V(G_2)=\{r(G_1)\}=\{r(G_2)\}$.
    Then $G=(V(G_1)\cup V(G_2),E(G_1) \cup E(G_2))$ is a rooted block graph  with $r(G)= r(G_1)$ and $C(G)=\{r(G)\}$.

    \item  Let $G_1$ and $G_2$ be two rooted block graphs with $V(G_1)\cap V(G_2)=\emptyset$. 
    let $G$ be obtained from $G_1\oplus G_2$ by adding the edges $\{\, r(G_2)c \mid c\in C(G_1)\,\}$.
    Then $G$ is a rooted block graph with $V(G)=V(G_1)\cup V(G_2)$, $E(G)= E(G_1) \cup E(G_2)\cup \{\, r(G_2)c \mid c\in C(G_1)\,\}$, $r(G)=r(G_1)$ and, 
    $C(G)= C(G_1)\cup\{r(G_2)\}$.
\end{enumerate}

This definition naturally raises a decomposition tree, which forms a rooted binary tree, where each leaf corresponds to a single-vertex graph constructed in Operation 1, and each internal node corresponds to either Operation 2 or 3. 
An example of a decomposition tree for a block graph is shown in \Cref{fig:block}. 
We write $T(G)$ as a decomposition tree of $G$.

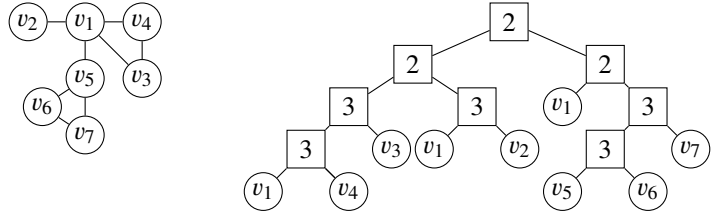
\begin{figure}
    \centering
    \begin{tikzpicture}[scale = 0.75]
\begin{scope}[xshift = -1cm]
\node[draw=black, circle, minimum size=5mm, inner sep=0pt] (v1) at (0,0){$v_1$};
\node[draw=black, circle, minimum size=5mm, inner sep=0pt] (v2) at (-1,0){$v_2$};
\node[draw=black, circle, minimum size=5mm, inner sep=0pt] (v3) at (1,-1){$v_3$};
\node[draw=black, circle, minimum size=5mm, inner sep=0pt] (v4) at (1,0){$v_4$};
\node[draw=black, circle, minimum size=5mm, inner sep=0pt] (v5) at (0,-1){$v_5$};
\node[draw=black, circle, minimum size=5mm, inner sep=0pt] (v6) at (-0.75,-1.5){$v_6$};
\node[draw=black, circle, minimum size=5mm, inner sep=0pt] (v7) at (0,-2){$v_7$};
\draw (v1) -- (v2);
\draw (v1) -- (v3);
\draw (v1) -- (v4);
\draw (v1) -- (v5);
\draw (v3) -- (v4);
\draw (v5) -- (v6);
\draw (v5) -- (v7);
\draw (v7) -- (v6);
    
\end{scope}
\begin{scope}[xshift = 7cm, scale=0.75]
\node[draw=black, circle, minimum size=5mm, inner sep=0pt] (v11) at (-6.5,-4){$v_1$};
\node[draw=black, circle, minimum size=5mm, inner sep=0pt] (v12) at (-2.5,-3){$v_1$};
\node[draw=black, circle, minimum size=5mm, inner sep=0pt] (v13) at (0.5,-2){$v_1$};
\node[draw=black, circle, minimum size=5mm, inner sep=0pt] (v2) at (-4.5,-4){$v_4$};
\node[draw=black, circle, minimum size=5mm, inner sep=0pt] (v3) at (-3.5,-3){$v_3$};
\node[draw=black, circle, minimum size=5mm, inner sep=0pt] (v4) at (-0.5,-3){$v_2$};
\node[draw=black, circle, minimum size=5mm, inner sep=0pt] (v5) at (0.5,-4){$v_5$};
\node[draw=black, circle, minimum size=5mm, inner sep=0pt] (v6) at (2.5,-4){$v_6$};
\node[draw=black, circle, minimum size=5mm, inner sep=0pt] (v7) at (3.5,-3){$v_7$};
\node[draw=black, rectangle, minimum size=5mm, inner sep=0pt] (b12) at (-5.5,-3){$3$};
\node[draw=black, rectangle, minimum size=5mm, inner sep=0pt] (b123) at (-4.5,-2){$3$};
\node[draw=black, rectangle, minimum size=5mm, inner sep=0pt] (b14) at (-1.5,-2){$3$};
\node[draw=black, rectangle, minimum size=5mm, inner sep=0pt] (b1234) at (-3,-1){$2$};
\node[draw=black, rectangle, minimum size=5mm, inner sep=0pt] (b1234567) at (-0.75,0){$2$};
\node[draw=black, rectangle, minimum size=5mm, inner sep=0pt] (b56) at (1.5,-3){$3$};
\node[draw=black, rectangle, minimum size=5mm, inner sep=0pt] (b567) at (2.5,-2){$3$};
\node[draw=black, rectangle, minimum size=5mm, inner sep=0pt] (b1567) at (1.5,-1){$2$};

\draw (v11) -- (b12);
\draw (v12) -- (b14);
\draw (v13) -- (b1567);
\draw (v2) -- (b12);
\draw (b123) -- (v3);
\draw (b14) -- (v4);
\draw (b56) -- (v5);
\draw (b56) -- (v6);
\draw (b567) -- (v7);

\draw (v2) -- (b12);
\draw (b123) -- (b12);
\draw (b123) -- (b1234);
\draw (b14) -- (b1234);
\draw (b1234567) -- (b1234);
\draw (b1234567) -- (b1567);
\draw (b567) -- (b1567);
\draw (b567) -- (b56);

\end{scope}

\end{tikzpicture}
    \caption{Construction of a decomposition tree for block graphs.}
    \label{fig:block}
\end{figure}

\begin{lemma}\label{lem:blockdecomposition}
    Let $G=(V,E)$ be a connected block graph and $r\in V$. Then we can compute a decomposition tree $T(G)$ for $G$ with $r(G)=r$ and $O(n)$ nodes in polynomial time. For this decomposition holds that $r$ is in exactly one maximal clique or $C(G)=\{r\}$.
\end{lemma}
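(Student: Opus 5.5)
We argue by strong induction on $n=|V(G)|$, building the decomposition recursively from $r$. If $n=1$, Operation~1 gives $T(G)$ with $r(G)=r$ and $C(G)=\{r\}$. Otherwise we split according to the number of maximal cliques of $G$ containing $r$; in a block graph these are exactly the blocks through $r$, computable in linear time by a biconnected-component algorithm.

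\emph{Case 1: $r$ lies in at least two maximal cliques.} Then $r$ is a cut vertex. Let $H_1$ be the subgraph induced by $\{r\}$ together with the components of $G-r$ that meet one fixed block through $r$, and let $H_2$ be induced by $\{r\}$ together with all remaining components of $G-r$. Both are connected block graphs, since induced connected subgraphs of block graphs are block graphs. We have $V(H_1)\cap V(H_2)=\{r\}$, both are smaller than $G$, and $G=H_1\cup H_2$ with no edges between $V(H_1)\setminus\{r\}$ and $V(H_2)\setminus\{r\}$. By induction we get trees $T(H_1)$ and $T(H_2)$ rooted at $r$, and we combine them by Operation~2, which yields $C(G)=\{r\}$ as the statement allows.

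\emph{Case 2: $r$ lies in exactly one maximal clique $K$.} If $K=\{r\}$ then $G$ is a single vertex, so assume $|K|\ge 2$ and pick $u\in K\setminus\{r\}$. Let $H_2$ be the subgraph induced by $u$ together with all components of $G-u$ not containing $r$, and let $H_1=G-V(H_2)\setminus\{u\}$, that is, $G$ minus $H_2$ including $u$. The key point is that every vertex in $V(H_2)\setminus\{u\}$ is separated from $H_1$ by $u$, so the only edges between $H_1$ and $H_2$ go from $u$ to $K\setminus\{u\}$. Now $K\setminus\{u\}\subseteq V(H_1)$ is a maximal clique of $H_1$ containing $r$, and it is the only one, because any other block of $H_1$ through $r$ would be a block of $G$ through $r$.

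We then apply induction to $H_1$ and to $H_2$, rooting $H_2$ at $u$, and must ensure $C(H_1)=K\setminus\{u\}$ so that Operation~3 adds exactly the edges from $u$ to $K\setminus\{u\}$. This is the main obstacle, since the bare statement only guarantees that $r$ lies in exactly one maximal clique, not that $C(H_1)$ equals that clique. We therefore strengthen the induction hypothesis to: if $r$ is in exactly one maximal clique $K$, then $C(G)=K$. Case~2 preserves this, because Operation~3 gives $C(G)=C(H_1)\cup\{u\}=K$. When $|K\setminus\{u\}|=1$, $H_1$ consists of the pendant structure at $r$; if $r$ then has other blocks in $H_1$ (impossible, as argued above) we would fall into Case~1, so the clique hypothesis recurses consistently down to the base case $K=\{r\}$.

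For the size bound, every internal node is binary and every leaf is a distinct single vertex; in Case~1 the vertex $r$ is shared, contributing one extra leaf per Operation~2 node. The number of Operation~2 nodes is at most the number of block–cut incidences, which is $O(n)$, so $T(G)$ has $O(n)$ nodes. Each recursive step takes polynomial time, for instance $O(n+m)$ using components, giving a polynomial total.
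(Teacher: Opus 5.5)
Your proof is correct and follows the paper's argument. When $r$ lies in several maximal cliques, Operation~2 splits off the component of $G-r$ containing one clique through $r$; otherwise, Operation~3 splits off a vertex $u\in K\setminus\{r\}$ together with the components of $G-u$ avoiding $r$. Your single construction covers both of the paper's subcases for $u$, since $H_2=\{u\}$ when $u$ is simplicial.

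Your strengthened hypothesis, that $C(G)$ equals the unique maximal clique through $r$, is exactly what the paper's phrase ``we can assume this by an inductive argument'' tacitly relies on. The one step worth writing out is that every neighbor of $u$ in $H_1$ lies in $K$: a cycle through $u$, $r$ and such a neighbor must lie inside a single block, namely $K$.
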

\iftrue
\begin{proof}
\revrev{We first prove the existence inductively on the number of vertices. So, let $G=(V(G),E(G))$ be a graph and $r\in V(G)$ a vertex. If $V(G)=\{r\}$ then the statement is true. Therefore, assume $\vert V(G)\vert >1$. First, we consider the case where $r$ is in multiple maximal cliques $K_1,\ldots, K_t$. In this case, $G[V(G)\setminus \{r\}]$ is no longer connected.  Otherwise, $K_1\cup \ldots \cup K_t$ would be a block (so clique) and $K_t$ would not be minimal in the first place. Let $C_t$ be the connected component of $G[V(G)\setminus \{r\}]$ including $K_t$. Then $G'=G[C_t\cup \{r\}]$ and $G''=G[V(G) \setminus C_t]$ are connected block graphs with $\vert V(G')\vert, \vert V(G')\vert < \vert V(G)\vert$. In this case, we can use an inductive argument together with Operation 2 (implying $C(G)=\{r\}$). Hence, we can assume $r$ is in only one maximal clique $C(G)$. Let $v \in C(G) \setminus \{r\}$. If $C(G)$ is the only maximal clique including $v$ then we can us Operation 1 to construct $G_2=(\{v\},\emptyset)$ and Operation 3 with $G_1=G[V(G)\setminus \{v\}]$. If $v$ is in multiple cliques $G'=G[V(G)\setminus \{v\}]$  is not connected. Let $G_1$ be the connected component including $r$. Define $G_2=[V(G)\setminus V(G_1)]$. By using Operation 3 on $G_1,G_2$ with $r(G_1)=r$ (we can assume this by an inductive argument), we obtain $G$ with $r=r(G_1)=r(G)\in C(G)=C(G_1)\cup \{v\}$.}

   \revrev{Note that the number of nodes of $T(G)$ can be bounded by $O(n)$, by the following observation.
    Let $\mathcal{B}$ be the set of biconnected components of $G$, and define $b\coloneq |\mathcal{B}|$.
    First, each leaf node corresponds to a vertex $v$ in some biconnected component $B\in \mathcal{B}$, thus the number of leaf nodes is exactly $\sum_{B\in \mathcal{B}}|B|=n+b-1$.
    Next, the number of times Operation 2 is used can be bounded by the number of identifications of two vertices, hence the number  $b$ of biconnected components.
    Finally, the number of times Operation 3 is used can be bounded by the number of vertices.
    It is clear that $b$ can be bounded by $n-1$, and hence we can conclude that there is $O(n)$ nodes in $T(G)$. Observe that the maximal cliques of $G$ can be computed in polynomial time. Since all steps in the induction are simple case distinctions (depending on how many vertices are incident with $r$), the running time is bounded polynomial in $T(G)$.}
\end{proof}
\fi


We begin with the following observations.
Let $G$ be a rooted block graph.
Recall that, when two independent sets $I$ and $J$ are adjacent, the operation that transforms $I$ into $J$ can be viewed as a flip on a connected vertex set $X = I \bigtriangleup J$.
For a reconfiguration sequence $\langle I_s=I_0,I_1,\dots,I_\ell=I_t\rangle$, consider three consecutive independent sets $I_i,I_{i+1},I_{i+2}$ of $G$ such that
$I_i\cap C=\{u\}$, $I_{i+1}\cap C=\emptyset$, and $I_{i+2}\cap C=\{v\}$, where $u$ and $v$ are two distinct vertices of a maximal clique $C$ of $G$.
Let $X_i = I_i \bigtriangleup I_{i+1}$ and $X_{i+1} = I_{i+1} \bigtriangleup I_{i+2}$.
Observe that $X_i \cap C = \{ u \}$ and $X_{i+1} \cap C = \{ v \}$.
Since $G$ \rev{is} a rooted block graph, $X_i$ and $X_{i+1}$ are disjoint.
Moreover, the fact that $uv \in E(G)$ ensures that $G[X_i \cup X_{i+1}]$ is connected.
This implies that $I_{i+2}$ is obtained from $I_i$ by a single flip on $X_i \cup X_{i+1}$, without passing through $I_{i+1}$.
We call this operation a \emph{synchronized flip}.
Note that a synchronized flip at $C$ cannot be performed simultaneously with another flip involving a vertex of the same clique, since every set in the reconfiguration sequence is independent and hence contains
at most one vertex of $C$.

Moreover, consider a subsequence $I_i,I_{i+1},\ldots,I_j,I_{j+1}$ such that $I_i\cap C=\{u\}$, $I_{i'}\cap C=\emptyset$ for every $i'\in [i+1,j]$, and $I_{j+1}\cap C=\{v\}$.
Then, for each $i'\in [i,j]$, $I_{i'+1}$ can be obtained from $I_{i'}$ by flipping $X_{i'} = I_{i'} \bigtriangleup I_{i'+1}$.
Let $G'$ be the graph obtained from $G$ by removing $C$.
Observe that there is a connected component of $G'$, say $H$, such that $X_i \subseteq V(H) \cup \{ u \}$.
If there is an integer $i'\in [i+1,j-1]$ such that $X_{i'} \cap V(H) = \emptyset$, then the flip on $X_{i'}$ can be performed before the flip on $X_i$.
This claim is justified because the flips on $X_{i}$ and $X_{i'}$ can be done independently, \rev{since $X_{i'}$ is contained in a different connected component from $H$}.
Therefore, the order of the flips can be rearranged while ensuring that the sequence of independent sets generated by these flips remains a reconfiguration sequence:
performing flips on $X_{i'}$ for all $i'\in [i+1,j-1]$ satisfying $X_{i'} \cap V(H) = \emptyset$, followed by the synchronized flip on $X_i \cup X_j$, and finally flips on $X_{i'}$ for all $i'\in [i+1,j-1]$ satisfying $X_{i'} \subseteq V(H)$.
Consequently, \rev{any flip involving exactly one vertex of $C$ can be synchronized with the subsequent flip involving exactly one vertex of $C$.}

A similar discussion can be applied to the root $r(G)$.
Let $H_1$ and $H_2$ be two connected components in the graph obtained from $G$ by removing $r(G)$.
For vertex sets $X_1 \subseteq V(H_1) \cup \{ r(G)\}$ and $X_2 \subseteq V(H_2) \cup \{ r(G)\}$, if $X_1 \cap X_2 = \{ r(G) \}$, then the flips on $X_1$ and $X_2$ can be \emph{merged}.

We now briefly describe the strategy of our algorithm.
Suppose that a rooted block graph $G$ is obtained from rooted block graphs $G_1$ and $G_2$ by Operation~2 or Operation~3.
Several flips that are counted separately in $G_1$ and $G_2$ can be synchronized or merged into a single step in $G$.
\rev{
Our algorithm records information about flips that may later be synchronized or merged on each rooted block graph constructed by a subtree in $T(G)$.
It then processes the rooted block graph in a bottom-up manner and accounts for possible synchronization or merging when Operation~2 or Operation~3 is applied.
}
If $G$ is obtained by Operation~2, then $G$ is formed by identifying $r(G_1)$ and $r(G_2)$.
Hence, only flips involving $r(G) = r(G_1) = r(G_2)$ can be merged.
Accordingly, we record the number of flips involving the root.
If $G$ is obtained by Operation~3, then a flip involving exactly one vertex of $C(G_1)$ can be synchronized with a flip involving $r(G_2)$.
We record the number of flips involving $C(G)$ that remain available for synchronization.
As discussed above, for a reconfiguration sequence $\rho$, this number equals to the number of independent sets $I_i$ in $\rho$ satisfying $|I_i\cap C(G')|=1 \land I_{i+1}\cap C(G')=\emptyset$ or $I_i\cap C(G')=\emptyset \land |I_{i+1}\cap C(G')|=1$.

Formally, for each rooted block graph $G'$ constructed by a subtree in $T(G)$ and each pair of integers $p,q \in [0,n]$, the value $f(G',p,q)$ represents the shortest length over all reconfiguration sequences on $G'$ such that exactly $p$ flips involve the root $r(G')$, and exactly $q$ flips are left to be synchronized. If no such sequence exists, we set $f(G', p, q) = +\infty$.
The value $f(G', p, q)$ can be computed using dynamic programming in a bottom-up manner on $T(G)$.
\rev{We omit the discussion for analysis of the running time.}

\iftrue

\subsubsection{Operation 1: base case.}
Consider the case when a graph $G$ consists of a singleton ${v}$.
Suppose that there is a shortest reconfiguration sequence $\rho$ of $G$ such that flips on $r(G) = v$ occur exactly $p$ times.
By the minimality, the length of $\rho$ is exactly $p$.
Let $\odot$ denote the XNOR operator.
When $p$ is even, it holds that $v\in I_s \land v\in I_t =\TRUE$ or $v\notin I_s \land v\notin I_t =\TRUE$, that is, $v\in I_s \odot v\in I_t$ must be $\FALSE$.
On the other hand, when $p$ is odd, $v\in I_s \odot v\in I_t$ must be $\TRUE$.
Consequently, we have $(p \text{ is even})\odot (v\in I_s\odot v\in I_t)=\TRUE$.
Moreover, an index $q$ must be equal to $p$, since $C(G)=\{v\}$ and hence every flip involving $v$ is counted toward future synchronized flips.
Conversely, if $(p \text{ is even})\odot (v\in I_s\odot v\in I_t)=\FALSE$ and $p = q$, such a reconfiguration sequence $\rho$ can be constructed.
We thus obtain the following.
\begin{align*}
    f(G,p,q)=\begin{cases}
        p & \text{ if } (p \text{ is even})\odot (v\in I_s\odot v\in I_t)=\TRUE ~\land~ p = q, \\
        +\infty & \text{otherwise.}
    \end{cases}
\end{align*}

\subsubsection{Operation 2.}\label{sec:operation2}
Next, we consider the case where $G$ is obtained from two rooted block graphs
$G_1$ and $G_2$ by Operation~2.
To compute $f(G,p,q)$, consider a shortest reconfiguration sequence $\rho$ such that flips on $r(G)$ occur $p$ times.
Since $r(G) = r(G_1) = r(G_2)$, the roots $r(G_1)$ and $r(G_2)$ must be flipped exactly $p$ times in
both $G_1$ and $G_2$, respectively.
On the other hand, the number of indices $i$ satisfying
$
|I_i\cap C(G_1)|=1 \land I_{i+1}\cap C(G_1)=\emptyset,
$ or $
I_i\cap C(G_1)=\emptyset \land |I_{i+1}\cap C(G_1)|=1,
$
and the analogous condition for $G_2$, may be arbitrary.
Thus, when combining the two sequences, we may use any values
$f(G_1,p,q_1)$ and $f(G_2,p,q_2)$ with $q_1,q_2\in [n]$.
Since $C(G)=\{r(G)\}$, every flip involving $r(G)$ is counted toward future
synchronized flips by the same argument as in the base case, and hence $f(G, p,q)$ has a finite value only if $p=q$.
Finally, although $f(G_1,p,q_1)$ and $f(G_2,p,q_2)$ count the flips involving
$r(G_1)$ and $r(G_2)$ separately, these correspond to the same $p$ flips once
$r(G_1)$ and $r(G_2)$ are identified in $G$.
Hence, $p$ overcounts arise, and we subtract $p$ accordingly.
Thus, we now have the following.
\begin{align*}
    f(G,p,q)=\begin{cases}
        +\infty & \text{ if } p\ne q\\
        \makecell[l]{\min_{q_1,q_2\in [0,n]} \{f(G_1,p,q_1)+f(G_1, p, q_2)-p\}\\
        =\min_{q_1\in [0,n]}f(G_1,p,q_1) + \min_{q_2\in [0,n]} f(G_2,p,q_2)-p}&\text{ otherwise.}
    \end{cases}
\end{align*}

\subsubsection{Operation 3.}\label{sec:operation3}
Consider the case where $G$ is obtained from two rooted block graphs
$G_1$ and $G_2$ by Operation~3.
We now explain how to compute $f(G,p,q)$ from the already computed values
$f(G_1,p_1,q_1)$ and $f(G_2,p_2,q_2)$, where
$p_1,p_2,q_1,q_2\in [0,n]$.
Let $\sigma_{G_1}$ and $\sigma_{G_2}$ be shortest reconfiguration sequences witnessing
$f(G_1,p_1,q_1)$ and $f(G_2,p_2,q_2)$, respectively.
First, every flip involving $r(G)$ corresponds exactly to a flip involving $r(G_1)$, and vice versa.
Hence, we must have $p=p_1$.

We consider joining $\sigma_{G_1}$ and $\sigma_{G_2}$ into a single reconfiguration sequence $\sigma_G=\langle I_0,I_1,\dots,I_\ell\rangle$ of $G$.
We define binary sequences $\flipC$ and $\flipr$ of lengths
$|\sigma_{G_1}|+1$ and $|\sigma_{G_2}|+1$, respectively.
For each $i\in\{0,\dots,|\sigma_{G_1}|\}$, $\flipC[i] =\TRUE$ if and only if $|\sigma_{G_1}[i]\cap C(G_1)|=1$.
Similarly, for each $j\in\{0,\dots,|\sigma_{G_2}|\}$, $\flipr[j] =\TRUE$ if and only if $r(G_2)\in \sigma_{G_2}[j]$.
By the definition of $\flipC[0]$ and $\flipr[0]$, we have $(\flipC[0], \flipr[0])\ne (\TRUE, \TRUE)$.
Here, for a sequence $A$ of length $|A|$ and integers $a,b$ with $0\le a\le b\le |A|-1$, let $A[a:b]$ denote the subsequence consisting of the $(a+1)$-th through $(b+1)$-th elements of $A$.

We use the following simple observation.
Assume that $q_1\le p_2$ and $\flipC[0]\ne \flipr[0]$.
By the definition of $q_1$, there are exactly $q_1$ indices $i$ such that either $|I_i\cap C(G_1)|=1$ and $|I_{i+1}\cap C(G_1)|=0$,
or
$
|I_i\cap C(G_1)|=0
$ and $
|I_{i+1}\cap C(G_1)|=1.
$
All these flips can be synchronized with flips involving $r(G_2)$.
Since the symmetry also holds, we have the following.

\begin{proposition}\label{prop:otoku}
If $\flipC[0]\ne \flipr[0]$, then the number of possible synchronized flips is $\min\{q_1,p_2\}$.
\end{proposition}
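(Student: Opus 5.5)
We prove the two inequalities separately: at most $\min\{q_1,p_2\}$ synchronizations are possible, and that many can actually be achieved.

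\textbf{Upper bound.} A synchronized flip pairs one flip of $\sigma_{G_1}$ that changes $|I\cap C(G_1)|$ between $1$ and $0$ with one flip of $\sigma_{G_2}$ that involves $r(G_2)$. Each flip can take part in at most one such pair. The reason is that every intermediate set is independent and $C(G)=C(G_1)\cup\{r(G_2)\}$ is a clique, so a single step of $\sigma_G$ can touch at most one vertex of $C(G)$ entering and at most one leaving. The first kind of flip occurs exactly $q_1$ times by definition of $q_1$, and the second kind exactly $p_2$ times. Hence the number of synchronized flips is at most $\min\{q_1,p_2\}$.

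\textbf{Lower bound.} We construct an interleaving. Record the status of the clique $C(G)$ along $\sigma_{G_1}$ by $\flipC$, and along $\sigma_{G_2}$ by $\flipr$. The $q_1$ relevant flips of $\sigma_{G_1}$ are exactly the positions where $\flipC$ changes value. The $p_2$ flips involving $r(G_2)$ are exactly the positions where $\flipr$ changes value. Both sequences alternate between $\TRUE$ and $\FALSE$ at these change points.

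Assume $\flipC[0]\ne\flipr[0]$, say $\flipC[0]=\TRUE$ and $\flipr[0]=\FALSE$. The $t$-th change of $\flipC$ goes $\TRUE\to\FALSE$ exactly when $t$ is odd, and the $t$-th change of $\flipr$ goes $\FALSE\to\TRUE$ exactly when $t$ is odd. So for each $t\le\min\{q_1,p_2\}$, the $t$-th change of $\flipC$ and the $t$-th change of $\flipr$ have complementary directions: one vertex of $C(G)$ leaves the independent set while another enters.

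We schedule $\sigma_G$ as follows. Between consecutive matched pairs, perform the intervening flips of $\sigma_{G_1}$ and $\sigma_{G_2}$ one after another; they touch $C(G)$ without changing its status and lie on different sides, so they never conflict. Then perform the $t$-th matched pair as one synchronized flip, exactly as in the exchange argument preceding the definition of synchronized flips. Its two parts are disjoint because $V(G_1)\cap V(G_2)=\emptyset$. Their union is connected because $r(G_2)$ is adjacent to every vertex of $C(G_1)$.

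\textbf{Invariant.} We must check that every intermediate set contains at most one vertex of $C(G)$. Inductively, at each point the statuses of $C(G_1)$ and of $r(G_2)$ are complementary. This holds initially by $\flipC[0]\ne\flipr[0]$, and each synchronized flip swaps both statuses simultaneously. Once $\min\{q_1,p_2\}$ pairs are used up, one side has no changes left. The remaining changes of the other side are applied after that side is emptied of $C(G)$. The symmetric case $\flipC[0]=\FALSE$, $\flipr[0]=\TRUE$ is handled identically.

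\textbf{Main obstacle.} The delicate step is reordering the non-status-changing flips so that the $t$-th changes can be paired. This requires that flips of $\sigma_{G_1}$ and of $\sigma_{G_2}$ commute whenever they do not both touch $C(G)$. That follows since the only edges between the two sides are those between $r(G_2)$ and $C(G_1)$, and the complementary-status invariant ensures such edges never join two chosen vertices.
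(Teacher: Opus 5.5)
Your upper bound is fine, and so is the first part of your lower bound. The $t$-th changes of $\flipC$ and $\flipr$ have complementary directions, and each merged flip is disjoint and connected. The gap is in the tail step.

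\textbf{Where the tail step fails.} You claim that, after $\min\{q_1,p_2\}$ matched pairs, the surplus changes of the longer sequence ``are applied after that side is emptied of $C(G)$''. This is false whenever the sequence that runs out of changes first ends in state $\TRUE$. Take $\flipC[0]=\TRUE$, $\flipr[0]=\FALSE$, $q_1=2$, $p_2=4$. For instance, let $G_1=\{u\}$ and $G_2=\{r\}$, with $u$ flipped out and back in, and $r$ flipped in, out, in, out. Your schedule synchronizes $u$-out with $r$-in, then $u$-in with $r$-out. From then on $u$ stays in the set, yet $r$ must still enter once more, so your interleaving passes through the non-independent set $\{u,r\}$. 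The same failure occurs for $p_2<q_1$ when $\flipr$ ends in $\TRUE$, e.g.\ $q_1=3$, $p_2=1$, $\flipC[0]=\TRUE$. Your invariant that the two statuses are complementary at every point is exactly what must be given up. The surplus toggles of the longer side need a window in which \emph{both} sides are empty, and the $t$-th-with-$t$-th matching never creates one.

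\textbf{How to repair it.} The proposition still holds in these cases, but the matching has to change. Suppose the shorter sequence, say $\flipC$ with $q_1<p_2$, ends in $\TRUE$. Since $I_t$ is independent in $G$, $\flipr$ must end in $\FALSE$. So the last change of $\flipC$ is $\FALSE\to\TRUE$ and the last change of $\flipr$ is $\TRUE\to\FALSE$. Then:
\begin{enumerate}
\item match the $t$-th changes for $t<q_1$, after which $C(G_1)$ is empty and $r(G_2)$ is in the set;
\item perform the changes $q_1,\dots,p_2-1$ of $\flipr$ (an even number) alone, while $C(G_1)$ stays empty;
\item synchronize the $q_1$-th change of $\flipC$ with the $p_2$-th (last) change of $\flipr$.
\end{enumerate}
This achieves all $q_1$ synchronizations, and the case $p_2<q_1$ is symmetric.

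You also need to exclude the degenerate case $\min\{q_1,p_2\}=0$ in which the side with no changes is occupied throughout. There no valid interleaving exists at all, and the paper treats this separately in $g_1$ and $g_2$. The paper itself only asserts that ``all these flips can be synchronized'', so this endpoint/parity case distinction is precisely what a complete proof must supply.
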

We now show how to compute the value $f(G,p,q)$, by separately showing the equation for three cases: $q_1>p_2$, $q_1<p_2$, and $q_1=p_2$.

\paragraph*{Case: $q_1>p_2$.}
Let $g_1(G,p,q)$ denote the minimum length of a reconfiguration sequence for $G$ in the case $q_1>p_2$.
First, if $p_2=0$ and $r(G_2)\in I_s$, then any choice of $q_1>0$ yields no reconfiguration sequence, since no vertex of $C(G_1)$ is contained in any independent set.

We distinguish two cases according to the initial values of $\flipC$ and $\flipr$:
either $\flipC[0]\ne \flipr[0]$, or $\flipC[0]=\flipr[0]=\FALSE$.
Recall that $(\flipC[0], \flipr[0])\ne (\TRUE, \TRUE)$ by the definition.

If $\flipC[0]\ne \flipr[0]$, then \Cref{prop:otoku} applied to the pair $(\flipC,\flipr)$ implies that the number of synchronized flips is $\min\{q_1,p_2\}=p_2$.
Now assume that $\flipC[0]=\flipr[0]=\FALSE$.
Here, let $i$ be the smallest index such that $\flipC[i]=\TRUE$.
Then the first value $\flipC[i]$ and $\flipr[0]$ of the two sequences
are different.
Moreover, the truth value in $\flipC[i:|\sigma_{G_1}|]$ changes exactly $q_1-1$ times, whereas the truth value in $\flipr[0:|\sigma_{G_2}|]$ changes exactly $p_2$ times.
Since $q_1>p_2$, we have $q_1-1\ge p_2$.
Hence, by \Cref{prop:otoku}, the number of possible synchronized flips is
$\min\{q_1-1,p_2\}=p_2$.
\rev{Now, to compute the value $g_1(G,p,q)$, we introduce an auxiliary positive integer $s\in [0,p_2]$, which indicates the number of synchronized flips.}
The number of indices $i$ satisfying either $|I_i\cap C(G)|=1 \land I_{i+1}\cap C(G)=\emptyset$ or $I_i\cap C(G)=\emptyset \land |I_{i+1}\cap C(G)|=1$ is exactly $(q_1-s)+(p_2-s)=q_1+p_2-2s$ in both cases, since the numbers of possible synchronized flips in $C(G_1)$ and $r(G_2)$ decrease by $s$ in both cases.
Therefore, we consider pairs $(q_1,p_2)$ satisfying $q=q_1+p_2-2s$.
Moreover, the length of the reconfiguration sequence for $G$ is reduced by $s$ from $f(G_1, p_1, q_1) + f(G_2, p_2, q_2)$.

Therefore, the value $g_1(G,p,q)$ can be computed as follows.
For each $p_2\in[0,n]$, we compute
$f^{*}(G_2,p_2)\coloneq \min_{q_2\in[0,n]} f(G_2,p_2,q_2)$,
since $q_2$ may take any value.
For this value, we have:
\begin{align*}
    g_1(G,p,q)=\begin{cases}
         \min_{\substack{1\le p_2< q_1\le n\\s\in [0,p_2]\\q_1+p_2-2s=q}} \{f(G_1,p,q_1)+f^{\ast}(G_2,p_2)-s\}& \text{ if } r(G_2)\in I_s\\
         \min_{\substack{0\le p_2< q_1\le n\\s\in [0,p_2]\\q_1+p_2-2s=q}} \{f(G_1,p,q_1)+f^{\ast}(G_2,p_2)-s\}& \text{ otherwise. }\\
    \end{cases}
\end{align*}

\paragraph*{Case: $q_1< p_2$.}
Let $g_2(G,p,q)$ denote the minimum length of a reconfiguration sequence for $G$ in the case $q_1<p_2$.
By symmetry with the case $q_1>p_2$, the number of possible synchronized flips is $q_1$, and the numbers of possible synchronized flips in $C(G_1)$ and $r(G_2)$ decrease by $s\in [0,q_1]$ in both cases.
Moreover, if $q_1=0$ and $|C(G_1)\cap I_s|=1$, then any choice of $p_2>0$ yields no reconfiguration sequence, since $r(G_2)$ is not contained in any independent set.
Thus, the value $g_2(G,p,q)$ can be computed as:
\begin{align*}
g_2(G,p,q)=\begin{cases}
         \min_{\substack{1\le q_1< p_2\le n\\s\in [0,q_1]\\p_2+q_1-2s=q}} \{f(G_1,p,q_1)+f^{\ast}(G_2,p_2)-s\}& \text{ if } |C(G_1)\cap I_s|=1\\
         \min_{\substack{0\le q_1< p_2\le n\\s\in [0,
         q_1]\\p_2+q_1-2s=q}} \{f(G_1,p,q_1)+f^{\ast}(G_2,p_2)-s\}& \text{ otherwise, }\\
    \end{cases}
\end{align*}
where $f^{*}(G_2,p_2)= \min_{q_2\in[0,n]} f(G_2,p_2,q_2)$.

\paragraph*{Case: $q_1= p_2$.}
Let $g_3(G,p,q)$ denote the minimum length of a reconfiguration sequence for $G$ in the case $q_1=p_2$.
We distinguish three cases according to the initial values of $\flipC$, $\flipr$, and $q_1$:
either $\flipC[0]\ne \flipr[0]$, $\flipC[0]=\flipr[0]=\FALSE\land q_1=0$, $\flipC[0]=\flipr[0]=\FALSE\land q_1>0$.
Recall that $(\flipC[0], \flipr[0])\ne (\TRUE, \TRUE)$ by the definition.
In each case, we count the number of possible synchronized flips.

First, if $\flipC[0]\ne \flipr[0]$, then we can apply \Cref{prop:otoku}, and the number of possible synchronized flips is $q_1$.
Note that, the condition $\flipC[0]\ne \flipr[0]$ can be translated to $(|C(G_1)\cap I_s|=1\land r(G_2)\notin I_s)\lor (|C(G_1)\cap I_s|=0\land r(G_2)\in I_s)$. 
Thus, either $|C(G_1)\cap I_s|=1$ or $r(G_2)\in I_s$ holds.
Since $C(G)= C(G_1)\cup\{r(G_2)\}$, the condition $\flipC[0]\ne \flipr[0]$ is equivalent to $|C(G)\cap I_s|=1$.

Next, if $\flipC[0]=\flipr[0]=\FALSE\land q_1=0$, then there is no possible synchronized flip.

Finally, consider the case $\flipC[0]=\flipr[0]=\FALSE\land q_1>0$.
We claim that the number of possible synchronized flips is $q_1-1$ in this case.
The condition $q_1=p_2>0$ ensures that each of $C(G_1)$ and $r(G_2)$ is intersected by at least one independent set in the sequence.
Furthermore, since there is no independent set $I$ of $G$ such that $|I\cap C(G_1)|=1$ and $r(G_2)\in I$, there exists the minimum integer $j\in[\ell]$ such that exactly one of $|I_j\cap C(G_1)|=1$ and $r(G_2)\in I_j$ holds, that is, exactly one of $\flipC[j]$ and $\flipr[j]$ is $\TRUE$.
If $\flipC[j]=\TRUE$, then we apply \Cref{prop:otoku} to $\flipC[j:|\sigma_{G_1}|]$ and $\flipr$.
If $\flipr[j]=\TRUE$, then we apply \Cref{prop:otoku} to $\flipr[j:|\sigma_{G_2}|]$ and $\flipC$.
Hence, in either case, the number of possible synchronized flips is $q_1-1$.

Let $s$ be the number of synchronized flips.
As discussed in Case $q_1 < p_2$, we have $q=q_1+p_2-2s=2(p_2-s)$, and the length of the reconfiguration sequence for $G$ is reduced by $s$ from $f(G_1, p_1, q_1) + f(G_2, p_2, q_2)$.
Therefore, $g_3(G,p,q)$ can be computed as follows.
\begin{align*}
    g_3(G,p,q)=\begin{cases}
      \min_{\substack{q_1\in [0,n]\\ s\in [0,q_1]\\ 2(q_1-s)=q}} \{f(G_1,p, q_1)+f^*(G_2,q_1)-s\}  &\text{ if } |C(G)\cap I_s|=1\\
      \min\left(\makecell[l]{f(G_1,p, 0)+f^*(G_2,0),\\\min_{\substack{q_1\in [1,n]\\ s\in [0,q_1-1]\\ 2(q_1-s)=q}} \{f(G_1,p, q_1)+f^*(G_2,q_1)-s\}}\right)&\text{ otherwise, }
    \end{cases}
\end{align*}
where $f^{*}(G_2,q_1)= \min_{q_2\in[0,n]} f(G_2,q_1,q_2)$.

Finally, $f(G,p,q)$ is computed by the minimum among $g_1(G,p,q),g_2(G,p,q),g_3(G,p,q)$.

\iftrue
\revrev{\subparagraph*{Running time.}
We now analyze the running time for this algorithm.
First, one can observe that there are $O(n^2)$ functions for each subgraph $G'$ constructed in a subtree in $T(G)$, and there are $O(n)$ nodes in the decomposition tree.
We proceed to estimate the time for computing $f(G',p,q)$ for each case in which Operations 1, 2, and 3 are applied.
\begin{itemize}
    \item Operation 1: Clearly, it can be done in constant time by checking each condition.
    \item Operation 2: Both $\min_{q_1\in [0,n]}f(G_1,p,q_1)$ and $\min_{q_2\in [0,n]} f(G_2,p,q_2)$ can be obtained in $O(n)$ time.
    Moreover, if $p\ne q$, then the algorithm simply returns a value $+\infty$; thus, the total time for computing $f(G,p,q)$ is $O(n)$.
    \item Operation 3: First, we can observe that $f^\ast(G_2, p_2)$ and $f^\ast(G_2, q_1)$ can be computed in $O(n)$ time for each $p_2\in [0,n]$ and $q_2 \in [0,n]$, respectively.
    Moreover, for fixed $q_1,p_2\in [0,n]$, the value $s$ is uniquely determined by $q_1+p_2-2s=q$.
    Thus, the time for computing $g_1(G,p,q)$, $g_2(G,p,q)$, and $g_3(G,p,q)$ are bounded by $O(n^2)$.
\end{itemize}
Therefore, the overall running time can be bounded by $O(n^2)\cdot O(n^2)\cdot O(n)=O(n^5)$.}
\fi
\fi

\subsubsection*{Trees.}

A graph is a tree if every biconnected component of $G$ forms a clique with two vertices. 
Thus, every tree is a block graph, and hence the polynomial-time solvability for trees follows immediately from \Cref{thm:block}.
We improve the running time to be faster than $O(n^5)$ through some observations.

Our core observation is that Operation 3 can only be applied when $C(G_1)=\{r(G_1)\}$ to obtain a tree, since joining $G_2$ into $G_1$ with $|C(G_1)|\ge 2$ results in a cycle consisting of at least three vertices.
Thus, all synchronized flips involving a vertex of $C(G)$ also involve $r(G)$, implying that they can be tracked using only by $r(G)$.

\iftrue
\revrev{For each rooted tree $G$ and each integer $p\in [n]$, let $h(G,p)$ be the
shortest length over all reconfiguration sequences on $G$ such that exactly
$p$ flips involving $r(G)$ can be used for future synchronized flips.}

\subparagraph*{Operation 1.}
\revrev{Although we define $f(G,p,q)$ for well-definedness, the value $f(G,p,q)$ is finite only if $p=q$.
Similarly, we have the following equation.
\begin{align*}
    h(G,p)=\begin{cases}
        +\infty & \text{ if } (p \text{ is even})\odot (v\in I_s\odot v\in I_t)=\FALSE\\
        p& \text{otherwise}
    \end{cases}
\end{align*}}

\subparagraph*{Operation 2.}
\revrev{As observed in \Cref{sec:operation2}, when identifying the roots, the number
of possible synchronized flips at the root clique can be chosen arbitrarily.
Accordingly, the recurrence in \Cref{sec:operation2} takes
$
\min_{q_1\in [0,n]} f(G_1,p,q_1)
$ and $
\min_{q_2\in [0,n]} f(G_2,p,q_2),
$
which correspond to $h(G_1,p)$ and $h(G_2,p)$, respectively.}

\subparagraph*{Operation 3.}
\revrev{Consider the case where $G$ is obtained from $G_1$ and $G_2$ by Operation~3.
We first explain the correspondence of $h(G_1, p_1)$ and $f(G_1,p_1,q_1)$ (resp. $h(G_2, p_2)$ and $f(G_2,p_2,q_2)$) for $p_1,p_2,q_1,q_2\in [0,n]$.
Recall that $f(G_1, p,q)$ is finite only when $p_1=q_1=p$ since $C(G_1)=\{r(G_1)\}$.
Hence $h(G_1,p)$ corresponds to the value $f(G_1, p,p)$.
Moreover, regarding $G_2$, the number of possible synchronized flips at the root clique can be chosen arbitrarily, by the same discussion as in \Cref{sec:operation3}.
Thus, $h(G_2, p_2)$ corresponds to $\min_{q_2\in[0,n]} f(G_2,p_2,q_2)=f^{*}(G_2,p_2)$.}

\revrev{Again, let $\sigma_{G_1}$ and $\sigma_{G_2}$ be shortest reconfiguration sequences witnessing
$f(G_1,p_1,q_1)$ and $f(G_2,p_2,q_2)$, respectively.
Assume that $s$ synchronized flips are possible when joining $\sigma_{G_1}$ and $\sigma_{G_2}$ into a single reconfiguration sequence $\sigma_G=\langle I_0,I_1,\dots,I_\ell\rangle$ of $G$, but that $\sigma_G$ contains fewer than $s$ synchronized flips in $C(G)$.
Then at least one possible synchronized flip is not performed, and hence there
exist three consecutive independent sets $I_i,I_{i+1},I_{i+2}$ in the sequence
that satisfy 
$|I_i\cap C(G)|=1$, $|I_{i+1}\cap C(G)|=0$, and $|I_{i+2}\cap C(G)|=1$.
Moreover, we have either $I_i\cap C(G)=\{r(G_1)\}$ and $I_{i+2}\cap C(G)=\{r(G_2)\}$, or $I_i\cap C(G)=\{r(G_2)\}$ and $I_{i+2}\cap C(G)=\{r(G_1)\}$.
By the observation for synchronized flip, $I_i$ and $I_{i+2}$ are adjacent.
By bypassing $I_{i+1}$, we obtain a shorter reconfiguration sequence.
Therefore, whenever $s$ synchronized flips are possible, we may assume that
all $s$ of them are performed.}

\revrev{By modifying the equation in \Cref{sec:operation3}, we obtain the following equation.
\begin{align*}
    h(G,p)=\min(h_1(G,p), h_2(G,p), h_3(G,p)),
\end{align*} 
where
\begin{align*}
    &h_1(G,p)=\begin{cases}
         \min_{1\le p_2< p} \{h(G_1,p)+h(G_2,p_2)-p_2\}& \text{ if } r(G_2)\in I_s\\
         \min_{0\le p_2< p} \{h(G_1,p)+h(G_2,p_2)-p_2\}& \text{ otherwise. }
    \end{cases}\\
    &h_2(G,p)=\begin{cases}
         h(G_1,0)+h(G_2,0) & \text{ if } p=0\land r(G_1)\in I_s\\
         \min_{p< p_2\le n} \{h(G_1,p)+h(G_2,p_2)-p\}& \text{ otherwise. }
    \end{cases}\\
    &h_3(G,p)=\begin{cases}
         h(G_1,p)+h(G_2,p)-p& \text{ if } |C(G)\cap I_s|=1\\
         h(G_1, 0)+h(G_2, 0)&\text{ if } |C(G)\cap I_s|=0\land p=0\\
         h(G_1,p)+h(G_2,p)-p+1& \text{ otherwise. }
    \end{cases}
\end{align*}}

\revrev{\subparagraph*{Running time.}
For each rooted tree $G'$ that appears in $T(G)$, there are $O(n)$ values of $h(G',p)$.
It takes $O(1)$ time and $O(n)$ time for Operations 1 and 2, respectively.
To reduce the running time for Operation 3, for each rooted tree $G'$, we precompute the auxiliary functions
$
h'_1(G',p')=\min_{p\le p'}\{h(G',p)-p\}
$
and 
$
h'_2(G',p')=\min_{p\ge p'}\{h(G',p)\}.
$
They can be computed in $O(n)$ time using the recurrences
$
h'_1(G',p')=\min\{h'_1(G',p'-1),\,h(G', p')-p'\}
$
and
$
h'_2(G',p')=\min\{h'_2(G',p'+1),h(G',p')\},
$
by processing $p'$ in increasing and decreasing order, respectively.
Hence, the corresponding update for Operation~3 takes $O(1)$ time, and $O(n)$ time in total.
Overall, the total running time is $O(n^2)$.
This completes the proof of \Cref{thm:tree}.}
\fi

\begin{theorem}\label{thm:tree}
    \prb{SymISR} can be solved in $O(n^2)$ time on trees.
\end{theorem}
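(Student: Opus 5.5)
The plan is to specialize the dynamic program of \Cref{thm:block} to trees, collapsing the two-index table $f(G',p,q)$ into a one-index table $h(G',p)$. We then bound the cost of each operation on the decomposition tree.

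\textbf{Step 1: decomposition.} By \Cref{lem:blockdecomposition}, applied to each connected component, we compute a decomposition tree $T(G)$ with $O(n)$ nodes. For a tree, the construction is linear: root the tree at $r$ and, for each vertex, attach its child subtrees one at a time. Attaching a child $c$ of $v$ is an Operation~3 between a subtree rooted at $v$ with $C=\{v\}$ and the subtree rooted at $c$. Combining the resulting pieces at $v$ is an Operation~2. Disconnected inputs are handled by solving each component separately and summing the optimal lengths, since flips in different components never interact.

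\textbf{Step 2: collapse the state.} The key structural fact is that in a tree, Operation~3 is only ever applied when $C(G_1)=\{r(G_1)\}$; otherwise a triangle would arise. It follows that $f(G_1,p,q)$ is finite only when $p=q$, so the number of flips available for synchronization at the root clique equals the number of root flips. We therefore define $h(G',p)$ as the optimum with exactly $p$ flips at the root. Operations~1 and~2 translate directly from the block-graph recurrences, using $h(G_i,p)=\min_q f(G_i,p,q)$.

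\textbf{Step 3: Operation~3.} We specialize $g_1,g_2,g_3$ by setting $q_1=p$. We also argue that whenever $s$ synchronized flips are possible, an optimal sequence performs all of them. Otherwise there would be consecutive sets $I_i,I_{i+1},I_{i+2}$ with the single clique vertex switching between $r(G_1)$ and $r(G_2)$. Skipping $I_{i+1}$ via a synchronized flip then shortens the sequence. This fixes $s$ at its maximum:
\begin{itemize}
\item $\min\{p,p_2\}$ when the starting indicators differ, i.e., $|C(G)\cap I_s|=1$;
\item one less in the equal-start case with $p=p_2>0$.
\end{itemize}
This gives the three formulas $h_1,h_2,h_3$, with the boundary cases where the root of one side is in $I_s$ but the other side has zero flips.

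\textbf{Step 4: running time.} Each of the $O(n)$ nodes of $T(G)$ holds $O(n)$ values. Operations~1 and~2 are trivially $O(1)$ per entry. For Operation~3, naively evaluating $h_1$ and $h_2$ needs a minimum over $p_2$, costing $O(n)$ per entry and $O(n^3)$ overall. We avoid this with prefix and suffix minima computed once per child in $O(n)$ time:
\begin{itemize}
\item $h'_1(G_2,p')=\min_{p_2\le p'}\{h(G_2,p_2)-p_2\}$ serves $h_1$;
\item $h'_2(G_2,p')=\min_{p_2\ge p'}h(G_2,p_2)$ serves $h_2$.
\end{itemize}
Each entry of $h(G,\cdot)$ then costs $O(1)$, so each node costs $O(n)$ and the total is $O(n^2)$.

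\textbf{Main obstacle.} The delicate part is Step~3: establishing that the collapsed recurrence is exact. This requires that synchronized flips can always be realized, using the reordering argument for flips in different components of $G-C$, and that maximal synchronization is optimal. The off-by-one in the case $\flipC[0]=\flipr[0]=\FALSE$ also needs care, and I would verify it against the block-graph derivation of $g_3$.
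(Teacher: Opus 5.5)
Your proposal is correct and follows the paper's proof essentially step for step. Both collapse $f(G',p,q)$ to $h(G',p)$ because every Operation~3 in a tree has $C(G_1)=\{r(G_1)\}$, justify maximal synchronization by bypassing the middle set $I_{i+1}$, and use the prefix/suffix minima $h'_1,h'_2$ to bring Operation~3 down to $O(n)$ per node; your explicit linear-time construction of $T(G)$ for trees is a small useful addition, since the paper only cites the polynomial-time bound of \Cref{lem:blockdecomposition} at that point.
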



\subsection{Cographs}

In this section, we prove that \prb{SymISR} is solvable in polynomial time on cographs, also known as $P_4$-free graphs~\cite{Cograph:Corneil81}.
To this end, we work in a slightly more general graph class, defined later.
A graph $G = (V, E)$ is a \emph{cograph} if it can be constructed recursively according to the following rules:
\begin{enumerate}
	\item A graph consisting of a single vertex is a cograph.
	\item If $G_1$ and $G_2$ are cographs, then their disjoint union $G_1 \oplus G_2$ is also a cograph.
	\item If $G_1$ and $G_2$ are cographs, then their join $G_1 \otimes G_2$ is also a cograph.
\end{enumerate}

We now state the theorem for graphs of the form $G_1 \otimes G_2$, where $G_1$ and $G_2$ are arbitrary graphs.
\begin{theorem}\label{thm:join}
    Let $\mathcal{G}$ be the class of graphs of the form $G_1 \otimes G_2$, where $G_1$ and $G_2$ are arbitrary graphs.
    Then \prb{SymISR} can be solved in linear time for $\mathcal{G}$.
\end{theorem}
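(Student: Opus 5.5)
The plan is to show that on $\mathcal{G}$ the optimal length is always at most $2$. The answer is then decided by the check of \Cref{obs:length1} alone, which runs in linear time.

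Let $G = G_1 \otimes G_2$ with $V_1 = V(G_1)$ and $V_2 = V(G_2)$, both nonempty. The key structural fact is that every independent set $I$ of $G$ satisfies $I \subseteq V_1$ or $I \subseteq V_2$. This holds because every vertex of $V_1$ is adjacent to every vertex of $V_2$.

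I would first dispose of the trivial cases. If $I_s = I_t$, the optimum is $0$. If $G[I_s \bigtriangleup I_t]$ is connected, the optimum is $1$. This second case covers the situation where $I_s$ and $I_t$ are nonempty and lie on different sides, say $I_s \subseteq V_1$ and $I_t \subseteq V_2$. Indeed, then $I_s \bigtriangleup I_t = I_s \cup I_t$, and every vertex of $I_s$ is adjacent to every vertex of $I_t$, so the induced subgraph is connected.

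The remaining case is $I_s \ne I_t$ with $G[I_s \bigtriangleup I_t]$ disconnected. By the above, both sets then lie on the same side, say $I_s, I_t \subseteq V_1$; empty sets may be assigned to either side, so this is without loss of generality. Here I would construct an explicit intermediate set: pick any $w \in V_2$ and set $I_1 = \{w\}$, which is trivially independent. Then $I_s \bigtriangleup I_1 = I_s \cup \{w\}$ induces a connected subgraph, since $w$ is adjacent to every vertex of $I_s \subseteq V_1$; if $I_s = \emptyset$, this set is just $\{w\}$, which is also connected. The same argument applies to $I_1 \bigtriangleup I_t$. This gives a sequence of length $2$, and length $1$ is impossible by assumption, so the optimum is exactly $2$.

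The algorithm is therefore: compute $I_s \bigtriangleup I_t$, test connectivity of the induced subgraph by BFS in $O(n+m)$ time, and output $0$, $1$, or $2$ accordingly; a witness sequence is produced along the way. There is no real obstacle here. The only point needing care is the boundary case where $I_s$ or $I_t$ is empty, which is handled by the argument above, and the implicit assumption that both $G_1$ and $G_2$ are nonempty, which is needed so that a vertex $w$ on the opposite side exists.
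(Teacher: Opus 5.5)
Your proposal is correct and follows essentially the same route as the paper, which proves in \Cref{lem:join} that the optimum is at most $2$ by the same dichotomy (every independent set lies within one side; on the same side use a singleton $\{v\}$ from the other side as the intermediate set, on opposite sides the symmetric difference induces a complete bipartite graph) and then decides the instance with the linear-time check of \Cref{obs:length1}. Your remark that both $G_1$ and $G_2$ must be nonempty is worth keeping: the paper leaves it implicit, and without it $\mathcal{G}$ would contain every graph.
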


This class trivially contains all connected cographs.
Since disconnected cographs can be solved independently on each connected component, the theorem immediately yields tractability on cographs.

We prove the following lemma, which provides \Cref{thm:join} when combined with \Cref{obs:length1}.
\begin{lemma}\label{lem:join}
    Let $\mathcal{G}$ be a class of graphs of the form $G_1 \otimes G_2$, where $G_1$ and $G_2$ are arbitrary graphs.
    For a graph $G\in \mathcal{G}$ and two independent sets $I_s, I_t$ of $G$, the length of the shortest reconfiguration sequence from $I_s$ to $I_t$ is at most $2$.
\end{lemma}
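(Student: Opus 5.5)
The key structural fact is that in $G = G_1 \otimes G_2$ every vertex of $V(G_1)$ is adjacent to every vertex of $V(G_2)$. Hence any independent set lies entirely in $V(G_1)$ or entirely in $V(G_2)$. Moreover, any vertex set $X$ meeting both $V(G_1)$ and $V(G_2)$ induces a connected subgraph, since it contains a spanning complete bipartite subgraph between $X\cap V(G_1)$ and $X\cap V(G_2)$. I will use this to exhibit an intermediate independent set $I_1$ whenever $G[I_s \bigtriangleup I_t]$ is disconnected. If $I_s = I_t$ or $G[I_s\bigtriangleup I_t]$ is connected, the length is at most $1$ and there is nothing to prove.

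\textbf{Case 1: $I_s$ and $I_t$ lie on different sides}, say $I_s \subseteq V(G_1)$ and $I_t\subseteq V(G_2)$, with both nonempty. Then $I_s\bigtriangleup I_t = I_s\cup I_t$ meets both sides and is connected, so one step suffices. If one of them is empty, say $I_t=\emptyset$, then $I_s\bigtriangleup I_t = I_s$ may be disconnected. In that case I pick any vertex $w$ on the side opposite to $I_s$ (if the opposite side is empty, $G$ is not a genuine join; I will assume both $G_1,G_2$ are nonempty) and set $I_1=\{w\}$. Then $I_s \bigtriangleup I_1 = I_s\cup\{w\}$ is connected because it meets both sides, and $I_1\bigtriangleup I_t=\{w\}$ is trivially connected. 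The case $I_s=\emptyset$ is symmetric.

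\textbf{Case 2: both lie on the same side}, say $I_s, I_t\subseteq V(G_1)$, and $G[I_s\bigtriangleup I_t]$ is disconnected. The same trick works: choose $w\in V(G_2)$ and set $I_1=\{w\}$. Then $I_s\bigtriangleup I_1 = I_s\cup\{w\}$, and $I_1 \bigtriangleup I_t = I_t\cup\{w\}$ (or just $\{w\}$ if $I_t=\emptyset$). Each is connected, because $w$ is adjacent to all of $V(G_1)$, so it forms a star centered at $w$ plus possibly further edges. $I_1$ is trivially independent. This yields a sequence $\langle I_s,\{w\},I_t\rangle$ of length $2$.

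The only delicate point is the degenerate situation where one of $G_1,G_2$ is empty. I will handle it by noting that the class $\mathcal{G}$ is meant with $V(G_1),V(G_2)\neq\emptyset$, since otherwise $G$ is arbitrary and the claim is false by \Cref{the:planarhard}. Everything else is routine. Combined with \Cref{obs:length1}, the answer is then decided in linear time: return $0$, $1$, or $2$ according to whether $I_s=I_t$ and whether $G[I_s\bigtriangleup I_t]$ is connected.
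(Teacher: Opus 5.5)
Your proposal is correct and follows essentially the same argument as the paper. Every independent set lies on one side of the join, sets on different sides (both nonempty) are directly adjacent because their union induces a complete bipartite graph, and otherwise the singleton $\{w\}$ with $w$ on the opposite side is a valid intermediate set; your explicit handling of the empty-set subcase and of $V(G_1),V(G_2)\neq\emptyset$ just makes precise what the paper assumes tacitly when it picks an arbitrary $v\in V_2$.
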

\iftrue
\begin{proof}
\revrev{Let $G$ be a graph such that $G\in \mathcal{G}$, and define $V_1\coloneq V(G_1)$ and $V_2\coloneq V(G_2)$.
We begin with a simple observation that for every independent set $I$ of $G$, we have $I\subseteq V_1$ or $I\subseteq V_2$; if there are two vertices $v_1,v_2\in I$ such that $v_1\in V_1$ and $v_2\in V_2$, then there is a contradiction that $uv\in E(G)$ and that $I$ is an independent set.
To prove \Cref{lem:join}, it is sufficient to show that $I_s$ and $I_t$ are adjacent, or there is an independent set $I_1$ that is adjacent to both $I_s$ and $I_t$.}

\revrev{First, consider that $I_s$ and $I_t$ belong to the same side. 
Then we assume that $I_s\subseteq V_1$ and $I_t\subseteq V_1$ without loss of generality.
Then we choose $v\in V_2$ arbitrary and set $I_1\coloneq \{v\}$ as an independent set $I_1$.
We now check the connectivity of $G[I_s\bigtriangleup I_1]$ and $G[I_1\bigtriangleup I_t]$, to verify the adjacency.
Indeed, for every vertex $u\in V_1$, there is an edge $uv\in E(G)$ by the definition.
Thus, if there is a vertex $u\in I_s$, then $G[I_s\bigtriangleup I_1]$ is connected.
Moreover, if $I_s=\emptyset$, the graph $G[I_s\bigtriangleup I_1]$ consist of a single vertex $\{v\}$, thus it is connected.
The same discussion can be applied to $I_t$; we can conclude that the length of the shortest reconfiguration sequence from $I_s$ to $I_t$ is at most $2$.}

\revrev{Second, consider that $I_s$ and $I_t$ belong to different sides. 
Then we assume that $I_s\subseteq V_1$ and $I_t\subseteq V_2$ without loss of generality.
We can see that $G[I_s\bigtriangleup I_t]=G[I_s\cup I_t]$ forms a complete bipartite graph, which leads to the adjacency of $I_s$ and $I_t$.}
\end{proof}
\fi

\subsection{Bipartite Chain Graphs}\label{sec:chain}

Now we consider bipartite chain graphs. These are bipartite graphs $G=(V,E)$ with the partition classes $A, B$ and some linear ordering $<_A$ on $A$, such that for all $a_1,a_2\in A$, $a_1<_A a_2$ implies $N(a_1)\subseteq N(a_2)$. Such an ordering also exists for $B$. 
In this subsection, we will prove that \prb{SymISR} is polynomial-time solvable on bipartite chain graphs. We start with results on general bipartite graphs.

\begin{observation}\label{obs:chain_case_bound}
    Let $G=(V,E)$ be a bipartite graph with the classes $A,B$ and two independent sets $I_1,I_2$. If $I_1 $ and $ I_2$ are adjacent, then \rev{\emph{(1)} $I_1 \setminus I_2 \subseteq A$ and $I_2 \setminus  I_1 \subseteq B$, or \emph{(2)} $I_2 \setminus I_1 \subseteq A$ and $I_1 \setminus  I_2 \subseteq B$.}

\end{observation}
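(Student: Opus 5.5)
The plan is to look at the connected graph $G[I_1\bigtriangleup I_2]$ and compare two bipartitions of it: the one coming from $I_1$ and $I_2$, and the one induced by $A,B$. Throughout, let $X = I_1 \bigtriangleup I_2$, and write $R = I_1\setminus I_2$ and $S = I_2 \setminus I_1$, so that $X = R \cup S$ is a disjoint union. By the definition of adjacency, $G[X]$ is connected.

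First, I will show that every edge of $G[X]$ joins $R$ to $S$. The set $R$ is a subset of the independent set $I_1$, so it spans no edge. Likewise $S \subseteq I_2$ spans no edge. Hence $(R,S)$ is a bipartition of the connected graph $G[X]$.

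Second, $(X\cap A, X\cap B)$ is also a bipartition of $G[X]$, because $G$ is bipartite with classes $A$ and $B$. I then invoke the standard fact that a connected bipartite graph has a unique bipartition up to swapping the two sides. To see this, fix a vertex $x_0 \in X$. Every other vertex of $X$ is joined to $x_0$ by a path in $G[X]$, and each edge along the path switches sides in both bipartitions. So the side a vertex lies on is determined by the parity of its distance from $x_0$, in either bipartition. Consequently, either $R = X\cap A$ and $S = X \cap B$, or $R = X\cap B$ and $S = X\cap A$. These give exactly conditions (1) and (2) of the statement.

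The only delicate point is the degenerate case $|X| = 1$, where one of $R$ and $S$ is empty; the uniqueness argument still applies there. For example, if $X = \{x\}$ with $x \in A$, the empty set is trivially contained in $B$, so one of the two conditions holds. Since $I_1 \neq I_2$ is implicit in adjacency, $X$ is nonempty and no further cases arise. I do not expect any real obstacle here: the whole content is the uniqueness of the bipartition of a connected bipartite graph.
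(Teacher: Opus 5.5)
Your proof is correct and is essentially the paper's argument: the paper fixes a vertex of $I_1\setminus I_2$ and uses alternation along paths in the connected graph $G[I_1\bigtriangleup I_2]$ to place every vertex by parity, which is your uniqueness-of-bipartition argument. The only difference is cosmetic: the paper handles the case $I_1\subseteq I_2$ or $I_2\subseteq I_1$ separately, whereas your parity argument from an arbitrary base vertex $x_0$ covers it uniformly.
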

\iftrue
\begin{proof}
    \revrev{If $I_1 \subseteq I_2$ or $I_2 \subseteq I_1$, the claim clearly holds.
    Assume that there is a vertex $v_1 \in I_1 \setminus I_2$ and a vertex $v_2 \in I_2 \setminus I_1$. Without loss of generality, assume $v_1\in A$. Since $I_1 \bigtriangleup I_2$ is connected, there is a path $v_1=u_1,\ldots,u_{\ell} = u$ on $G[I_1 \bigtriangleup I_2]$ for every $u \in I_1 \bigtriangleup I_2$. As $I_1,I_2,A$ and $B$ are independent sets, $u_{2i-1}\in (I_1 \setminus I_2) \cap A$ for all $i \in \{1,\ldots, \lceil\frac{\ell}{2} \rceil\}$, while $u_{2i}\in (I_2 \setminus I_1) \cap B$ for all $i \in \{1,\ldots, \lfloor\frac{\ell}{2} \rfloor\}$. This proves the observation.}
\end{proof}
\fi

 Let $G=(V,E)$ be a connected bipartite chain graph with the classes $A=\{a_1,\ldots, a_{n_A} \},B=\{b_1,\ldots, b_{n_B} \}$ as well as the orderings $<_A$ and $<_B$. 
 Without loss of generality, $a_1 <_A\ldots<_A a_{n_A}$ and $b_1 <_B\ldots<_B b_{n_B}$. 
 Let $I_s$ and $I_t$ be the initial and target independent sets. 
By \Cref{obs:length1}, the case $k=1$ is trivially solvable. Now, we consider $k\geq 3$. 
Define the independent sets $S := \{a_{n_A}\}\cup (I_s \cap A) \subseteq A$ and $T := \{b_{n_B}\} \cup  (I_t \cap B)\subseteq B$. Since $G$ is connected, $N(a_{n_A})=B$ and $N(b_{n_B})=A$. 
This implies that $S\bigtriangleup I_s \subseteq \{a_{n_A}\} \cup (I_s \cap B)$ and $T\bigtriangleup I_t \subseteq \{b_{n_B}\} \cup (I_t \cap A)$ are  connected. 
Since $\{a_{n_A},b_{n_B}\}\in E$ and $V= N(a_{n_A})\cup N(b_{n_B})$, $S\bigtriangleup T$ is connected. 
Hence, $\langle I_s,S,T,I_t\rangle$ is a reconfiguration sequence. 
Hence, we only need to consider the case $k=2$; we now show the equivalent condition for the existence of a shortest reconfiguration sequence with length $2$.
The following lemma gives an equivalent condition.

\begin{lemma}\label{lem:chain}
    Let $G$ be a bipartite chain graph and $I_s$ and $I_t$ be two independent sets such that $G[I_s\bigtriangleup I_t]$ is not connected. The length of the shortest reconfiguration sequence is $2$ if and only if
    \begin{enumerate}
        \item there is an independent set $I$ such that $|I_s\bigtriangleup I |=1$ (resp. $|I\bigtriangleup I_t|=1$) and $G[I\bigtriangleup I_t]$ (resp. $G[I_s\bigtriangleup I]$) is connected, or
        \item $(I_s \cap B) \setminus N(a_{\max}) = (I_t \cap B) \setminus N(a_{\max})$ or $(I_s \cap A) \setminus N(b_{\max}) = (I_t \cap A) \setminus N(b_{\max})$, where $a_{\max} \coloneq \max_{<_A} A\cap (I_s \cup I_t)$ and $b_{\max} \coloneq \max_{<_B} B\cap (I_s \cup I_t)$.
    \end{enumerate}
\end{lemma}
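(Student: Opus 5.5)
The plan is to characterize all intermediate independent sets $I$ that are adjacent to both $I_s$ and $I_t$, using \Cref{obs:chain_case_bound} and the nested neighborhoods of the chain graph. Since $G[I_s\bigtriangleup I_t]$ is disconnected, the length is at least $2$. So we must show that a valid $I$ exists if and only if condition 1 or condition 2 holds.

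For the \emph{if} direction, condition 1 already exhibits a valid intermediate set. Under condition 2, say $(I_s\cap B)\setminus N(a_{\max})=(I_t\cap B)\setminus N(a_{\max})$, we construct
\[
I := (A\cap(I_s\cup I_t))\cup\bigl((I_s\cap B)\setminus N(a_{\max})\bigr).
\]
First, the $A$-part is independent of the $B$-part: every $a\in A\cap(I_s\cup I_t)$ satisfies $N(a)\subseteq N(a_{\max})$ by the chain ordering, and the $B$-part avoids $N(a_{\max})$.

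Next we check that $I\bigtriangleup I_s$ is connected. It equals $(A\cap I_t\setminus I_s)\cup (I_s\cap B\cap N(a_{\max}))$. If $a_{\max}$ lies in this set, the $B$-vertices are attached to it. The $A$-vertices in the set are attached to some $B$-vertex of the set, or else they form a single vertex. This step needs a short case check, and the same check is needed when $a_{\max}\in I_s$.

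By the symmetric argument, $I\bigtriangleup I_t$ is connected. The part of the case analysis I expect to need care is when $a_{\max}$ belongs to only one of $I_s, I_t$. There I would exploit that $b\in I_s\cap B\cap N(a_{\max})$ is nonadjacent to every $a\in I_s\cap A$. By the nested neighborhoods, such a $b$ is adjacent to every $a'\in A$ with $a'\ge_A$ the smallest element adjacent to $b$. I would also consider adding $a_{\max}$ explicitly to the flip sets.

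For the \emph{only if} direction, let $I$ be adjacent to both $I_s$ and $I_t$. By \Cref{obs:chain_case_bound}, each of the two steps either removes only $A$-vertices and adds only $B$-vertices, or the reverse. We split into cases by the direction of each step, including the degenerate cases where one side of a step is empty.

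If some step changes just one vertex, we are in condition 1. Otherwise, consider the case where the first step moves from $B$ to $A$ and the second from $A$ to $B$. Then $I$ contains all of $A\cap(I_s\cup I_t)$, hence $a_{\max}\in I$. Every $B$-vertex of $I_s$ or $I_t$ outside $N(a_{\max})$ is untouched by both flips, since a touched $B$-vertex must stay in $I\bigtriangleup I_s$ or $I\bigtriangleup I_t$. Since $I$ is independent, a $B$-vertex that stays in $I$ must avoid $N(a_{\max})$. These two facts together give $(I_s\cap B)\setminus N(a_{\max})=(I_t\cap B)\setminus N(a_{\max})$.

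The case where both steps go in the same direction is the main obstacle. There, the vertices flipped in the two steps lie on opposite sides at different times, and we must show that connectivity plus the chain structure forces either a single-vertex step or the analogous identity with $b_{\max}$. I would first try to show that a same-direction pair can be rebalanced into one of the already handled configurations by using the maximal neighbors. Failing that, I would compare the maximal elements flipped in each step.
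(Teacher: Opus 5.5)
Your case split is the right one, but you pair $a_{\max}$ with the wrong configuration, and this breaks both directions.

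\textbf{Only-if direction.} Take the case where the first flip goes from $B$ to $A$ and the second from $A$ to $B$, so $I\supseteq A\cap(I_s\cup I_t)$ and $I\cap B\subseteq I_s\cap I_t$. Your claim that every $B$-vertex of $I_s\cup I_t$ outside $N(a_{\max})$ is untouched is false. A removed $B$-vertex only needs a neighbor among the \emph{added} $A$-vertices, and these may lie outside $I_s\cup I_t$, above $a_{\max}$. For example, let $A=\{a_0<_A a'\}$ and $B=\{b_1,b_2,b_3,b_4\}$, with $a'$ adjacent to all of $B$ and $a_0$ adjacent only to $b_4$. Let $I_s=\{a_0,b_1,b_2\}$, $I_t=\{a_0,b_3\}$ and $I=\{a_0,a'\}$. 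Both flips $\{b_1,b_2,a'\}$ and $\{a',b_3\}$ are connected and of your type, and condition~1 fails. Yet $a_{\max}=a_0$ and $(I_s\cap B)\setminus N(a_0)=\{b_1,b_2\}\neq\{b_3\}=(I_t\cap B)\setminus N(a_0)$.

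What this configuration actually gives is the $b_{\max}$-identity. The removed $B$-vertices lie in $I_s\cup I_t$, so their neighborhoods are contained in $N(b_{\max})$. Every $a\in(I_p\cap A)\setminus I_q$ lies in $I\setminus I_q$ and needs such a neighbor. The $a_{\max}$-identity comes from the mirror case $I_s\setminus I,\,I_t\setminus I\subseteq A$, which is the case the paper treats.

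\textbf{If direction.} The same mismatch breaks your construction. Say $a_{\max}\in I_s$; then $I_s\cap B\cap N(a_{\max})=\emptyset$. So your set is just $I=I_s\cup(A\cap I_t)$, and $I_s\bigtriangleup I\subseteq A$, meaning the first flip has at most one vertex. A concrete failure: take the half graph on $a_1<_A\dots<_A a_4$, $b_1<_B\dots<_B b_4$ with $a_ib_j\in E$ iff $i+j\ge 5$, and let $I_s=\{a_2,a_3\}$, $I_t=\{a_1,b_2\}$. The $a_{\max}$-identity holds and condition~1 fails, but your $I=\{a_1,a_2,a_3\}$ leaves $a_2$ isolated in $I\bigtriangleup I_t$.

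The witness must move to the $B$-side and use a vertex outside $I_s\cup I_t$. For $a_{\max}\in I_s$, the identity gives $I_s\cap B\subseteq I_t\cap B$ and $(I_t\setminus I_s)\cap B\subseteq N(a_{\max})$. Hence $I=(I_t\cap B)\cup\{b^{\ast}\}$ with $b^{\ast}=\max_{<_B}B$ works (here $I=\{b_2,b_4\}$). The paper uses $(I_t\cap B)\cup N(a'_{\max})$.

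\textbf{Same-direction case.} You leave this case open, but it yields no identity: it cannot occur. Suppose both flips remove $A$-vertices and add $B$-vertices (the other orientation is symmetric). Set $X_1=I_s\setminus I$, $Y_1=I\setminus I_s$, $X_2=I\setminus I_t$, $Y_2=I_t\setminus I$. All four are nonempty, since otherwise a flip is a single vertex. Because $X_2,Y_1\subseteq I$ while $X_1,Y_2$ avoid $I$, the two flip sets are disjoint and $I_s\bigtriangleup I_t=(X_1\cup Y_1)\cup(X_2\cup Y_2)$.

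By nestedness, $x_i=\max_{<_A}X_i$ is adjacent to all of $Y_i$. Since $I$ is independent, $x_2$ has no neighbor in $Y_1$. This forces $x_2<_A x_1$, because otherwise $Y_1\subseteq N(x_1)\subseteq N(x_2)$. Hence $Y_2\subseteq N(x_2)\subseteq N(x_1)$, so $I_s\bigtriangleup I_t$ is connected, contradicting the hypothesis. This is the contradiction the paper reaches in this case. Your idea of comparing maximal elements is the right tool, but this is the conclusion it has to deliver.
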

\iftrue

\begin{proof}
\revrev{We assume that there is the shortest reconfiguration sequence $\langle I_s,I, I_t\rangle $ of length $2$ for a connected bipartite chain graph $G=(V,E)$ with the classes $A,B$. 
If $I_s\bigtriangleup I \subseteq A$ then $\vert I_s\bigtriangleup I\vert = 1$, since it is connected. 
These cases can be checked in polynomial time since there are only a linear number of possibilities, and afterwards, we only have to check if $I\bigtriangleup I_s$ is connected, which can be done in overall linear time. 
Thus, we can assume $(I_r \bigtriangleup I) \cap C \neq \emptyset$ for $C\in \{A,B\} $ and $r\in \{s,t\}$. 
To bound the cases, we use \Cref{obs:chain_case_bound}. }

\revrev{We start with the case where $I_s\setminus I$ and $ I\setminus I_t$ are subsets of the same class (without the loss of generality, $A$). 
If there is a $b\in I_t\setminus I$, then it has a neighbor in $I\setminus I_t$, since $G[I_t\bigtriangleup I]$ is connected.
Here, we have $ I\setminus I_t\subseteq I\cap A$, and $ I\cap A\subseteq I_s\cap A$, since $(I_s\setminus I)\cap B=\emptyset$.
Thus, we have $I\setminus I_t\subseteq I_s\cap A$.
Hence, $(I_s\setminus I_t)\cap A=((I_s\setminus I)\cup (I\setminus I_t))\cap A=(I_s\setminus I)\cup (I\setminus I_t)$ holds in this case.
Moreover, $I_s\setminus I, I\setminus I_t\subseteq A$ immediately implies $I_s\setminus I_t\subseteq A$, thus we have $I_s\setminus I_t=(I_s\setminus I)\cup (I\setminus I_t)$.
By a similar discussion, we have $I_t\setminus I_s=(I_s\setminus I)\cup (I\setminus I_t)$.
Here, we consider $(I_s \bigtriangleup I) \cup (I\bigtriangleup I_t)= (I_s\setminus I) \cup (I\setminus I_t) \cup (I_t\setminus I) \cup (I\setminus I_s)$, which is indeed equal to $I_s\bigtriangleup I_t$ by the discussion above.
Since $I \bigtriangleup I_t$ and $I \bigtriangleup I_s$ are connected, clearly $I_s\bigtriangleup I_t$ is also connected; a contradiction.}

\revrev{Now, we consider the case $I_s\setminus I, I_t\setminus I\subseteq A$.
Our goals is to prove that this case is possible if and only if $(I_s \cap B) \setminus N(a_{\max}) = (I_t \cap B) \setminus N(a_{\max})$, where $a_{\max} \coloneq \max_{<_A} A\cap (I_s \cup I_t)$. 
Without loss of generality let $a_{\max} \in I_s$. Further, define  $a'_{\max}=\max_{<_A} A\cap I_t$.}

\revrev{First assume $(I_s \cap B) \setminus N(a_{\max}) = (I_t \cap B) \setminus N(a_{\max})$. 
Define $I=(I_t \cap B) \cup N(a_{\max}')$. 
Hence, $ (I_t \cap A)\subseteq (I_t \cap A) \cup N(a_{\max}')= I\bigtriangleup I_t$. This is connected, since for $b'=\max_{<_B}B\in N(a_{\max}')$, $N(b') =A$ and $I_t \cap A \neq \emptyset$.
By the same argument $I\bigtriangleup I_s = (I_s \cap A) \cup N(a_{\max}')$ is also connected, since $b_{\max} \in N(a'_{\max})$. 
Therefore, $\langle I_s, I,I_t\rangle$ is a reconfiguration sequence of length exactly $2$.}  

\revrev{For the other direction, assume there exists $I\subseteq V$ with $\emptyset\subsetneq I_s\setminus I, I_t\setminus I\subseteq A$ such that $\langle I_s, I,I_t\rangle$ is a reconfiguration sequence. 
Hence, $I \cap A \subseteq (I_s \cap I_t) \cap A $ and $(I_s \cup I_t) \cap B \subseteq  I \cap B$, since $a_{\max}$ is maximum in $A\cap(I_s\cup I_t )$. 
Let $p,q \in \{s,t\}$ be two distinct indices. 
Since $G$ is a bipartite chain graph, the definition of $a_{\max}$ implies $N((I_q \setminus I) \cap A )\subseteq N(I_q\cap A) \subseteq N(a_{\max})$. Furthermore, the assumption $I_s\setminus I, I_t\setminus I\subseteq A$ together with \Cref{obs:chain_case_bound} implies $I_q \cap B \subseteq I\cap B$. In total,
\begin{align*}
    (I_p \cap B) \setminus (I_q \cup N(a_{\max})) &\subseteq (I \cap B) \setminus (I_q \cup N(I_q \cap A)) \subseteq ((I \setminus I_q) \cap B) \setminus N((I_q \setminus I) \cap A)\\ &= ((I \bigtriangleup I_q) \cap B) \setminus N((I_q \bigtriangleup I) \cap A) =\emptyset.
\end{align*}
The last equality must hold, since $I_q \bigtriangleup I$ needs to be connected. 
Thus, we have $(I_p \cap B) \setminus (I_q \cup N(a_{\max}))= \emptyset$: implying $(I_p\cap B)\setminus N(a_{\max})\subseteq I_q$, and thus $ (I_q\cup B)\setminus N(a_{\max})$.
By symmetry, the reverse direction also holds; we have $(I_s \cap B) \setminus N(a_{\max}) = (I_t \cap B) \setminus N(a_{\max})$, completing the proof.} 
\end{proof}
\fi

The constraint $(I_s \cap B) \setminus N(a_{\max}) = (I_t \cap B) \setminus N(a_{\max})$ can be checked in linear time. If  $|I_s\bigtriangleup I|=1$, we have to go through all vertices and check if they could be the vertex in $I_s\bigtriangleup I$, (resp. $I_t\bigtriangleup I$). These are $n$ possibilities for which we have to check if the graph $G[I_t\bigtriangleup I]$ (resp. $G[I_s\bigtriangleup I]$) is connected.  
This implies the following Theorem.

\begin{theorem}
    \prb{SymISR} can be solved in $O(n^2)$ time on bipartite chain graphs.
\end{theorem}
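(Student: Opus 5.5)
The algorithm is a case split on the optimal length $k^\ast$. By the construction preceding \Cref{lem:chain}, every instance on a connected bipartite chain graph satisfies $k^\ast \le 3$, via the sequence $\langle I_s,S,T,I_t\rangle$. A disconnected instance is handled by treating each connected component separately. Isolated vertices form chain components trivially, and the components are nested, so at most one nontrivial component exists. I would still verify that combining components only requires taking the maximum of the component lengths: flips in different components can be merged only if they are in a single connected set, which they cannot be. So flips in distinct components are independent, and the optimal length is the maximum of the per-component optimal lengths (component lengths can be padded by repeating flips, or rather, a shorter component schedule is simply idle in some steps). The plan therefore reduces to deciding, within one connected component, whether the answer is $0$, $1$, $2$, or otherwise $3$.

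First, I decide $k^\ast\le 1$ by checking whether $I_s=I_t$ or whether $G[I_s\bigtriangleup I_t]$ is connected. By \Cref{obs:length1} this takes linear time.

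Next, I decide $k^\ast=2$ using \Cref{lem:chain}, in two parts. Condition 2 requires computing $a_{\max}$ and $b_{\max}$ under the orderings. The orderings themselves come from sorting vertices by degree, since in a chain graph neighborhoods are nested, so inclusion order coincides with degree order; this takes $O(n+m)$ time with bucket sort. With the orderings in hand, condition 2 is a comparison of two sets after removing $N(a_{\max})$, which is linear. Condition 1 requires enumerating each of the $n$ candidate sets $I=I_s\bigtriangleup\{v\}$, and symmetrically with $I_t$. For each candidate I check that $I$ is independent and that $G[I\bigtriangleup I_t]$ is connected. Each check is a BFS costing $O(n+m)$, so condition 1 costs $O(n(n+m))$ in total.

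This is the main obstacle for the stated $O(n^2)$ bound: $m$ may be $\Theta(n^2)$, which would make the naive enumeration cubic. I would try to exploit chain structure. Connectivity of an induced subgraph $G[X]$ of a connected chain graph has a simple form: $G[X]$ is connected if and only if either $|X|=1$, or $X$ meets both classes and the maximum $A$-vertex of $X$ is adjacent to every $B$-vertex of $X$ and vice versa. The reason is that the maximum elements dominate the neighborhoods of their class. This characterization lets each candidate be tested in $O(n)$ time using prefix information on the orderings, without a BFS. As a result, condition 1 costs $O(n^2)$ overall.

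Finally, if neither the $k^\ast\le1$ test nor the $k^\ast=2$ test succeeds, the answer is $3$ by the explicit sequence. The answer to the decision problem is then a comparison of $k^\ast$ with $k$.
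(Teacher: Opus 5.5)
Your handling of a connected instance follows the paper's proof. You decide $k^\ast\le 1$ via \Cref{obs:length1}, decide $k^\ast=2$ via the two conditions of \Cref{lem:chain}, and otherwise answer $3$ via $\langle I_s,S,T,I_t\rangle$. Your characterization of connectivity is correct: for $|X|\ge 2$, $G[X]$ is connected iff $X$ meets both classes, $X\cap B\subseteq N(\max X\cap A)$, and $X\cap A\subseteq N(\max X\cap B)$. It is also a useful addition. The paper only says that $n$ candidate sets must be tested for connectivity, and with BFS that costs $O(n(n+m))$, not $O(n^2)$. Your test is what actually justifies the stated bound.

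The step that fails is your rule for combining components. Every step flips a set $X=I_{i-1}\bigtriangleup I_i$ with $G[X]$ connected, so each step changes exactly one component. Two components can never be processed in the same step, so there is no ``idling in parallel.'' Restricting any sequence to a component $H$, and deleting the steps that do not touch $H$, gives a valid sequence on $H$. Concatenating per-component sequences is valid because there are no edges between components. Hence the optimum is the \emph{sum} of the per-component optima, not the maximum. Your own remark that flips in different components cannot be merged already points to this.

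Concretely, take the chain graph with one edge $ab$ and an isolated vertex $c$, with $I_s=\{a\}$ and $I_t=\{b,c\}$. Each component needs one step, so your rule outputs $1$. But $G[I_s\bigtriangleup I_t]$ is disconnected, so the optimum is $2$. The fix is easy: take the unique nontrivial component $H$ (your observation that there is at most one is correct). The answer is the optimum on $H$, computed as you describe, plus the number of isolated vertices in $I_s\bigtriangleup I_t$. The paper itself treats only connected chain graphs, so this case is your addition; as written, though, it gives wrong answers on disconnected inputs.
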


\section{When the Optimum meets the Trivial Upper Bound}

One might expect that sequentially flipping the connected components of $G[I_s\bigtriangleup I_t]$ yields a good approximation ratio.
While this algorithm results in a worst-case $\Omega(n)$ approximation (for instance the reduction in \Cref{sec:hard}), the length of a shortest reconfiguration sequence coincides with this upper bound when we restrict the classes to paths and cycles.
This is formalized in the following theorem.
\begin{theorem}\label{thm:paths}
    For paths and cycles, the minimum length of a reconfiguration sequence from $I_s$ to $I_t$ is exactly $cc(I_s \bigtriangleup I_t)$.
\end{theorem}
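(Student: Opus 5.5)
The plan is to prove a matching lower bound; the upper bound $\numcomp(I_s\bigtriangleup I_t)$ is already given by the observation on the trivial upper bound. I will use the potential $\Phi(I)=\numcomp(I\bigtriangleup I_t)$, which equals $\numcomp(I_s\bigtriangleup I_t)$ at $I_s$ and $0$ at $I_t$. It suffices to show that one flip decreases $\Phi$ by at most one.

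Write $D=I\bigtriangleup I_t$. Flipping a connected set $X$ gives $J=I\bigtriangleup X$, hence $J\bigtriangleup I_t=D\bigtriangleup X$. So the key claim is that $\numcomp(D\bigtriangleup X)\ge \numcomp(D)-1$ for every vertex set $D$ and every connected $X$ in a path or a cycle. The claim does not use independence at all, which keeps it simple.

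\emph{Paths.} Let the path be $v_1,\dots,v_n$, and pad it with virtual vertices $v_0,v_{n+1}$ that never belong to $D$ or $X$. For a set $S$, call an edge $v_{i-1}v_i$ with $i\in[n+1]$ a \emph{boundary edge} of $S$ if exactly one endpoint lies in $S$. Each component of $S$ is an interval, and it contributes exactly two boundary edges. Hence $\numcomp(S)$ equals half the number of boundary edges of $S$.

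The indicator vector of $D\bigtriangleup X$ is the XOR of the indicators of $D$ and $X$. So the boundary status of an edge changes exactly when that edge is a boundary edge of $X$. Since $X$ is a nonempty interval, it has exactly two boundary edges. The number of boundary edges therefore changes by at most $2$, and $\numcomp$ changes by at most $1$.

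\emph{Cycles.} The same parity argument applies with a small correction. For $S\neq V$, the number of components is half the number of boundary edges, since the empty set has both counts equal to zero. For $S=V$, the set has one component but no boundary edges.

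If $X$ is a proper arc, it has exactly two boundary edges. Then the boundary-edge count changes by at most $2$. I will check that the two exceptional situations, $D=V$ or $D\bigtriangleup X=V$, still lose at most one component. For example, if $D=V$ then $\numcomp(D)=1$, so it cannot drop by more than $1$.

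If $X=V$, then $D\bigtriangleup X=V\setminus D$. On a cycle, the complement of a set has the same number of components whenever both the set and its complement are nonempty. The degenerate cases $D=\emptyset$ and $D=V$ have $\numcomp(D)\le 1$, so they lose at most one component.

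\emph{Conclusion.} A reconfiguration sequence of length $\ell$ decreases $\Phi$ from $\numcomp(I_s\bigtriangleup I_t)$ to $0$ by at most one per step. Hence $\ell\ge \numcomp(I_s\bigtriangleup I_t)$. Together with the trivial upper bound, this gives the exact value, and it can be computed in linear time.

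The main obstacle I expect is the cycle bookkeeping. The case analysis for $X=V$ or $D\in\{\emptyset,V\}$ must be exhaustive, since the clean identity "components $=$ boundary edges$/2$" fails exactly there.
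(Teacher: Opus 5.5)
Your proof is correct, and its architecture matches the paper's. Both use the potential $\numcomp(I\bigtriangleup I_t)$, show that a single flip lowers it by at most one, and combine this with the trivial upper bound.

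The difference is in how you prove $\numcomp(D\bigtriangleup X)\ge\numcomp(D)-1$. The paper splits the flipped set into $V_1=X\cap D$ and $V_2=X\setminus D$. It uses $\numcomp(V_1)\le\numcomp(V_2)+1$, since the runs alternate along $X$. It then asserts the identity $\numcomp(D\bigtriangleup X)=\numcomp(D)-\numcomp(V_1)+\numcomp(V_2)$.

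That identity ignores what happens at the two ends of $X$. There a run of $V_1$ may be only part of a larger component of $D$, and a run of $V_2$ may attach to components of $D$ outside $X$. For example, take the path $v_1v_2v_3$ with $I=\emptyset$, $I'=\{v_2\}$, $I_t=\{v_1,v_3\}$. The right-hand side of the identity is $3$, but $\numcomp(I'\bigtriangleup I_t)=1$. The desired inequality still holds, but not through that equation.

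Your boundary-edge count handles these end effects automatically: components are half the boundary edges, and XOR with an arc toggles exactly two edges. It also makes clear that independence plays no role. The cost is the explicit cycle bookkeeping for $X=V$ and for $D$ or $D\bigtriangleup X$ equal to $V$, cases the paper's phrase ``along the path or cycle'' does not single out. Your case list is complete. The one case you left as ``I will check'' is immediate: if $D\bigtriangleup X=V$ with $X$ a proper arc, then $D=V\setminus X$ is a single arc, so no component is lost.
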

\iftrue
\begin{proof}
\revrev{We first show that, for every independent set $I$, if $I'$ is an independent set such that $G[I\bigtriangleup I']$ is connected, then
$c(I'\bigtriangleup I_t)\ge c(I\bigtriangleup I_t)-1$.}

\revrev{Fix such an independent set $I'$.
Let $V'=I\bigtriangleup I'$, and partition $V'$ into $V_1=V'\cap (I\bigtriangleup I_t)$ and $V_2=V'\setminus (I\bigtriangleup I_t)$.
Since $G[V']$ is connected and $G$ is a path or cycle, any two distinct connected components of $G[V_1]$ are separated by a connected component of $G[V_2]$ along the path or cycle.
Therefore, $c(V_1)\le c(V_2)+1$.}

\revrev{We now evaluate $c(I'\bigtriangleup I_t)$ by considering the vertices in $V_1$ and $V_2$ separately.
On $V_1$, the vertices of $I\bigtriangleup I_t$ disappear when we pass to $I'\bigtriangleup I_t$, since
$I' \cap V_1 = I_t \cap V_1$.
Thus, the number of connected components decreases by $c(V_1)$.
On $V_2$, the sets $I'$ and $I_t$ are disjoint, that is,
$I'\cap I_t\cap V_2=\emptyset$.
Hence, the number of connected components increases by $c(V_2)$.
Thus, we have $c(I'\bigtriangleup I_t)=c(I\bigtriangleup I_t)-c(V_1)+c(V_2)$.
Combining with $c(V_1) \leq c(V_2)+1$, we have $c(I'\bigtriangleup I_t)\ge c(I\bigtriangleup I_t)-1$.
Since $I'$ was arbitrary, this proves the above statement.}

\revrev{Therefore, along any reconfiguration sequence from $I_s$ to $I_t$, each flip decreases the value of
$c(I\bigtriangleup I_t)$ by at most one, where $I$ denotes the current independent set.
Since $c(I_t\bigtriangleup I_t)=0$, every reconfiguration sequence from $I_s$ to $I_t$ has length at least $c(I_s\bigtriangleup I_t)$.
On the other hand, by the discussion in \Cref{sec:prbdef}, there exists a reconfiguration sequence of length exactly $c(I_s\bigtriangleup I_t)$.
This completes the proof of \Cref{thm:paths}.}
\end{proof}
\fi
As an immediate consequence of \Cref{thm:paths}, we obtain the following corollary.
\begin{corollary}
    \prb{SymISR} can be solved in $O(n)$ time for paths and cycles.
\end{corollary}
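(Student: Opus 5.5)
The statement is the corollary that \prb{SymISR} can be solved in $O(n)$ time on paths and cycles. The plan is to reduce it directly to \Cref{thm:paths}, so that the whole algorithm amounts to counting connected components of one induced subgraph.

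\textbf{Algorithm.} Given $(G,I_s,I_t,k)$ with $G$ a path or a cycle, we do the following.
\begin{enumerate}
    \item Compute $I_s\bigtriangleup I_t$. Store $I_s$ and $I_t$ as Boolean arrays indexed by $V$; one pass then marks each vertex as belonging to the symmetric difference or not, in $O(n)$ time.
    \item Compute $\numcomp(I_s\bigtriangleup I_t)$ by a BFS/DFS restricted to marked vertices. This takes $O(n+m)=O(n)$ time, because $m\le n$ for paths and cycles.
    \item Answer yes if and only if $\numcomp(I_s\bigtriangleup I_t)\le k$.
\end{enumerate}
A simpler alternative for step 2 is to walk along the path or cycle and count maximal runs of marked vertices. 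For a cycle we must handle wrap-around: if every vertex is marked, there is exactly one component; otherwise, start the walk at an unmarked vertex.

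\textbf{Correctness.} By \Cref{thm:paths}, the minimum length of a reconfiguration sequence equals $\numcomp(I_s\bigtriangleup I_t)$. Hence a sequence of length at most $k$ exists exactly when this number is at most $k$.

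If the search version is also wanted, the explicit sequence from the trivial-upper-bound observation in \Cref{sec:prbdef} can be output: flip the components one at a time. Its total description size is $O(n)$, since each vertex is flipped at most once, so it can be written out in linear time.

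There is no real obstacle here. The only point needing care is the cycle wrap-around case in the run-counting variant, and the generic BFS approach avoids it entirely.
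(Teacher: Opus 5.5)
Your proposal is correct and matches the paper, which states this corollary as an immediate consequence of \Cref{thm:paths}: compute $\numcomp(I_s\bigtriangleup I_t)$ in linear time (since $m\le n$) and compare it with $k$. Your remarks on the cycle wrap-around case and on outputting the component-by-component flip sequence (as a list of flipped sets of total size $O(n)$) are correct details that the paper leaves implicit.
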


\subparagraph*{Remark.}
We remark that our analysis for paths and cycles is tight in some sense.
Specifically, we now show that sequentially flipping the connected components of $G[I_s\bigtriangleup I_t]$ yields $\Omega(n)$ approximation, for (1) a caterpillar with maximum degree 3, and (2) a graph with maximum degree $3$, and contains only one cycle.
To this end, we show that there is at least one instance for both cases such that $c(I_s\bigtriangleup I_t)$ is $\Theta(n)$, whereas the optimal length is $2$.

(1). Let $G$ be the graph obtained from a path $P=(p_1,p_2,...,p_{2t})$ with $2t$ vertices by adding $t$ vertices $L=\{\ell_i\mid i\in [t]]\}$ and $t$ edges $\{\ell_ip_{2i}\mid i\in [t]\}$.
We set $I_s=\{p_{2i-1}\mid i\in [t]\}$, and $I_t=I_s\cup L$.
Then, the number $c(I_s\bigtriangleup I_t)$ is exactly $c(L)$, thus is equal to $t$.
Despite this, we now show that the optimal length is $2$.
If we let $I'=\{ {p_{2i}\mid i\in [t]} \}$, then both $G[I_s\bigtriangleup I']$ and $G[I'\bigtriangleup I_t]$ are connected.

(2). We add an edge $p_{1}p_{2t}$ to $G$, and set $I_s$ and $I_t$ in the same way as above.
The discussion is also the same as above; thus, it is clear that this results in $\Omega(n)$ approximation.




\section{Conclusion and Future Work}
We study the shortest reconfiguration problem for independent sets under the adjacency relation derived by the independent set polytope; that is, each reconfiguration step corresponds to traversing an edge of the polytope.
Our results establish hardness for both sparse and dense graph classes, and tractability for some subclasses of perfect graphs.
Moreover, for paths and cycles, we prove that the optimal length of the reconfiguration sequence meets the trivial upper bound, derived from the definition of the adjacency relation.

Several directions remain for future work.
First, it is natural to ask about the complexity of \prb{SymISR} on bounded-treewidth graphs.
Second, our inapproximability results still leave a gap to the trivial upper bound, and it would be interesting to narrow this gap.
Third, it is worth investigating whether our algorithmic result can be extended to broader graph classes, such as bipartite graphs and distance-hereditary graphs.

\subparagraph*{Acknowledgments}
\rev{This work was supported by Institute of Mathematics for Industry, Joint
Usage/Research Center in Kyushu University (FY2024 Workshop (II) ``Theory of
Combinatorial Reconfiguration and Beyond'' (2024a037)).
We would like to thank the organisers and participants for the inspiring atmosphere.}
\newpage

\bibliographystyle{plain}
\bibliography{reference}

@article{planarSAT/Pilz19,
  author       = {Alexander Pilz},
  title        = {Planar {3-SAT} with a Clause/Variable Cycle},
  journal      = {Discret. Math. Theor. Comput. Sci.},
  volume       = {21},
  number       = {3},
  year         = {2019},
  url          = {https://doi.org/10.23638/DMTCS-21-3-18},
  doi          = {10.23638/DMTCS-21-3-18},
  bibsource    = {dblp computer science bibliography, https://dblp.org}
}

@article{reconf/ItoDHPSUU11,
  author       = {Takehiro Ito and
                  Erik D. Demaine and
                  Nicholas J. A. Harvey and
                  Christos H. Papadimitriou and
                  Martha Sideri and
                  Ryuhei Uehara and
                  Yushi Uno},
  title        = {On the complexity of reconfiguration problems},
  journal      = {Theor. Comput. Sci.},
  volume       = {412},
  number       = {12-14},
  pages        = {1054--1065},
  year         = {2011},
  url          = {https://doi.org/10.1016/j.tcs.2010.12.005},
  doi          = {10.1016/J.TCS.2010.12.005},
  bibsource    = {dblp computer science bibliography, https://dblp.org}
}

@article{kTJ/SugaSTZ25,
  author       = {Tatsuhiro Suga and
                  Akira Suzuki and
                  Yuma Tamura and
                  Xiao Zhou},
  title        = {Changing induced subgraph isomorphisms under extended reconfiguration
                  rules},
  journal      = {Inf. Comput.},
  volume       = {307},
  pages        = {105367},
  year         = {2025},
  url          = {https://doi.org/10.1016/j.ic.2025.105367},
  doi          = {10.1016/J.IC.2025.105367},
  bibsource    = {dblp computer science bibliography, https://dblp.org}
}

@inproceedings{k1TJ/KristanS25,
  author       = {Jan Maty{\'{a}}s Kristan and
                  Jakub Svoboda},
  title        = {Reconfiguration Using Generalized Token Jumping},
  booktitle    = {19th International Conference
                  and Workshops on Algorithms and Computation},
  series       = {Lecture Notes in Computer Science},
  pages        = {244--265},
  publisher    = {Springer},
  year         = {2025},
  url          = {https://doi.org/10.1007/978-981-96-2845-2\_16},
  doi          = {10.1007/978-981-96-2845-2\_16},
  bibsource    = {dblp computer science bibliography, https://dblp.org}
}

@incollection{reconfsurvey/Heuvel13,
  author       = {Jan van den Heuvel},
  editor       = {Simon R. Blackburn and
                  Stefanie Gerke and
                  Mark Wildon},
  title        = {The complexity of change},
  booktitle    = {Surveys in Combinatorics 2013},
  series       = {London Mathematical Society Lecture Note Series},
  pages        = {127--160},
  publisher    = {Cambridge University Press},
  year         = {2013},
  url          = {https://doi.org/10.1017/CBO9781139506748.005},
  doi          = {10.1017/CBO9781139506748.005},
  bibsource    = {dblp computer science bibliography, https://dblp.org}
}

@article{reconfsurvey/BousquetMNS24,
  author       = {Nicolas Bousquet and
                  Amer E. Mouawad and
                  Naomi Nishimura and
                  Sebastian Siebertz},
  title        = {A survey on the parameterized complexity of reconfiguration problems},
  journal      = {Comput. Sci. Rev.},
  volume       = {53},
  pages        = {100663},
  year         = {2024},
  url          = {https://doi.org/10.1016/j.cosrev.2024.100663},
  doi          = {10.1016/J.COSREV.2024.100663},
  bibsource    = {dblp computer science bibliography, https://dblp.org}
}

@inproceedings{MAS/SternSFK0WLA0KB19,
  author       = {Roni Stern and
                  Nathan R. Sturtevant and
                  Ariel Felner and
                  Sven Koenig and
                  Hang Ma and
                  Thayne T. Walker and
                  Jiaoyang Li and
                  Dor Atzmon and
                  Liron Cohen and
                  T. K. Satish Kumar and
                  Roman Bart{\'{a}}k and
                  Eli Boyarski},
  title        = {Multi-Agent Pathfinding: Definitions, Variants, and Benchmarks},
  booktitle    = {Twelfth International Symposium on Combinatorial
                  Search ({SOCS} 2019)},
  pages        = {151--158},
  publisher    = {{AAAI} Press},
  year         = {2019},
  url          = {https://doi.org/10.1609/socs.v10i1.18510},
  doi          = {10.1609/SOCS.V10I1.18510},
  bibsource    = {dblp computer science bibliography, https://dblp.org}
}

@article{ISpolytope/CHVATAL1975138,
title = {On certain polytopes associated with graphs},
journal = {Journal of Combinatorial Theory, Series B},
volume = {18},
number = {2},
pages = {138-154},
year = {1975},
issn = {0095-8956},
doi = {https://doi.org/10.1016/0095-8956(75)90041-6},
url = {https://www.sciencedirect.com/science/article/pii/0095895675900416},
author = {V. Chvátal}
}

@article{reconf/HearnD05,
  author       = {Robert A. Hearn and
                  Erik D. Demaine},
  title        = {PSPACE-completeness of sliding-block puzzles and other problems through
                  the nondeterministic constraint logic model of computation},
  journal      = {Theor. Comput. Sci.},
  volume       = {343},
  number       = {1-2},
  pages        = {72--96},
  year         = {2005},
  url          = {https://doi.org/10.1016/j.tcs.2005.05.008},
  doi          = {10.1016/J.TCS.2005.05.008},
  bibsource    = {dblp computer science bibliography, https://dblp.org}
}

@article{ISR/Wrochna18,
  author       = {Marcin Wrochna},
  title        = {Reconfiguration in bounded bandwidth and tree-depth},
  journal      = {J. Comput. Syst. Sci.},
  volume       = {93},
  pages        = {1--10},
  year         = {2018},
  url          = {https://doi.org/10.1016/j.jcss.2017.11.003},
  doi          = {10.1016/J.JCSS.2017.11.003},
  bibsource    = {dblp computer science bibliography, https://dblp.org}
}

@article{ISRtree/DemaineDFHIOOUY15,
  author       = {Erik D. Demaine and
                  Martin L. Demaine and
                  Eli Fox{-}Epstein and
                  Duc A. Hoang and
                  Takehiro Ito and
                  Hirotaka Ono and
                  Yota Otachi and
                  Ryuhei Uehara and
                  Takeshi Yamada},
  title        = {Linear-time algorithm for sliding tokens on trees},
  journal      = {Theor. Comput. Sci.},
  volume       = {600},
  pages        = {132--142},
  year         = {2015},
  url          = {https://doi.org/10.1016/j.tcs.2015.07.037},
  doi          = {10.1016/J.TCS.2015.07.037},
  bibsource    = {dblp computer science bibliography, https://dblp.org}
}

@article{TSsplit/BelmonteKLMOS21,
  author       = {R{\'{e}}my Belmonte and
                  Eun Jung Kim and
                  Michael Lampis and
                  Valia Mitsou and
                  Yota Otachi and
                  Florian Sikora},
  title        = {Token Sliding on Split Graphs},
  journal      = {Theory Comput. Syst.},
  volume       = {65},
  number       = {4},
  pages        = {662--686},
  year         = {2021},
  url          = {https://doi.org/10.1007/s00224-020-09967-8},
  doi          = {10.1007/S00224-020-09967-8},
  bibsource    = {dblp computer science bibliography, https://dblp.org}
}

@article{ISRbipartite/LokshtanovM19,
  author       = {Daniel Lokshtanov and
                  Amer E. Mouawad},
  title        = {The Complexity of Independent Set Reconfiguration on Bipartite Graphs},
  journal      = {{ACM} Trans. Algorithms},
  volume       = {15},
  number       = {1},
  pages        = {7:1--7:19},
  year         = {2019},
  url          = {https://doi.org/10.1145/3280825},
  doi          = {10.1145/3280825},
  bibsource    = {dblp computer science bibliography, https://dblp.org}
}

@inproceedings{shortISR/BodlaenderGS21,
  author       = {Hans L. Bodlaender and
                  Carla Groenland and
                  C{\'{e}}line M. F. Swennenhuis},
  title        = {Parameterized Complexities of Dominating and Independent Set Reconfiguration},
  booktitle    = {16th International Symposium on Parameterized and Exact Computation},
  series       = {LIPIcs},
  pages        = {9:1--9:16},
  publisher    = {Schloss Dagstuhl},
  year         = {2021},
  url          = {https://doi.org/10.4230/LIPIcs.IPEC.2021.9},
  doi          = {10.4230/LIPICS.IPEC.2021.9},
  bibsource    = {dblp computer science bibliography, https://dblp.org}
}

@article{hopISR/HatanoKIIM26,
  author       = {Hiroki Hatano and
                  Naoki Kitamura and
                  Taisuke Izumi and
                  Takehiro Ito and
                  Toshimitsu Masuzawa},
  title        = {Independent set reconfiguration under bounded-hop token jumping},
  journal      = {Theor. Comput. Sci.},
  volume       = {1062},
  pages        = {115651},
  year         = {2026},
  url          = {https://doi.org/10.1016/j.tcs.2025.115651},
  doi          = {10.1016/J.TCS.2025.115651},
  bibsource    = {dblp computer science bibliography, https://dblp.org}
}

@book{CyganFKLMPPS15,
  author       = {Marek Cygan and
                  Fedor V. Fomin and
                  Lukasz Kowalik and
                  Daniel Lokshtanov and
                  D{\'{a}}niel Marx and
                  Marcin Pilipczuk and
                  Michal Pilipczuk and
                  Saket Saurabh},
  title        = {Parameterized Algorithms},
  publisher    = {Springer},
  year         = {2015},
  doi          = {10.1007/978-3-319-21275-3},
}

@article{PMR/CardinalS25,
  author       = {Jean Cardinal and
                  Raphael Steiner},
  title        = {Inapproximability of shortest paths on perfect matching polytopes},
  journal      = {Math. Program.},
  volume       = {210},
  number       = {1},
  pages        = {147--163},
  year         = {2025},
  url          = {https://doi.org/10.1007/s10107-023-02025-4},
  doi          = {10.1007/S10107-023-02025-4},
  bibsource    = {dblp computer science bibliography, https://dblp.org}
}

@article{PMR/ItoKKKO22,
  author       = {Takehiro Ito and
                  Naonori Kakimura and
                  Naoyuki Kamiyama and
                  Yusuke Kobayashi and
                  Yoshio Okamoto},
  title        = {Shortest Reconfiguration of Perfect Matchings via Alternating Cycles},
  journal      = {{SIAM} J. Discret. Math.},
  volume       = {36},
  number       = {2},
  pages        = {1102--1123},
  year         = {2022},
  url          = {https://doi.org/10.1137/20m1364370},
  doi          = {10.1137/20M1364370},
  bibsource    = {dblp computer science bibliography, https://dblp.org}
}

@inproceedings{LokshtanovMPRS13,
  author       = {Daniel Lokshtanov and
                  Neeldhara Misra and
                  Geevarghese Philip and
                  M. S. Ramanujan and
                  Saket Saurabh},
  title        = {Hardness of $r$-dominating set on Graphs of Diameter $(r + 1)$},
  booktitle    = {Parameterized and Exact Computation - 8th International Symposium},
  series       = {Lecture Notes in Computer Science},
  volume       = {8246},
  pages        = {255--267},
  publisher    = {Springer},
  year         = {2013},
  doi          = {10.1007/978-3-319-03898-8\_22},
}

@inproceedings{inapprox/DinurS14,
  author       = {Irit Dinur and
                  David Steurer},
  editor       = {David B. Shmoys},
  title        = {Analytical approach to parallel repetition},
  booktitle    = {Symposium on Theory of Computing, {STOC} 2014},
  pages        = {624--633},
  publisher    = {{ACM}},
  year         = {2014},
  url          = {https://doi.org/10.1145/2591796.2591884},
  doi          = {10.1145/2591796.2591884},
  bibsource    = {dblp computer science bibliography, https://dblp.org}
}

@article{Pivotrule/KM72,
  title={How good is the simplex algorithm?},
  author={Klee, Victor and Minty, George J},
  journal={Inequalities},
  volume={3},
  number={3},
  pages={159--175},
  year={1972}
}

@article{Pivotrule/LoeraKS22,
  author       = {Jes{\'{u}}s A. De Loera and
                  Sean Kafer and
                  Laura Sanit{\`{a}}},
  title        = {Pivot Rules for Circuit-Augmentation Algorithms in Linear Optimization},
  journal      = {{SIAM} J. Optim.},
  volume       = {32},
  number       = {3},
  pages        = {2156--2179},
  year         = {2022},
  url          = {https://doi.org/10.1137/21m1419994},
  doi          = {10.1137/21M1419994},
  bibsource    = {dblp computer science bibliography, https://dblp.org}
}

@article{polymatroids/CardinalS25,
  author       = {Jean Cardinal and
                  Raphael Steiner},
  title        = {Shortest paths on polymatroids and hypergraphic polytopes},
  journal      = {Comb. Theory},
  volume       = {5},
  number       = {3},
  year         = {2025},
  url          = {https://doi.org/10.5070/c65365551},
  doi          = {10.5070/C65365551},
  bibsource    = {dblp computer science bibliography, https://dblp.org}
}

@article{polytopes/cunha2026,
      title={Computing distances is {FPT} on graph associahedra and {W[2]-hard} on hypergraphic polytopes}, 
      author={Luís Felipe I. Cunha and Ignasi Sau and Uéverton S. Souza and Mario Valencia-Pabon},
      year={2026},
      eprint={2504.18338},
      journal= {CoRR},
      volume= {abs/2603.05482},
}

@article{polytopes/BlackS2026,
  author       = {Alexander E. Black and
                  Raphael Steiner},
  title        = {Finding Short Paths on Simple Polytopes},
  journal      = {CoRR},
  volume       = {abs/2603.05482},
  year         = {2026},
  url          = {https://doi.org/10.48550/arXiv.2603.05482},
  doi          = {10.48550/ARXIV.2603.05482},
  eprinttype   = {arXiv},
  eprint       = {2603.05482},
  bibsource    = {dblp computer science bibliography, https://dblp.org}
}

@article{Cograph:Corneil81,
  author       = {Derek G. Corneil and
                  H. Lerchs and
                  L. Stewart Burlingham},
  title        = {Complement reducible graphs},
  journal      = {Discret. Appl. Math.},
  volume       = {3},
  number       = {3},
  pages        = {163--174},
  year         = {1981},
  url          = {https://doi.org/10.1016/0166-218X(81)90013-5},
  doi          = {10.1016/0166-218X(81)90013-5},
  bibsource    = {dblp computer science bibliography, https://dblp.org}
}

@phdthesis{kann1992approximability,
  title={On the approximability of NP-complete optimization problems},
  author={Kann, Viggo},
  year={1992},
  school={Royal Institute of Technology Stockholm},
  type         = {PhD thesis}
}

@book{graphclasses,
author = {Brandstädt, Andreas and Le, Van Bang and Spinrad, Jeremy P.},
title = {Graph Classes: A Survey},
publisher = {Society for Industrial and Applied Mathematics},
year = {1999},
doi = {10.1137/1.9780898719796},
address = {},
edition   = {},
URL = {https://epubs.siam.org/doi/abs/10.1137/1.9780898719796},
eprint = {https://epubs.siam.org/doi/pdf/10.1137/1.9780898719796}
}

@article{polytope:Naddef89,
  author       = {Denis Naddef},
  title        = {The hirsch conjecture is true for (0, 1)-polytopes},
  journal      = {Math. Program.},
  volume       = {45},
  number       = {1-3},
  pages        = {109--110},
  year         = {1989},
  url          = {https://doi.org/10.1007/BF01589099},
  doi          = {10.1007/BF01589099},
  bibsource    = {dblp computer science bibliography, https://dblp.org}
}

@article{polytope:BlackLKS25,
  author       = {Alexander E. Black and
                  Jes{\'{u}}s A. De Loera and
                  Sean Kafer and
                  Laura Sanit{\`{a}}},
  title        = {On the Simplex Method for 0/1-Polytopes},
  journal      = {Math. Oper. Res.},
  volume       = {50},
  number       = {2},
  pages        = {1398--1420},
  year         = {2025},
  url          = {https://doi.org/10.1287/moor.2021.0345},
  doi          = {10.1287/MOOR.2021.0345},
  bibsource    = {dblp computer science bibliography, https://dblp.org}
}


\end{document}